\documentclass[draftclsnofoot,onecolumn]{IEEEtran}

\usepackage{amsmath,amssymb,amsthm}
\usepackage{stfloats}

\newtheorem{lemma}{Lemma}
\newtheorem{example}{Example}
\newtheorem{definition}{Definition}

\begin{document}

\title{Constructing MSR codes with subpacketization $2^{n/3}$ for $k+1$ helper nodes}

\author{Ningning Wang, Guodong Li, Sihuang Hu,~\IEEEmembership{Member,~IEEE,} and Min Ye
\thanks{Research partially funded by National Key R\&D Program of China under Grant No. 2021YFA1001000, National Natural Science Foundation of China under Grant No. 12001322 and 12231014, Shandong Provincial Natural Science Foundation under Grant No. ZR202010220025, Shenzhen Stable Support program under Grant No. WDZC20220811170401001, RISC-V International Open Source Laboratory, and a Taishan scholar program of Shandong Province. This paper was presented in part at the 2022 IEEE International Symposium on Information Theory~\cite{LWHY22}.}
\thanks{Ningning Wang, Guodong Li and Sihuang Hu are with  Key Laboratory of Cryptologic Technology and Information Security, Ministry of Education, Shandong University, Qingdao, Shandong, 266237, China and School of Cyber Science and Technology, Shandong University, Qingdao, Shandong, 266237, China. S. Hu is also with Quancheng Laboratory, Jinan 250103, China. Email: \{nningwang,guodongli\}@mail.sdu.edu.cn, husihuang@sdu.edu.cn}
\thanks{Min Ye is with Tsinghua-Berkeley Shenzhen Institute, Tsinghua Shenzhen International Graduate School, Shenzhen 518055, China. Email: yeemmi@gmail.com}}


\maketitle

\begin{abstract}
Wang et al. (IEEE Transactions on Information Theory, vol. 62, no. 8, 2016) proposed an explicit construction of an $(n=k+2,k)$ Minimum Storage Regenerating (MSR) code with $2$ parity nodes and subpacketization $2^{k/3}$. The number of helper nodes for this code is $d=k+1=n-1$, and this code has the smallest subpacketization among all the existing explicit constructions of MSR codes with the same $n,k$ and $d$. In this paper, we present a new construction of MSR codes for a wider range of parameters. More precisely, we still fix $d=k+1$, but we allow the code length $n$ to be any integer satisfying $n\ge k+2$. The field size of our code is linear in $n$, and the subpacketization of our code is $2^{n/3}$. This value is slightly larger than the subpacketization of the construction by Wang et al. because their code construction only guarantees optimal repair for all the systematic nodes while our code construction guarantees optimal repair for all nodes.
\end{abstract}

\begin{IEEEkeywords}
	distributed storage, repair bandwidth, MDS array codes, minimum storage regenerating (MSR) codes, sub-packetization.
\end{IEEEkeywords}

\section{Introduction}\label{sect:intro}
\IEEEPARstart{D}{imakis} et al. initiated the study of repairing MDS codes in distributed storage systems \cite{Dimakis10}, where the main objective is to construct MDS codes that can efficiently repair any single node failure. In a distributed storage system, an $(n,k)$ MDS array code consists of $k$ information nodes and $r=n-k$ parity nodes. Each node of the array code is a length-$\ell$ vector over some finite field $\mathbb{F}$, where the parameter $\ell$ is called the subpacketization or the node size of the array code. When there is a single node failure, one may connect to $d$ helper nodes and download part of the information stored on these nodes to recover the failed node, where the value of $d$ satisfies $k\le d\le n-1$. The amount of downloaded data in the repair procedure is called the repair bandwidth. For an $(n,k)$ MDS array code with subpacketization $\ell$, it was shown in \cite{Dimakis10} that the repair bandwidth is at least $\frac{d\ell}{d-k+1}$ when the repair procedure involves $d$ helper nodes. This lower bound is called the cut-set bound. An MDS code that achieves the optimal repair bandwidth for the repair of any single failed node from any $d$ helper nodes is called an Minimum Storage Regenerating (MSR) code.

The first explicit construction of MSR codes is the product matrix construction \cite{Rashmi11}. Yet it is limited to the low code rate regime. After that, explicit constructions of high-rate MSR codes were given in \cite{Tamo13,Wang16,Ye16,Ye16a,Sasid16,Raviv17,Tamo17RS,Li18,Duursma21,Vajha21}. The complete list of constructions and parameters of MSR codes can be find in recent paper~\cite{Kumar18} and \cite{LLT2022}. Among these MSR code constructions, the most relevant one to this paper is the $(n=k+2,k)$ MSR code construction with subpacketization $2^{k/3}$ proposed in \cite{Wang16}. More precisely, Wang et al. \cite{Wang16} proposed a framework of constructing MDS array codes with subpacketization $r^{k/(r+1)}$ that can optimally repair all the systematic nodes from $d=n-1$ helper nodes. For the special case of $r=2$, they further gave explicit constructions of such codes over any finite field whose size is larger than $\frac{2}{3}k$. This construction has the smallest subpacketization among all the existing MSR code constructions with parameters $d=k+1$ and $n=k+2$.

In this paper, we propose a new construction of MSR codes for which we relax the condition $d=n-1$ in the $(n=k+2,k)$ MSR code constructed in \cite{Wang16}. Specifically, the number of helper nodes in \cite{Wang16} is $d=k+1=n-1$. In our new construction, we still fix $d=k+1$, but we allow $n$ to be any integer satisfying $n\ge k+2$. Our code can be constructed over finite fields whose size is linear in $n$, and the subpacketization of our code is $2^{n/3}$. This value is slightly larger than the subpacketization in \cite{Wang16} because their code construction only guarantees optimal repair for all the systematic nodes while our code construction guarantees optimal repair for all nodes.

In the discussion of our new code construction above, we have implicitly assumed that $n$ is divisible by $3$. If $n$ is not a multiple of $3$, then the subpacketization of our new code becomes $2^{\lceil n/3 \rceil}$. Since the code construction in the non-divisible case can be easily obtained by puncturing/removing some nodes from the construction in the divisible case, in this paper we will focus on the case where $n$ is a multiple of $3$.

One main technical obstacle of this paper is to prove the invertibility of the matrix $Q$ in Lemma~\ref{lm:det}.
We explicitly compute the determinant of $Q$ by first treating it as a multivariable polynomial of the entries (seen as variables) of $Q$,
and then extract as many polynomial factors as we can using elementary column operations.
The choice of parameters in our code construction guarantees that all the polynomial factors are nonzero. 
At last by comparing the terms in the expansion of the determinant and the product of these polynomial factors,
we conclude that the determinant is equal to the product of these polynomial factors up to some coefficient $c\in\{-1,1\}$,
and hence nonzero. 


Finally, we note that a recent work \cite{Vajha21} proposed a generalization of the MSR code construction in \cite{Ye16a,Sasid16,Li18}, and their generalization of \cite{Ye16a,Sasid16,Li18} is somewhat similar to the generalization of \cite{Wang16} in this paper. More precisely, both the construction in \cite{Ye16a,Sasid16,Li18} and the construction in \cite{Wang16} require $d=n-1$. Vajha et al. \cite{Vajha21} generalized the construction in \cite{Ye16a,Sasid16,Li18} to the case $d<n-1$ while we generalize the construction in \cite{Wang16} to the case $d<n-1$. However, the construction in \cite{Ye16a,Sasid16,Li18} is very different from the construction in \cite{Wang16}, and the generalization in this paper also uses different techniques from the generalization in \cite{Vajha21}. For $d=k+1$, the subpacketization in \cite{Vajha21} is $2^{n/2}$. This value is larger than the subpacketization $2^{n/3}$ in our paper, but the MSR codes constructed in \cite{Vajha21} guarantee that both the amount of accessed data and the repair bandwidth in the repair procedure achieve the cut-set bound while the MSR codes in this paper only guarantee the optimal repair bandwidth. 


\section{New code construction} \label{construction}
We write a codeword of an $(n,k)$ array code with subpacketization $\ell$ as $(C_0,C_1,\dots,C_{n-1})$, where each $C_i=(C_i(0),C_i(1),\dots,C_i(\ell-1))^T$ is a vector of length $\ell$ over some finite field $\mathbb{F}$. We define an $(n,k)$ array code by the following set of parity check equations: for $a=0,1,\dots,\ell-1$,
\begin{equation} \label{eq:pc}
{\scriptsize\begin{aligned}
\sum_{i=0}^{n-1} \Big( A_i(a,0) C_i(0) + A_i(a,1) C_i(1) + \dots + A_i(a,\ell-1) C_i(\ell-1) \Big) = \mathbf{0} .
\end{aligned}}
\end{equation}
Each $A_i(a,b)$ in the above equation is a column vector of length $r=n-k$ for all $0\le a,b\le \ell-1$. The $\mathbf{0}$ on the right-hand side of the equation denotes the all-zero column vector of length $r$~\footnote{Throughout this paper, we will use $0$ to denote the number $0$ or the all-zero matrix, and use bold $\mathbf{0}$ to denote the all-zero column vector. The size of vector and matrix can be easily derived from the context.}. 
In our construction, we set $\ell=2^{n/3}$, so for every $a\in\{0,1,\dots,\ell-1\}$, we can write its $n/3$-digit binary expansion as $a=(a_{n/3-1},a_{n/3-2},\dots,a_0)$, where $a_0$ is the least significant bit.

We divide the $n$ nodes $(C_0,C_1,\dots,C_{n-1})$ into $n/3$ groups of size $3$. More precisely, for every $0\le i\le n/3-1$, the three nodes $C_{3i},C_{3i+1},C_{3i+2}$ are in the same group. For the index $3i+j$ of each node, the value $i$ is the index of the group to which this node belongs, and $j\in\{0,1,2\}$ is the index of the node within the group. Let $\mathbb{F}$ be a finite field with size $|\mathbb{F}|\ge 2n+1$. 
Let $\lambda_0,\lambda_1,\dots,\lambda_{2n-1}$ be $2n$ distinct elements in $\mathbb{F}$ satisfying 
\begin{equation} \label{eq:crcd}
	\begin{aligned}
		&\frac{(\lambda_{6i+1}-\lambda_{6i+3})(\lambda_{6i+1}-\lambda_{6i+5})}{(\lambda_{6i+1}-\lambda_{6i})(\lambda_{6i+1}-\lambda_{6i+4})} \\
		\neq  &\frac{(\lambda_{6i+2}-\lambda_{6i+3})(\lambda_{6i+2}-\lambda_{6i+5})}{(\lambda_{6i+2}-\lambda_{6i})(\lambda_{6i+2}-\lambda_{6i+4})}  \text{~for all~} 0\le i\le n/3-1.
	\end{aligned}
\end{equation}
For $0\le i\le n/3-1$, we introduce the shorthand notation
\begin{equation}  \label{eq:gama}
	\begin{aligned}
		&\gamma_{6i+1}=-\frac{(\lambda_{6i+1}-\lambda_{6i+3})(\lambda_{6i+1}-\lambda_{6i+5})}{(\lambda_{6i+1}-\lambda_{6i})(\lambda_{6i+1}-\lambda_{6i+4})},  \\
		&\gamma_{6i+2}=-\frac{(\lambda_{6i+2}-\lambda_{6i+3})(\lambda_{6i+2}-\lambda_{6i+5})}{(\lambda_{6i+2}-\lambda_{6i})(\lambda_{6i+2}-\lambda_{6i+4})},
	\end{aligned}
\end{equation}
which will be used in later sections.
The condition \eqref{eq:crcd} is equivalent to
\begin{equation} \label{eq:cond}
\gamma_{6i+1} \neq \gamma_{6i+2} \quad \text{for all~} 0\le i\le n/3-1 .
\end{equation}
In Lemma~\ref{lm:1q} below, we show that as long as $|\mathbb{F}|\ge 2n+1$, one can always choose $2n$ distinct $\lambda_i$'s satisfying \eqref{eq:cond}.
We further write 
\begin{equation} \label{eq:dli}
L_i := \begin{bmatrix}
1 \\
\lambda_i \\
\lambda_i^2 \\
\vdots \\
\lambda_i^{r-1}
\end{bmatrix} .
\end{equation}
Now we are ready to define the vectors $A_i(a,b)$ in \eqref{eq:pc}. For every $0\le i\le n/3-1$ and every $0\le a,b\le \ell-1$, define
\begin{equation} \label{pcmt}
\begin{aligned}
A_{3i}(a,b) &=\left\{
\begin{array}{ll}
  L_{6i+a_i}   & \mbox{if~} a=b \\
  L_{6i}-L_{6i+1}   & \mbox{if~} a_i=0, b_i=1, \\ &\mbox{and~} a_j=b_j \,\forall j\neq i \\
  \mathbf{0} & \mbox{otherwise},
\end{array}
\right.  \\
A_{3i+1}(a,b) &=\left\{
\begin{array}{ll}
  L_{6i+2+a_i}   & \mbox{if~} a=b \\
  L_{6i+3}-L_{6i+2}   & \mbox{if~} a_i=1, b_i=0, \\ &\mbox{and~} a_j=b_j \,\forall j\neq i \\
  \mathbf{0} & \mbox{otherwise},
\end{array}
\right. \\
A_{3i+2}(a,b) &=\left\{
\begin{array}{ll}
  L_{6i+4+a_i}   & \mbox{if~} a=b \\
  \mathbf{0} & \mbox{otherwise}.
\end{array}
\right.
\end{aligned}
\end{equation}

\begin{lemma} \label{lm:1q}
If $|\mathbb{F}|\ge 2n+1$, then we can always choose $2n$ distinct elements $\lambda_0,\lambda_1,\dots,\lambda_{2n-1}$ from $\mathbb{F}$ satisfying \eqref{eq:cond}.
\end{lemma}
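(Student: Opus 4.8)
The plan is to pick the $2n$ elements greedily, six at a time, exploiting the fact that condition \eqref{eq:cond} for the index $i$ involves only the six field elements $\lambda_{6i},\lambda_{6i+1},\dots,\lambda_{6i+5}$; hence the constraints attached to distinct groups $i$ are imposed on pairwise disjoint sets of variables and can be handled one after another.

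Concretely, I would process $i=0,1,\dots,n/3-1$ in order. Assume that $6i$ distinct elements of $\mathbb{F}$ have already been chosen for the earlier groups. Since $|\mathbb{F}|\ge 2n+1>6i$, pick five further distinct elements of $\mathbb{F}$, none among the $6i$ already used, and assign them (in any fixed way) to $\lambda_{6i},\lambda_{6i+1},\lambda_{6i+2},\lambda_{6i+4},\lambda_{6i+5}$. It then remains to choose the last element $\lambda_{6i+3}$ of the group so that it differs from the $6i+5$ values already fixed and so that $\gamma_{6i+1}\neq\gamma_{6i+2}$.

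The crux is to bound the number of ``bad'' choices of $\lambda_{6i+3}$. Clearing the (nonzero, by distinctness) denominators in \eqref{eq:gama}, the equation $\gamma_{6i+1}=\gamma_{6i+2}$ is equivalent to $(\lambda_{6i+1}-\lambda_{6i+3})\,P=(\lambda_{6i+2}-\lambda_{6i+3})\,R$, where $P:=(\lambda_{6i+1}-\lambda_{6i+5})(\lambda_{6i+2}-\lambda_{6i})(\lambda_{6i+2}-\lambda_{6i+4})$ and $R:=(\lambda_{6i+2}-\lambda_{6i+5})(\lambda_{6i+1}-\lambda_{6i})(\lambda_{6i+1}-\lambda_{6i+4})$ do not depend on $\lambda_{6i+3}$, and $P\neq 0$ because the five already fixed elements are distinct. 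Reading this as a linear equation in the unknown $\lambda_{6i+3}$, its coefficient is $R-P$: if $R\neq P$ it has exactly one solution; if $R=P$ it collapses to $(\lambda_{6i+1}-\lambda_{6i+2})\,P=0$, which is impossible since $\lambda_{6i+1}\neq\lambda_{6i+2}$ and $P\neq 0$, so it has no solution. Either way at most one value $\beta\in\mathbb{F}$ of $\lambda_{6i+3}$ violates \eqref{eq:cond} for the index $i$.

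Finally I would choose $\lambda_{6i+3}$ to be any element of $\mathbb{F}$ outside the set consisting of the $6i+5$ already-used values together with the at-most-one forbidden value $\beta$; such an element exists as long as $|\mathbb{F}|\ge 6i+7$, and since $i\le n/3-1$ this is guaranteed by $|\mathbb{F}|\ge 2n+1$. Carrying this induction through all $i$ produces $2n$ distinct elements satisfying \eqref{eq:cond}. The one point that needs a moment's care is the degenerate case $R=P$ in the linear equation — one must check it does not force $\gamma_{6i+1}=\gamma_{6i+2}$ to hold for every choice of $\lambda_{6i+3}$ — but, as indicated above, the distinctness of the previously chosen entries ($P\neq 0$ and $\lambda_{6i+1}\neq\lambda_{6i+2}$) rules this out, so I do not expect a genuine obstacle.
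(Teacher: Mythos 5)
Your proposal is correct and follows essentially the same route as the paper: both exploit that the constraints for different groups involve disjoint variables, fix five of the six elements in a group, and observe that the remaining constraint is a linear equation in the last element with at most one root (the degenerate case having no root by distinctness), so a greedy choice works whenever $|\mathbb{F}|\ge 2n+1$. The only cosmetic difference is that you solve for $\lambda_{6i+3}$ while the paper solves for $\lambda_{6i+5}$ after drawing a fresh set of seven elements.
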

\begin{proof}
Note that for each value of $i\in\{0,1,\dots,n/3-1\}$, the equation in \eqref{eq:cond} involves $6$ numbers $\lambda_{6i},\lambda_{6i+1},\dots,\lambda_{6i+5}$. As a consequence, each $\lambda_j$ appears in exactly one equation for $0\le j\le 2n-1$. To prove the lemma, we first show that we can choose the $6$ numbers $\lambda_{6i},\lambda_{6i+1},\dots,\lambda_{6i+5}$ from any $7$ distinct elements in $\mathbb{F}$. Indeed, suppose that $\eta_0,\eta_1,\cdots,\eta_6$ are $7$ distinct elements in $\mathbb{F}$. Let $\lambda_{6i+j}=\eta_{j}$ for $j=0,1,2,3,4$. By \eqref{eq:cond}, after fixing the values of $\lambda_{6i},\lambda_{6i+1},\dots,\lambda_{6i+4}$, the last element $\lambda_{6i+5}$ can not be the root of the linear equation $\xi(\lambda_{6i+1}-x)=\lambda_{6i+2}-x$, where 
$$
\xi=\frac{(\lambda_{6i+2}-\lambda_{6i})(\lambda_{6i+2}-\lambda_{6i+4})(\lambda_{6i+1}-\lambda_{6i+3})}{(\lambda_{6i+1}-\lambda_{6i})(\lambda_{6i+1}-\lambda_{6i+4})(\lambda_{6i+2}-\lambda_{6i+3})}\neq 0 .
$$ 
If $\xi=1$, then the linear equation has no roots because $\lambda_{6i+1}\neq \lambda_{6i+2}$. If $\xi\neq 1$, then the linear equation has only one root, so at least one of $\eta_{5}$ and $\eta_{6}$ is not the root of $\xi(\lambda_{6i+1}-x)=\lambda_{6i+2}-x$, and we let $\lambda_{6i+5}$ take that value. This proves that we can choose $6$ numbers $\lambda_{6i},\lambda_{6i+1},\dots,\lambda_{6i+5}$ satisfying \eqref{eq:cond} from any $7$ distinct elements in $\mathbb{F}$. 
Now we let $\mathcal{S}$ be the set of any $7$ distinct elements of $\mathbb{F}$. Then we can choose $\lambda_0,\lambda_1,\cdots,\lambda_5$ satisfying \eqref{eq:cond} from $\mathcal{S}$.
Next we pick another $7$ distinct elements in $\mathbb{F}\backslash\{\lambda_0,\lambda_1,\cdots,\lambda_5\}$ and similarly
we can choose $\lambda_6,\lambda_7,\cdots,\lambda_{11}$ satisfying \eqref{eq:cond} from these $7$ elements.
As  $|\mathbb{F}|\ge 2n+1$, we can repeat this process $n/3$ times, and obtain $2n$ distinct elements $\lambda_0,\lambda_1,\dots,\lambda_{2n-1}$ from $\mathbb{F}$ satisfying \eqref{eq:cond}.
\end{proof}

\section{A concrete example}\label{example}

We use a concrete example to illustrate the above code construction. We take $n=9$ and $k=5$, so $\ell=2^{9/3}=8$ and $r=n-k=4$. In order to explicitly write out the code construction, we define a sequence of 2D arrays $A_0,A_1,\dots,A_{n-1}$. Each of them has $\ell=8$ columns and $\ell=8$ block rows. The entry at the cross of the $a$th block row and the $b$th column of $A_i$ is $A_i(a,b)$. Note that each entry here is a column vector of length $r=4$. By~\eqref{pcmt} we have
\allowdisplaybreaks
\begin{align*}
& A_0 =
\left[\begin{array}{*{8}{@{\hspace*{0.05in}}c}}
L_0 & L_0-L_1 & \mathbf{0} & \mathbf{0} & \mathbf{0} & \mathbf{0} & \mathbf{0} & \mathbf{0} \\
\mathbf{0} & L_1 & \mathbf{0} & \mathbf{0} & \mathbf{0} & \mathbf{0} & \mathbf{0} & \mathbf{0} \\
\mathbf{0} & \mathbf{0} & L_0 & L_0-L_1 & \mathbf{0} & \mathbf{0} & \mathbf{0} & \mathbf{0} \\
\mathbf{0} & \mathbf{0} & \mathbf{0} & L_1 & \mathbf{0} & \mathbf{0} & \mathbf{0} & \mathbf{0} \\
\mathbf{0} & \mathbf{0} & \mathbf{0} & \mathbf{0} & L_0 & L_0-L_1 & \mathbf{0} & \mathbf{0} \\
\mathbf{0} & \mathbf{0} & \mathbf{0} & \mathbf{0} & \mathbf{0} & L_1 & \mathbf{0} & \mathbf{0} \\
\mathbf{0} & \mathbf{0} & \mathbf{0} & \mathbf{0} & \mathbf{0} & \mathbf{0} & L_0 & L_0-L_1 \\
\mathbf{0} & \mathbf{0} & \mathbf{0} & \mathbf{0} & \mathbf{0} & \mathbf{0} & \mathbf{0} & L_1
\end{array} \hspace*{-0.05in}\right] \\
& A_1 =
\left[\begin{array}{*{8}{@{\hspace*{0.05in}}c}}
L_2 & \mathbf{0} & \mathbf{0} & \mathbf{0} & \mathbf{0} & \mathbf{0} & \mathbf{0} & \mathbf{0} \\
L_3-L_2 & L_3 & \mathbf{0} & \mathbf{0} & \mathbf{0} & \mathbf{0} & \mathbf{0} & \mathbf{0} \\
\mathbf{0} & \mathbf{0} & L_2 & \mathbf{0} & \mathbf{0} & \mathbf{0} & \mathbf{0} & \mathbf{0} \\
\mathbf{0} & \mathbf{0} & L_3-L_2 & L_3 & \mathbf{0} & \mathbf{0} & \mathbf{0} & \mathbf{0} \\
\mathbf{0} & \mathbf{0} & \mathbf{0} & \mathbf{0} & L_2 & \mathbf{0} & \mathbf{0} & \mathbf{0} \\
\mathbf{0} & \mathbf{0} & \mathbf{0} & \mathbf{0} & L_3-L_2 & L_3 & \mathbf{0} & \mathbf{0} \\
\mathbf{0} & \mathbf{0} & \mathbf{0} & \mathbf{0} & \mathbf{0} & \mathbf{0} & L_2 & \mathbf{0} \\
\mathbf{0} & \mathbf{0} & \mathbf{0} & \mathbf{0} & \mathbf{0} & \mathbf{0} & L_3-L_2 & L_3
\end{array} \hspace*{-0.05in}\right] \\
& A_2 =
\left[\begin{array}{*{8}{@{\hspace*{0.05in}}c}}
L_4 & \mathbf{0} & \mathbf{0} & \mathbf{0} & \mathbf{0} & \mathbf{0} & \mathbf{0} & \mathbf{0} \\
\mathbf{0} & L_5 & \mathbf{0} & \mathbf{0} & \mathbf{0} & \mathbf{0} & \mathbf{0} & \mathbf{0} \\
\mathbf{0} & \mathbf{0} & L_4 & \mathbf{0} & \mathbf{0} & \mathbf{0} & \mathbf{0} & \mathbf{0} \\
\mathbf{0} & \mathbf{0} & \mathbf{0} & L_5 & \mathbf{0} & \mathbf{0} & \mathbf{0} & \mathbf{0} \\
\mathbf{0} & \mathbf{0} & \mathbf{0} & \mathbf{0} & L_4 & \mathbf{0} & \mathbf{0} & \mathbf{0} \\
\mathbf{0} & \mathbf{0} & \mathbf{0} & \mathbf{0} & \mathbf{0} & L_5 & \mathbf{0} & \mathbf{0} \\
\mathbf{0} & \mathbf{0} & \mathbf{0} & \mathbf{0} & \mathbf{0} & \mathbf{0} & L_4 & \mathbf{0} \\
\mathbf{0} & \mathbf{0} & \mathbf{0} & \mathbf{0} & \mathbf{0} & \mathbf{0} & \mathbf{0} & L_5
\end{array} \hspace*{-0.05in}\right] \\
& A_3 =
\left[\begin{array}{*{8}{@{\hspace*{0.05in}}c}}
L_6 & \mathbf{0} & L_6-L_7 & \mathbf{0} & \mathbf{0} & \mathbf{0} & \mathbf{0} & \mathbf{0} \\
\mathbf{0} & L_6 & \mathbf{0} & L_6-L_7 & \mathbf{0} & \mathbf{0} & \mathbf{0} & \mathbf{0} \\
\mathbf{0} & \mathbf{0} & L_7 & \mathbf{0} & \mathbf{0} & \mathbf{0} & \mathbf{0} & \mathbf{0} \\
\mathbf{0} & \mathbf{0} & \mathbf{0} & L_7 & \mathbf{0} & \mathbf{0} & \mathbf{0} & \mathbf{0} \\
\mathbf{0} & \mathbf{0} & \mathbf{0} & \mathbf{0} & L_6 & \mathbf{0} & L_6-L_7 & \mathbf{0} \\
\mathbf{0} & \mathbf{0} & \mathbf{0} & \mathbf{0} & \mathbf{0} & L_6 & \mathbf{0} & L_6-L_7 \\
\mathbf{0} & \mathbf{0} & \mathbf{0} & \mathbf{0} & \mathbf{0} & \mathbf{0} & L_7 & \mathbf{0} \\
\mathbf{0} & \mathbf{0} & \mathbf{0} & \mathbf{0} & \mathbf{0} & \mathbf{0} & \mathbf{0} & L_7
\end{array} \hspace*{-0.05in}\right] \\
& A_4 =
\left[\begin{array}{*{8}{@{\hspace*{0.05in}}c}}
L_8 & \mathbf{0} & \mathbf{0} & \mathbf{0} & \mathbf{0} & \mathbf{0} & \mathbf{0} & \mathbf{0} \\
\mathbf{0} & L_8 & \mathbf{0} & \mathbf{0} & \mathbf{0} & \mathbf{0} & \mathbf{0} & \mathbf{0} \\
L_9-L_8 & \mathbf{0} & L_9 & \mathbf{0} & \mathbf{0} & \mathbf{0} & \mathbf{0} & \mathbf{0} \\
\mathbf{0} & L_9-L_8 & \mathbf{0} & L_9 & \mathbf{0} & \mathbf{0} & \mathbf{0} & \mathbf{0} \\
\mathbf{0} & \mathbf{0} & \mathbf{0} & \mathbf{0} & L_8 & \mathbf{0} & \mathbf{0} & \mathbf{0} \\
\mathbf{0} & \mathbf{0} & \mathbf{0} & \mathbf{0} & \mathbf{0} & L_8 & \mathbf{0} & \mathbf{0} \\
\mathbf{0} & \mathbf{0} & \mathbf{0} & \mathbf{0} & L_9-L_8 & \mathbf{0} & L_9 & \mathbf{0} \\
\mathbf{0} & \mathbf{0} & \mathbf{0} & \mathbf{0} & \mathbf{0} & L_9-L_8 & \mathbf{0} & L_9
\end{array} \hspace*{-0.05in}\right] \\
& A_5 =
\left[\begin{array}{*{8}{@{\hspace*{0.05in}}c}}
L_{10} & \mathbf{0} & \mathbf{0} & \mathbf{0} & \mathbf{0} & \mathbf{0} & \mathbf{0} & \mathbf{0} \\
\mathbf{0} & L_{10} & \mathbf{0} & \mathbf{0} & \mathbf{0} & \mathbf{0} & \mathbf{0} & \mathbf{0} \\
\mathbf{0} & \mathbf{0} & L_{11} & \mathbf{0} & \mathbf{0} & \mathbf{0} & \mathbf{0} & \mathbf{0} \\
\mathbf{0} & \mathbf{0} & \mathbf{0} & L_{11} & \mathbf{0} & \mathbf{0} & \mathbf{0} & \mathbf{0} \\
\mathbf{0} & \mathbf{0} & \mathbf{0} & \mathbf{0} & L_{10} & \mathbf{0} & \mathbf{0} & \mathbf{0} \\
\mathbf{0} & \mathbf{0} & \mathbf{0} & \mathbf{0} & \mathbf{0} & L_{10} & \mathbf{0} & \mathbf{0} \\
\mathbf{0} & \mathbf{0} & \mathbf{0} & \mathbf{0} & \mathbf{0} & \mathbf{0} & L_{11} & \mathbf{0} \\
\mathbf{0} & \mathbf{0} & \mathbf{0} & \mathbf{0} & \mathbf{0} & \mathbf{0} & \mathbf{0} & L_{11}
\end{array} \hspace*{-0.05in}\right] \\
& {\scriptsize A_6 =
\left[\begin{array}{*{8}{@{\hspace*{0.03in}}c}}
L_{12} & \mathbf{0} & \mathbf{0} & \mathbf{0} & L_{12}-L_{13} & \mathbf{0} & \mathbf{0} & \mathbf{0} \\
\mathbf{0} & L_{12} & \mathbf{0} & \mathbf{0} & \mathbf{0} & L_{12}-L_{13} & \mathbf{0} & \mathbf{0} \\
\mathbf{0} & \mathbf{0} & L_{12} & \mathbf{0} & \mathbf{0} & \mathbf{0} & L_{12}-L_{13} & \mathbf{0} \\
\mathbf{0} & \mathbf{0} & \mathbf{0} & L_{12} & \mathbf{0} & \mathbf{0} & \mathbf{0} & L_{12}-L_{13} \\
\mathbf{0} & \mathbf{0} & \mathbf{0} & \mathbf{0} & L_{13} & \mathbf{0} & \mathbf{0} & \mathbf{0} \\
\mathbf{0} & \mathbf{0} & \mathbf{0} & \mathbf{0} & \mathbf{0} & L_{13} & \mathbf{0} & \mathbf{0} \\
\mathbf{0} & \mathbf{0} & \mathbf{0} & \mathbf{0} & \mathbf{0} & \mathbf{0} & L_{13} & \mathbf{0} \\
\mathbf{0} & \mathbf{0} & \mathbf{0} & \mathbf{0} & \mathbf{0} & \mathbf{0} & \mathbf{0} & L_{13}
\end{array} \hspace*{-0.05in}\right] }\\
& {\scriptsize A_7 =
\left[\begin{array}{*{8}{@{\hspace*{0.03in}}c}}
L_{14} & \mathbf{0} & \mathbf{0} & \mathbf{0} & \mathbf{0} & \mathbf{0} & \mathbf{0} & \mathbf{0} \\
\mathbf{0} & L_{14} & \mathbf{0} & \mathbf{0} & \mathbf{0} & \mathbf{0} & \mathbf{0} & \mathbf{0} \\
\mathbf{0} & \mathbf{0} & L_{14} & \mathbf{0} & \mathbf{0} & \mathbf{0} & \mathbf{0} & \mathbf{0} \\
\mathbf{0} & \mathbf{0} & \mathbf{0} & L_{14} & \mathbf{0} & \mathbf{0} & \mathbf{0} & \mathbf{0} \\
L_{15}-L_{14} & \mathbf{0} & \mathbf{0} & \mathbf{0} & L_{15} & \mathbf{0} & \mathbf{0} & \mathbf{0} \\
\mathbf{0} & L_{15}-L_{14} & \mathbf{0} & \mathbf{0} & \mathbf{0} & L_{15} & \mathbf{0} & \mathbf{0} \\
\mathbf{0} & \mathbf{0} & L_{15}-L_{14} & \mathbf{0} & \mathbf{0} & \mathbf{0} & L_{15} & \mathbf{0} \\
\mathbf{0} & \mathbf{0} & \mathbf{0} & L_{15}-L_{14} & \mathbf{0} & \mathbf{0} & \mathbf{0} & L_{15}
\end{array} \hspace*{-0.05in}\right]} \\
& A_8 =
\left[\begin{array}{*{8}{@{\hspace*{0.05in}}c}}
L_{16} & \mathbf{0} & \mathbf{0} & \mathbf{0} & \mathbf{0} & \mathbf{0} & \mathbf{0} & \mathbf{0} \\
\mathbf{0} & L_{16} & \mathbf{0} & \mathbf{0} & \mathbf{0} & \mathbf{0} & \mathbf{0} & \mathbf{0} \\
\mathbf{0} & \mathbf{0} & L_{16} & \mathbf{0} & \mathbf{0} & \mathbf{0} & \mathbf{0} & \mathbf{0} \\
\mathbf{0} & \mathbf{0} & \mathbf{0} & L_{16} & \mathbf{0} & \mathbf{0} & \mathbf{0} & \mathbf{0} \\
\mathbf{0} & \mathbf{0} & \mathbf{0} & \mathbf{0} & L_{17} & \mathbf{0} & \mathbf{0} & \mathbf{0} \\
\mathbf{0} & \mathbf{0} & \mathbf{0} & \mathbf{0} & \mathbf{0} & L_{17} & \mathbf{0} & \mathbf{0} \\
\mathbf{0} & \mathbf{0} & \mathbf{0} & \mathbf{0} & \mathbf{0} & \mathbf{0} & L_{17} & \mathbf{0} \\
\mathbf{0} & \mathbf{0} & \mathbf{0} & \mathbf{0} & \mathbf{0} & \mathbf{0} & \mathbf{0} & L_{17}
\end{array} \hspace*{-0.05in}\right].
\end{align*}

\subsection{MDS property} \label{subsect:MDS}
We first explain why this code is an MDS array code. In order to prove the MDS property, we need to show that any $4$ nodes of this code can be recovered from the remaining $5$ nodes. We will use two concrete cases to illustrate how to prove the MDS property. 

{\bf First case: How to recover $C_0,C_1,C_3,C_5$ from the remaining $5$ nodes.} To that end, we only need to show that if all the coordinates of $C_2,C_4,C_6,C_7,C_8$ are $0$, then $C_0=C_1=C_3=C_5=\mathbf{0}$ is the only solution to the parity check equations \eqref{eq:pc}. Indeed, when $C_2=C_4=C_6=C_7=C_8=\mathbf{0}$, the equations \eqref{eq:pc} can be written as follows: for $a=0,1,\cdots,7$,
\begin{align}\label{eq:pc_case1}
    &A_0(a,0)C_0(0)+\cdots+A_0(a,7)C_0(7)\\\nonumber
    +&A_1(a,0)C_1(0)+\cdots+A_1(a,7)C_1(7)\\\nonumber
    +&A_3(a,0)C_3(0)+\cdots+A_3(a,7)C_3(7)\\\nonumber
    +&A_5(a,0)C_5(0)+\cdots+A_5(a,7)C_5(7)=\mathbf{0}.
\end{align}

Now we only consider $a=0,1,2,3$. Since $A_i(a,b)=\mathbf{0}$ for every $i=0,1,3,5$ and $a=0,1,2,3$, $b=4,5,6,7$, the equation can be further written as 
$$MC=(A_0',A_1',A_3',A_5')\cdot C=\mathbf{0},$$
where 
\begin{align*}
    &A_0'=\begin{bmatrix}
        L_0 & L_0-L_1 & \mathbf{0} & \mathbf{0}\\
        \mathbf{0} & L_1 & \mathbf{0} & \mathbf{0}\\
        \mathbf{0} & \mathbf{0} & L_0 & L_0-L_1\\
        \mathbf{0} & \mathbf{0} & \mathbf{0} & L_1
    \end{bmatrix}, \\
    &A_1'=\begin{bmatrix}
        L_2 & \mathbf{0} & \mathbf{0} & \mathbf{0}\\
        L_3-L_2 & L_3 & \mathbf{0} & \mathbf{0}\\
        \mathbf{0} & \mathbf{0} & L_2 & \mathbf{0}\\
        \mathbf{0} & \mathbf{0} & L_3-L_2 & L_3
    \end{bmatrix}, \\
    &A_3'=\begin{bmatrix}
        L_6 & \mathbf{0} & L_6-L_7 & \mathbf{0}\\
        \mathbf{0} & L_6 & \mathbf{0} & L_6-L_7\\
        \mathbf{0} & \mathbf{0} & L_7 & \mathbf{0}\\
        \mathbf{0} & \mathbf{0} & \mathbf{0} & L_7
    \end{bmatrix}, \\
    &A_5'=\begin{bmatrix}
        L_{10} & \mathbf{0} & \mathbf{0} & \mathbf{0}\\
        \mathbf{0} & L_{10} & \mathbf{0} & \mathbf{0}\\
        \mathbf{0} & \mathbf{0} & L_{11} & \mathbf{0}\\
        \mathbf{0} & \mathbf{0} & \mathbf{0} & L_{11}
    \end{bmatrix},
\end{align*}
and
\begin{align*}
     C=&\Big(C_0(0),C_0(1),C_0(2),C_0(3),C_1(0),C_1(1),C_1(2),C_1(3),\\
    &C_3(0),C_3(1),C_3(2),C_3(3),C_5(0),C_5(1),C_5(2),C_5(3)\Big)^T.
\end{align*}

Summing up the equations of block row\footnote{In this paper, we assume that the row index and column index start from $0$.} $2$ and $3$ in $MC=\mathbf{0}$, we obtain 
\begin{align*}
	&L_0(C_0(2) +C_0(3)) + L_3(C_1(2)+C_1(3)) \\
	+&L_7(C_3(2)+C_3(3)) + L_{11}(C_5(2)+C_5(3)) = \mathbf{0}.
\end{align*}
Since $[L_0, L_3, L_7, L_{11}]$ is a $4\times 4$ Vandermonde matrix, we have $C_0(2)+C_0(3)=C_1(2)+C_1(3)=C_3(2)+C_3(3)=C_5(2)+C_5(3)=0$. Taking these back into the equations of block row $2$ and $3$ in $MC=0$, we obtain
\begin{align*}
& -L_1 C_0(3) + L_2 C_1(2)  +L_7 C_3(2)+ L_{11}C_5(2) =\mathbf{0} , \\
& L_1 C_0(3) -L_2 C_1(2) + L_7 C_3(3)  +L_{11}C_5(3) =\mathbf{0} .
\end{align*}
Similarly we have $C_0(3)=C_1(2)=C_3(2)=C_3(3)=C_5(2)=C_5(3)=0$. 
Combining this with $C_0(2)+C_0(3)=C_1(2)+C_1(3)=0$, we obtain that $C_i(j)=0$ for all $i=0,1,3,5$ and $j=2,3$. Taking this into block row $0$ and $1$ of $MC=\mathbf{0}$, we obtain
\begin{equation} \label{eq:bk}
\begin{aligned}
&L_0 C_0(0) +(L_0  -L_1)C_0(1)\\ + &L_2 C_1(0) +L_6 C_3(0) + L_{10}C_5(0) = \mathbf{0},  \\
&L_1 C_0(1) +(L_3  -L_2) C_1(0)\\ + &L_3 C_1(1) + L_6 C_3(1)  +L_{10}C_5(1) = \mathbf{0} .
\end{aligned}
\end{equation}
Summing up these two equations, we have
\begin{align*}
&L_0(C_0(0) +C_0(1)) +L_3(C_1(0)+C_1(1)) \\
+&L_6 (C_3(0)+C_3(1)) +L_{10} (C_5(0)+C_5(1)) =\mathbf{0}.
\end{align*}
Therefore, $C_0(0) +C_0(1)=C_1(0)+C_1(1)=C_3(0)+C_3(1)=C_5(0)+C_5(1)=0$. Taking these back into \eqref{eq:bk}, we obtain
\begin{align*}
& -L_1 C_0(1) + L_2 C_1(0) 
 +L_6 C_3(0) + L_{10}C_5(0) = \mathbf{0},  \\
& L_1 C_0(1) -L_2 C_1(0) + L_6 C_3(1)  +L_{10}C_5(1) = \mathbf{0} .
\end{align*}
Therefore, $C_0(1)=C_1(0)=C_3(0)=C_3(1)=C_5(0)=C_5(1)=0$. Combining these with $C_0(0) +C_0(1)=C_1(0)+C_1(1)=0$, we have $C_i(j)=0$ for all $i=0,1,3,5$ and $j=0,1$.

Now we have used $a=0,1,2,3$ of equations \eqref{eq:pc_case1} to conclude that $C_i(j)=0$ for all $i=0,1,3,5$ and $j=0,1,2,3$. The same analysis of $a=4,5,6,7$ of equations \eqref{eq:pc_case1} allows us to obtain that $C_i(j)=0$ for all $i=0,1,3,5$ and $j=4,5,6,7$. Thus we have shown that $C_0=C_1=C_3=C_5=\mathbf{0}$. This proves that $C_0,C_1,C_3,C_5$ can be recovered from the remaining $5$ nodes.

{\bf Second case: How to recover $C_0,C_1,C_2,C_5$ from the remaining $5$ nodes.} To that end, we only need to show that if all the coordinates of $C_3,C_4,C_6,C_7,C_8$ are $\mathbf{0}$, then $C_0=C_1=C_2=C_5=\mathbf{0}$ is the only solution to the parity check equations \eqref{eq:pc}. Indeed, when $C_3=C_4=C_6=C_7=C_8=\mathbf{0}$, the equations \eqref{eq:pc} can be written as follows: for $a=0,1,\cdots,7$, we have
\begin{align}\label{eq:pc_case2}
    &A_0(a,0)C_0(0)+\cdots+A_0(a,7)C_0(7)\\\nonumber
    +&A_1(a,0)C_1(0)+\cdots+A_1(a,7)C_1(7)\\\nonumber
    +&A_2(a,0)C_2(0)+\cdots+A_2(a,7)C_2(7)\\\nonumber
    +&A_5(a,0)C_5(0)+\cdots+A_5(a,7)C_5(7)=\mathbf{0}.
\end{align}

Now we only consider $a=0,1,2,3$. Since $A_i(a,b)=\mathbf{0}$ for every $i=0,1,2,5$ and $a=0,1,2,3$, $b=4,5,6,7$, the equation can be further written as 
$$MC=(A_0',A_1',A_2',A_5')\cdot C=\mathbf{0},$$
where 
\begin{align*}
    &A_0'=\begin{bmatrix}
        L_0 & L_0-L_1 & \mathbf{0} & \mathbf{0}\\
        \mathbf{0} & L_1 & \mathbf{0} & \mathbf{0}\\
        \mathbf{0} & \mathbf{0} & L_0 & L_0-L_1\\
        \mathbf{0} & \mathbf{0} & \mathbf{0} & L_1
    \end{bmatrix}, \\
    &A_1'=\begin{bmatrix}
        L_2 & \mathbf{0} & \mathbf{0} & \mathbf{0}\\
        L_3-L_2 & L_3 & \mathbf{0} & \mathbf{0}\\
        \mathbf{0} & \mathbf{0} & L_2 & \mathbf{0}\\
        \mathbf{0} & \mathbf{0} & L_3-L_2 & L_3
    \end{bmatrix}, \\
    &A_2'=\begin{bmatrix}
        L_4 & \mathbf{0} & \mathbf{0} & \mathbf{0}\\
        \mathbf{0} & L_5 & \mathbf{0} & \mathbf{0}\\
        \mathbf{0} & \mathbf{0} & L_4 & \mathbf{0}\\
        \mathbf{0} & \mathbf{0} & \mathbf{0} & L_5
    \end{bmatrix}, \\
    &A_5'=\begin{bmatrix}
        L_{10} & \mathbf{0} & \mathbf{0} & \mathbf{0}\\
        \mathbf{0} & L_{10} & \mathbf{0} & \mathbf{0}\\
        \mathbf{0} & \mathbf{0} & L_{11} & \mathbf{0}\\
        \mathbf{0} & \mathbf{0} & \mathbf{0} & L_{11}
    \end{bmatrix},
\end{align*}
and
\begin{align*}
    C=&\Big(C_0(0),C_0(1),C_0(2),C_0(3),C_1(0),C_1(1),C_1(2),C_1(3),\\
    &C_2(0),C_2(1),C_2(2),C_2(3),C_5(0),C_5(1),C_5(2),C_5(3)\Big)^T.
\end{align*}

The equations of block row $0$ and $1$ in $MC=\mathbf{0}$ can be written in the matrix form:
$$M'C'=(\mathcal{A}_0,\mathcal{A}_1,\mathcal{A}_2,\mathcal{A}_5)\cdot C'=\mathbf{0}, $$
where  
\begin{align*}
	&\mathcal{A}_0=\begin{bmatrix}
		L_0 & L_0-L_1 \\
		\mathbf{0} & L_1  
	\end{bmatrix},\quad
	\mathcal{A}_1=\begin{bmatrix}
		L_2 & \mathbf{0} \\
		L_3-L_2 & L_3 
	\end{bmatrix},\\
	&\mathcal{A}_2=\begin{bmatrix}
		L_4 & \mathbf{0} \\
		\mathbf{0} & L_5  
	\end{bmatrix},\quad
	\mathcal{A}_5=\begin{bmatrix}
		L_{10} & \mathbf{0} \\
		\mathbf{0} & L_{10} 
	\end{bmatrix},
\end{align*}
{\small \begin{align*}
		C'=\Big( C_0(0),C_0(1),C_1(0),C_1(1),C_2(0),C_2(1),C_5(0),C_5(1)\Big)^T.
	\end{align*}} 

Next we show that $M'$ is invertible, so $C_i(j)=0$ for all $i=0,1,2,5$ and all $j=0,1$. To that end, we perform the following elementary column operations on $M'$:
\begin{itemize}
    \item add column $0$ multiplied by $(-1)$ to column $1$ of $\mathcal{A}_0$, denoted as $\mathcal{A}_0'$,
    \item add column $1$ multiplied by $(-1)$ to column $0$ of $\mathcal{A}_1$, denoted as $\mathcal{A}_1'$,
    \item $\mathcal{A}_2'=\mathcal{A}_2$,
    \item $\mathcal{A}_5'=\mathcal{A}_5$.
\end{itemize}
Then we have
\begin{align*}
	&\mathcal{A}'_0=\begin{bmatrix}
		L_0 & -L_1 \\
		\mathbf{0} & L_1  
	\end{bmatrix},
	\mathcal{A}'_1=\begin{bmatrix}
		L_2 & \mathbf{0} \\
		-L_2 & L_3 
	\end{bmatrix},\\
	&\mathcal{A}'_2=\begin{bmatrix}
		L_4 & \mathbf{0} \\
		\mathbf{0} & L_5  
	\end{bmatrix},
	\mathcal{A}'_5=\begin{bmatrix}
		L_{10} & \mathbf{0} \\
		\mathbf{0} & L_{10} 
	\end{bmatrix}.
\end{align*}

Let \begin{align*}
	M''=&(\mathcal{A}_0',\mathcal{A}_1',\mathcal{A}_2',\mathcal{A}_5')\\
	=&\begin{bmatrix}
		L_0&-L_1&L_2&\mathbf{0}&L_4&\mathbf{0}&L_{10}&\mathbf{0}\\
		\mathbf{0}&L_1&-L_2&L_3&\mathbf{0}&L_5&\mathbf{0}&L_{10}
	\end{bmatrix}.
\end{align*} The $8\times 8$ matrix $M''$ can be written in the following form: 
$$
\left[\begin{array}{cccccccc}
1 & -1 & 1 & 0 & 1 & 0 & 1 & 0 \\
\lambda_0 & -\lambda_1 & \lambda_2 & 0 & \lambda_4 & 0 & \lambda_{10} & 0 \\
\lambda_0^2 & -\lambda_1^2 & \lambda_2^2 & 0 & \lambda_4^2 & 0 & \lambda_{10}^2 & 0 \\
\lambda_0^3 & -\lambda_1^3 & \lambda_2^3 & 0 & \lambda_4^3 & 0 & \lambda_{10}^3 & 0 \\
0 & 1 & -1 & 1 & 0 & 1 & 0 & 1 \\
0 & \lambda_1 & -\lambda_2 & \lambda_3 & 0 & \lambda_5 & 0 & \lambda_{10} \\
0 & \lambda_1^2 & -\lambda_2^2 & \lambda_3^2 & 0 & \lambda_5^2 & 0 & \lambda_{10}^2 \\
0 & \lambda_1^3 & -\lambda_2^3 & \lambda_3^3 & 0 & \lambda_5^3 & 0 & \lambda_{10}^3
\end{array}\right] .
$$

It is obvious that $M''$ is invertible if and only if $M'$ is invertible because the elementary column operations do not change the invertibility of matrix.  

Now we only need to show that $M''$ is invertible, so is $M'$, then by $M'C'=\mathbf{0}$, we have $C'=\mathbf{0}$. 
To that end, let us consider a row vector $f=(f_{0,0},f_{0,1},f_{0,2},f_{0,3},f_{1,0},f_{1,1},f_{1,2},f_{1,3})$ of length $8$ and a column vector $\vec{y}=(y_{0,0},y_{0,1},y_{1,0},y_{1,1},y_{2,0},y_{2,1},y_{5,0},y_{5,1})^T$ of length $8$. The matrix $M''$ is invertible if and only if $\vec{y}=\mathbf{0}$ is the only solution to the equation $M''\vec{y}=\mathbf{0}$. Define two polynomials 
\begin{align*}
		f_0(x)=&(x-\lambda_0)(x-\lambda_4)(x-\lambda_{10})\\=&f_{0,0}+f_{0,1}x+f_{0,2}x^2+f_{0,3}x^3,\\ f_1(x)=&(x-\lambda_3)(x-\lambda_5)(x-\lambda_{10})\\=&f_{1,0}+f_{1,1}x+f_{1,2}x^2+f_{1,3}x^3.
\end{align*}
Then $fM''\vec{y}=\mathbf{0}$ implies that
\begin{align*}
\begin{bmatrix}
0 & -f_0(\lambda_1) & f_0(\lambda_2) & 0 & 0 & 0 & 0 & 0 \\
0 & f_1(\lambda_1) & -f_1(\lambda_2) & 0 & 0 & 0 & 0 & 0 
\end{bmatrix}
\begin{bmatrix}
    y_{0,0}\\y_{0,1}\\y_{1,0}\\y_{1,1}\\y_{2,0}\\y_{2,1}\\y_{5,0}\\y_{5,1}
\end{bmatrix}
=\mathbf{0}.
\end{align*}

The equations can be written as $\begin{bmatrix}
    -f_0(\lambda_1)&f_0(\lambda_2)\\
    f_1(\lambda_1)&-f_1(\lambda_2)
\end{bmatrix}
\begin{bmatrix}
    y_{0,1}\\y_{1,0}
\end{bmatrix}=\mathbf{0}$. 
By the fact that $\lambda_1, \lambda_2,\lambda_{10}$ are distinct and condition \eqref{eq:cond}, the determinant of the left matrix is $f_0(\lambda_1)f_1(\lambda_2)-f_0(\lambda_2)f_1(\lambda_1)=(\lambda_1-\lambda_{10})(\lambda_2-\lambda_{10})\Big((\lambda_1-\lambda_0)(\lambda_1-\lambda_4)(\lambda_2-\lambda_3)(\lambda_2-\lambda_5)-(\lambda_2-\lambda_0)(\lambda_2-\lambda_4)(\lambda_1-\lambda_3)(\lambda_1-\lambda_5)\Big)\neq 0$, so $y_{0,1}=y_{1,0}=0$. Taking these back into $M''\vec{y}=\mathbf{0}$, we obtain 
$$
\begin{bmatrix}
    L_0 & \mathbf{0} & L_4 & \mathbf{0} & L_{10} & \mathbf{0}\\
    \mathbf{0} & L_3 & \mathbf{0} & L_5 & \mathbf{0} & L_{10}
\end{bmatrix}
\begin{bmatrix}
    y_{0,0}\\y_{1,1}\\y_{2,0}\\y_{2,1}\\y_{5,0}\\y_{5,1}
\end{bmatrix}=\mathbf{0}.
$$

Since the first three rows of the $4\times 3$ matrix $[L_0,L_4,L_{10}]$ is a $3\times 3$ Vandermonde matrix, we have $y_{0,0}=y_{2,0}=y_{5,0}=0$. Similarly the first three rows of the $4\times 3$ matrix $[L_3,L_5,L_{10}]$ is a $3\times 3$ Vandermonde matrix, we have $y_{1,1}=y_{2,1}=y_{5,1}=0$. Therefore, $\vec{y}=\mathbf{0}$, we conclude that $M''$ is invertible. From the previous analysis, $M''$ is invertible implies $C'=\mathbf{0}$, that is $C_i(j)=0$ for all $i=0,1,2,5$ and all $j=0,1$. Using exactly the same method, one can show that $C_i(j)=0$ for all $i=0,1,2,5$ and all $j=2,3$ from the equations of block row $2$ and $3$ in $MC=\mathbf{0}$.

Now we have used $a=0,1,2,3$ of equations \eqref{eq:pc_case2} to conclude that $C_i(j)=0$ for all $i=0,1,2,5$ and $j=0,1,2,3$. The same analysis of $a=4,5,6,7$ of equations \eqref{eq:pc_case2} allows us to obtain that $C_i(j)=0$ for all $i=0,1,2,5$ and $j=4,5,6,7$. Thus we have shown that $C_0=C_1=C_2=C_5=\mathbf{0}$. This proves that $C_0,C_1,C_2,C_5$ can be recovered from the remaining $5$ nodes.

\subsection{Optimal repair bandwidth for single node failure}
We also use two cases to illustrate the repair procedure.

{\bf First case: How to repair $C_0$.} Note that the parity check equations in \eqref{eq:pc} can be written in the matrix form 
\begin{equation} \label{eq:repairexample}
A_0 C_0+A_1 C_1+A_2 C_2+\dots+A_8 C_8=\mathbf{0} ,
\end{equation}
where each $C_i$ is a column vector of length $\ell=8$. Each block row in the matrices $A_0,\dots,A_8$ corresponds to a set of $r=n-k=4$ parity check equations because the length of each $L_i$ is $4$. Since there are $8$ block rows in each matrix $A_i$, we have $8$ sets of parity check equations in total. The repair of $C_0$ only involves $4$ out of these $8$ sets of parity check equations. More precisely, among the $8$ block rows in each matrix $A_i$, we only need to look at the block rows whose indices lie in the set $\{0,2,4,6\}$. 
These $4$ block rows of parity check equations can again be organized in the matrix form
\begin{equation} \label{eq:pceq0}
\widetilde{A}_0 \widetilde{C}_0 + \widehat{A}_0 \widehat{C}_0
+ \sum_{i=1}^8 \overline{A}_i \overline{C}_i = \mathbf{0} ,
\end{equation}
where $\widetilde{A}_0, \widehat{A}_0, \overline{A}_i, 1\le i\le 8$ are all $4\times 4$ matrices, and $\widetilde{C}_0,\widehat{C}_0,\overline{C}_i, 1\le i\le 8$ are all column vectors of length $4$. More specifically, the matrices in \eqref{eq:pceq0} are
{\scriptsize
\begin{align*}
& \widetilde{A}_0 =
\left[\begin{array}{*{4}{@{\hspace*{0.05in}}c}}
L_0 & \mathbf{0} & \mathbf{0} & \mathbf{0} \\
\mathbf{0} & L_0 & \mathbf{0} & \mathbf{0} \\
\mathbf{0} & \mathbf{0} & L_0 & \mathbf{0} \\
\mathbf{0} & \mathbf{0} & \mathbf{0} & L_0 \\
\end{array} \hspace*{-0.05in}\right]\ 
\widehat{A}_0 =
\left[\begin{array}{*{4}{@{\hspace*{0.05in}}c}}
-L_1 & \mathbf{0} & \mathbf{0} & \mathbf{0} \\
\mathbf{0} & -L_1 & \mathbf{0} & \mathbf{0} \\
\mathbf{0} & \mathbf{0} & -L_1 & \mathbf{0} \\
\mathbf{0} & \mathbf{0} & \mathbf{0} & -L_1 \\
\end{array} \hspace*{-0.05in}\right] \\
& \overline{A}_1 =
\left[\begin{array}{*{4}{@{\hspace*{0.05in}}c}}
L_2 & \mathbf{0} & \mathbf{0} & \mathbf{0} \\
\mathbf{0} & L_2 & \mathbf{0} & \mathbf{0} \\
\mathbf{0} & \mathbf{0} & L_2 & \mathbf{0} \\
\mathbf{0} & \mathbf{0} & \mathbf{0} & L_2 \\
\end{array} \hspace*{-0.05in}\right] \ 
\overline{A}_2 =
\left[\begin{array}{*{4}{@{\hspace*{0.05in}}c}}
L_4 & \mathbf{0} & \mathbf{0} & \mathbf{0} \\
\mathbf{0} & L_4 & \mathbf{0} & \mathbf{0} \\
\mathbf{0} & \mathbf{0} & L_4 & \mathbf{0} \\
\mathbf{0} & \mathbf{0} & \mathbf{0} & L_4 
\end{array} \hspace*{-0.05in}\right] \\
& \overline{A}_3 =
\left[\begin{array}{*{4}{@{\hspace*{0.05in}}c}}
L_6 & L_6-L_7 & \mathbf{0} & \mathbf{0} \\
\mathbf{0} & L_7 & \mathbf{0} & \mathbf{0} \\
\mathbf{0} & \mathbf{0} & L_6 & L_6-L_7 \\
\mathbf{0} & \mathbf{0} & \mathbf{0} & L_7  
\end{array} \hspace*{-0.05in}\right]\ 
\overline{A}_4 =
\left[\begin{array}{*{4}{@{\hspace*{0.05in}}c}}
L_8 & \mathbf{0} & \mathbf{0} & \mathbf{0} \\
L_9-L_8 & L_9 & \mathbf{0} & \mathbf{0} \\
\mathbf{0} & \mathbf{0} & L_8 & \mathbf{0} \\
\mathbf{0} & \mathbf{0} & L_9-L_8 & L_9 
\end{array} \hspace*{-0.05in}\right] \\
&\overline{A}_5 =
\left[\begin{array}{*{4}{@{\hspace*{0.05in}}c}}
L_{10} & \mathbf{0} & \mathbf{0} & \mathbf{0} \\
\mathbf{0} & L_{11} & \mathbf{0} & \mathbf{0} \\
\mathbf{0} & \mathbf{0} & L_{10} & \mathbf{0} \\
\mathbf{0} & \mathbf{0} & \mathbf{0} & L_{11}  
\end{array} \hspace*{-0.05in}\right] \  
\overline{A}_6 =
\left[\begin{array}{*{4}{@{\hspace*{0.03in}}c}}
L_{12} & \mathbf{0} & L_{12}-L_{13} & \mathbf{0} \\
\mathbf{0} & L_{12} & \mathbf{0} & L_{12}-L_{13}  \\
\mathbf{0} & \mathbf{0} & L_{13} & \mathbf{0} \\
\mathbf{0} & \mathbf{0} & \mathbf{0} & L_{13} 
\end{array} \hspace*{-0.05in}\right] \\
& \overline{A}_7 =
\left[\begin{array}{*{4}{@{\hspace*{0.03in}}c}}
L_{14} & \mathbf{0} & \mathbf{0} & \mathbf{0}  \\
\mathbf{0} & L_{14} & \mathbf{0} & \mathbf{0} \\
L_{15}-L_{14} & \mathbf{0} & L_{15} & \mathbf{0} \\
\mathbf{0} & L_{15}-L_{14} & \mathbf{0} & L_{15} 
\end{array} \hspace*{-0.05in}\right]\ 
\overline{A}_8 =
\left[\begin{array}{*{4}{@{\hspace*{0.05in}}c}}
L_{16} & \mathbf{0} & \mathbf{0} & \mathbf{0} \\
\mathbf{0} & L_{16} & \mathbf{0} & \mathbf{0} \\
\mathbf{0} & \mathbf{0} & L_{17} & \mathbf{0} \\
\mathbf{0} & \mathbf{0} & \mathbf{0} & L_{17} 
\end{array} \hspace*{-0.05in}\right]
\end{align*}
}%
and the column vectors in \eqref{eq:pceq0} are
\begin{align*}
& \widetilde{C}_0 =
\left[ \begin{array}{c}
  C_0(0)+C_0(1) \\
  C_0(2)+C_0(3) \\
  C_0(4)+C_0(5) \\
  C_0(6)+C_0(7) 
\end{array}
\right] , \quad
\widehat{C}_0 = 
\left[ \begin{array}{c}
  C_0(1) \\
  C_0(3) \\
  C_0(5) \\
  C_0(7) 
\end{array}
\right] , 
\\
& \overline{C}_i =
\left[ \begin{array}{c}
  C_i(0) \\
  C_i(2) \\
  C_i(4) \\
  C_i(6) 
\end{array}
\right]
\text{~for~} 1\le i\le 8 .
\end{align*}
Here we make an important observation: The matrices $\overline{A}_3,\overline{A}_4,\dots,\overline{A}_8$ are precisely the $6$ parity check matrices that would appear in our MSR code construction with code length $n=6$ and subpacketization $\ell=2^{6/3}=4$. The other $4$ matrices $\widetilde{A}_0,\widehat{A}_0,\overline{A}_1,\overline{A}_2$ are all block-diagonal matrices\footnote{They are block-diagonal matrices because every entry in these matrices is a column vector of length $4$.}, and the diagonal entries are the same within each matrix. Moreover, the $\lambda_i$'s (or equivalently $L_i$'s) that appear in $\widetilde{A}_0,\widehat{A}_0,\overline{A}_1,\overline{A}_2$ do not intersect with the $\lambda_i$'s that appear in $\overline{A}_3,\overline{A}_4,\dots,\overline{A}_8$.
The method we used to prove the MDS property of our MSR code construction in Section~\ref{subsect:MDS} can be easily generalized to show that \eqref{eq:pceq0} also defines an MDS array code $(\widetilde{C}_0,\widehat{C}_0,\overline{C}_1,\overline{C}_2,\dots,\overline{C}_8)$ with code length $10$ and code dimension $6$. Therefore, $\widetilde{C}_0$ and $\widehat{C}_0$ can be recovered from any $6$ vectors in the set $\{\overline{C}_1,\overline{C}_2,\dots,\overline{C}_8\}$. Once we know the values of $\widetilde{C}_0$ and $\widehat{C}_0$, we are able to recover all the coordinates of $C_0$.

{\bf Second case: How to repair $C_8$.}
In order to repair $C_8$, we sum up four pairs of block rows of each matrix $A_i$ in equation \eqref{eq:repairexample}. Specifically, we sum up $0$th block row and the $4$th block row, the $1$st block row and the $5$th block row, the $2$nd block row and the $6$th block row, the $3$rd block row and the $7$th block row. (The row index starts from $0$.) In this way, we obtain $4$ block rows of parity check equations from the original $8$ block rows of parity check equations in \eqref{eq:repairexample}. These $4$ block rows of parity check equations can be written in the matrix form
\begin{equation} \label{eq:pceq8}
\widetilde{A}_8 \widetilde{C}_8 + \widehat{A}_8 \widehat{C}_8
+ \sum_{i=0}^7 \overline{A}_i \overline{C}_i = \mathbf{0} ,
\end{equation}
where $\widetilde{A}_8, \widehat{A}_8, \overline{A}_i, 0\le i\le 7$ are all $4\times 4$ matrices, and $\widetilde{C}_8,\widehat{C}_8,\overline{C}_i, 0\le i\le 7$ are all column vectors of length $4$. More specifically, the matrices in \eqref{eq:pceq8} are
{\scriptsize
\begin{align*}
& \overline{A}_0 =
\left[\begin{array}{*{4}{@{\hspace*{0.05in}}c}}
L_0 & L_0-L_1 & \mathbf{0} & \mathbf{0} \\
\mathbf{0} & L_1 & \mathbf{0} & \mathbf{0} \\
\mathbf{0} & \mathbf{0} & L_0 & L_0-L_1 \\
\mathbf{0} & \mathbf{0} & \mathbf{0} & L_1 
\end{array} \hspace*{-0.05in}\right] \ 
\overline{A}_1 =
\left[\begin{array}{*{4}{@{\hspace*{0.05in}}c}}
L_2 & \mathbf{0} & \mathbf{0} & \mathbf{0} \\
L_3-L_2 & L_3 & \mathbf{0} & \mathbf{0} \\
\mathbf{0} & \mathbf{0} & L_2 & \mathbf{0} \\
\mathbf{0} & \mathbf{0} & L_3-L_2 & L_3 
\end{array} \hspace*{-0.05in}\right]
\\
&\overline{A}_2 =
\left[\begin{array}{*{4}{@{\hspace*{0.05in}}c}}
L_4 & \mathbf{0} & \mathbf{0} & \mathbf{0} \\
\mathbf{0} & L_5 & \mathbf{0} & \mathbf{0} \\
\mathbf{0} & \mathbf{0} & L_4 & \mathbf{0} \\
\mathbf{0} & \mathbf{0} & \mathbf{0} & L_5 
\end{array} \hspace*{-0.05in}\right] \ 
\overline{A}_3 =
\left[\begin{array}{*{4}{@{\hspace*{0.05in}}c}}
L_6 & \mathbf{0} & L_6-L_7 & \mathbf{0} \\
\mathbf{0} & L_6 & \mathbf{0} & L_6-L_7 \\
\mathbf{0} & \mathbf{0} & L_7 & \mathbf{0} \\
\mathbf{0} & \mathbf{0} & \mathbf{0} & L_7 
\end{array} \hspace*{-0.05in}\right] 
\\
&\overline{A}_4 =
\left[\begin{array}{*{4}{@{\hspace*{0.05in}}c}}
L_8 & \mathbf{0} & \mathbf{0} & \mathbf{0} \\
\mathbf{0} & L_8 & \mathbf{0} & \mathbf{0} \\
L_9-L_8 & \mathbf{0} & L_9 & \mathbf{0} \\
\mathbf{0} & L_9-L_8 & \mathbf{0} & L_9  
\end{array} \hspace*{-0.05in}\right] \ 
\overline{A}_5 =
\left[\begin{array}{*{4}{@{\hspace*{0.05in}}c}}
L_{10} & \mathbf{0} & \mathbf{0} & \mathbf{0} \\
\mathbf{0} & L_{10} & \mathbf{0} & \mathbf{0} \\
\mathbf{0} & \mathbf{0} & L_{11} & \mathbf{0} \\
\mathbf{0} & \mathbf{0} & \mathbf{0} & L_{11} 
\end{array} \hspace*{-0.05in}\right] \\
& \overline{A}_6 =
\left[\begin{array}{*{4}{@{\hspace*{0.03in}}c}}
L_{12} & \mathbf{0} & \mathbf{0} & \mathbf{0} \\
\mathbf{0} & L_{12} & \mathbf{0} & \mathbf{0} \\
\mathbf{0} & \mathbf{0} & L_{12} & \mathbf{0} \\
\mathbf{0} & \mathbf{0} & \mathbf{0} & L_{12} 
\end{array} \hspace*{-0.05in}\right] \ 
\overline{A}_7 =
\left[\begin{array}{*{4}{@{\hspace*{0.03in}}c}}
L_{15} & \mathbf{0} & \mathbf{0} & \mathbf{0} \\
\mathbf{0} & L_{15} & \mathbf{0} & \mathbf{0} \\
\mathbf{0} & \mathbf{0} & L_{15} & \mathbf{0} \\
\mathbf{0} & \mathbf{0} & \mathbf{0} & L_{15}
\end{array} \hspace*{-0.05in}\right]  
\\
&\widetilde{A}_8 =
\left[\begin{array}{*{4}{@{\hspace*{0.05in}}c}}
L_{16} & \mathbf{0} & \mathbf{0} & \mathbf{0} \\
\mathbf{0} & L_{16} & \mathbf{0} & \mathbf{0} \\
\mathbf{0} & \mathbf{0} & L_{16} & \mathbf{0} \\
\mathbf{0} & \mathbf{0} & \mathbf{0} & L_{16} 
\end{array} \hspace*{-0.05in}\right] \ 
\widehat{A}_8 =
\left[\begin{array}{*{4}{@{\hspace*{0.05in}}c}}
L_{17} & \mathbf{0} & \mathbf{0} & \mathbf{0} \\
\mathbf{0} & L_{17} & \mathbf{0} & \mathbf{0} \\
\mathbf{0} & \mathbf{0} & L_{17} & \mathbf{0} \\
\mathbf{0} & \mathbf{0} & \mathbf{0} & L_{17}
\end{array} \hspace*{-0.05in}\right] 
\end{align*}
}
and the column vectors in \eqref{eq:pceq8} are
\begin{align*}
& \overline{C}_i =
\left[ \begin{array}{c}
  C_i(0)+C_i(4) \\
  C_i(1)+C_i(5) \\
  C_i(2)+C_i(6) \\
  C_i(3)+C_i(7) 
\end{array}
\right]
\text{~for~} 0\le i\le 7 , 
\\
& \widetilde{C}_8 =
\left[ \begin{array}{c}
  C_8(0) \\
  C_8(1) \\
  C_8(2) \\
  C_8(3)
\end{array}
\right] , \quad
\widehat{C}_8 = 
\left[ \begin{array}{c}
  C_8(4) \\
  C_8(5) \\
  C_8(6) \\
  C_8(7)
\end{array}
\right] .
\end{align*}
The matrices $\overline{A}_0,\overline{A}_1,\dots,\overline{A}_5$ are precisely the $6$ parity check matrices that would appear in our MSR code construction with code length $n=6$ and subpacketization $\ell=2^{6/3}=4$. The other $4$ matrices $\widetilde{A}_8,\widehat{A}_8,\overline{A}_6,\overline{A}_7$ are all block-diagonal matrices, and the diagonal entries are the same within each matrix. Moreover, the $\lambda_i$'s (or equivalently $L_i$'s) that appear in $\widetilde{A}_8,\widehat{A}_8,\overline{A}_6,\overline{A}_7$ do not intersect with the $\lambda_i$'s that appear in $\overline{A}_0,\overline{A}_1,\dots,\overline{A}_5$.
The method we used to prove the MDS property of our MSR code construction in Section~\ref{subsect:MDS} can be easily generalized to show that \eqref{eq:pceq8} also defines an MDS array code $(\overline{C}_0,\overline{C}_1,\overline{C}_2,\dots,\overline{C}_7,\widetilde{C}_8,\widehat{C}_8)$ with code length $10$ and code dimension $6$. Therefore, $\widetilde{C}_8$ and $\widehat{C}_8$ can be recovered from any $6$ vectors in the set $\{\overline{C}_0,\overline{C}_1,\dots,\overline{C}_7\}$. Once we know the values of $\widetilde{C}_8$ and $\widehat{C}_8$, we are able to recover all the coordinates of $C_8$.

\section{MDS property} \label{MDS}

In this section, we prove that our code construction allows us to recover any $r=n-k$ node failures. We write the index set of failed nodes as $\mathcal{F}$, whose size is $|\mathcal{F}|=r$. Recall that all the nodes are divided into groups of size $3$. The failed nodes fall into different groups, and we classify these groups into the following seven sets
{\small \begin{align*}
    &\mathcal{G}_1(\mathcal{F})=\{i:0\le i\le n/3-1,3i\in\mathcal{F},3i+1\in\mathcal{F},3i+2\in\mathcal{F}\},\\
    &\mathcal{G}_2(\mathcal{F})=\{i:0\le i\le n/3-1,3i\in\mathcal{F},3i+1\in\mathcal{F},3i+2\not\in \mathcal{F}\},\\
    &\mathcal{G}_3(\mathcal{F})=\{i:0\le i\le n/3-1,3i\in\mathcal{F},3i+1\notin \mathcal{F},3i+2\in\mathcal{F}\},\\
    &\mathcal{G}_4(\mathcal{F})=\{i:0\le i\le n/3-1,3i\notin\mathcal{F},3i+1\in\mathcal{F},3i+2\in\mathcal{F}\},\\
    &\mathcal{G}_5(\mathcal{F})=\{i:0\le i\le n/3-1,3i\in \mathcal{F},3i+1\notin\mathcal{F},3i+2\notin\mathcal{F}\},\\
    &\mathcal{G}_6(\mathcal{F})=\{i:0\le i\le n/3-1,3i\notin\mathcal{F},3i+1\in \mathcal{F},3i+2\notin\mathcal{F}\},\\
    &\mathcal{G}_7(\mathcal{F})=\{i:0\le i\le n/3-1,3i\notin\mathcal{F},3i+1\notin\mathcal{F},3i+2\in \mathcal{F}\}.
\end{align*}}%
By definition, $\mathcal{G}_1(\mathcal{F})$ consists of the groups whose all $3$ nodes fail, $\mathcal{G}_2(\mathcal{F})$ consists of the groups whose first $2$ nodes fail, and so on.
\begin{definition}\label{erasurepattern}
For an integer vector $\vec{z}=(z_1,z_2,z_3,z_4,z_5,z_6,z_7)$, we say that an erasure pattern $\mathcal{F}$ is of type $\vec{z}=(z_1,z_2,z_3,z_4,z_5,z_6,z_7)$ if $|\mathcal{G}_i(\mathcal{F})|=z_i$ for $1\le i \le 7$.
\end{definition}
From the set of all erasure patterns of type $\vec{z}$,
we pick the only erasure pattern $\mathcal{F}(\vec{z})$ that satisfies
\begin{equation} \label{eq:Gi}
{\small\begin{aligned}
& \mathcal{G}_1(\mathcal{F}(\vec{z}))=\{j: 0\le j< z_1 \}, \\
&
\begin{aligned}
	\mathcal{G}_i(\mathcal{F}(\vec{z}))=\{j:z_1+z_2+\dots+z_{i-1}\le j < z_1+z_2+\dots+z_i\} & \\ \text{~for~} 2\le i\le 7,&
\end{aligned} 
\end{aligned}}
\end{equation}
and call it the canonical erasure patterns of type $\vec{z}$.
\emph{In Appendix,  we show that if we can recover from the canonical erasure pattern $\mathcal{F}(\vec{z})$, 
then we can recover from all erasure patterns of type $\vec{z}$.}  
It is easy to see that if $\mathcal{F}$ is of type $\vec{z}=(z_1,z_2,z_3,z_4,z_5,z_6,z_7)$, then $|\mathcal{F}|=3z_1+2z_2+2z_3+2z_4+z_5+z_6+z_7$.
Therefore, in order to prove the MDS property, we only need to show that we can recover from the canonical erasure pattern $\mathcal{F}(\vec{z})$ for all $\vec{z}=(z_1,z_2,z_3,z_4,z_5,z_6,z_7)$ satisfying 
\begin{equation} \label{eq:wsmr}
3z_1+2z_2+2z_3+2z_4+z_5+z_6+z_7=r.
\end{equation}

Now let us pick a vector $\vec{z}$ satisfying \eqref{eq:wsmr}. 
To prove that we can recover from $\mathcal{F}(\vec{z})$, we only need to show that if $C_i=\mathbf{0}$ for every $i\notin \mathcal{F}(\vec{z})$, then $C_i=0$ for every $i\in\mathcal{F}(\vec{z})$ is the only solution to the parity check equations \eqref{eq:pc}. When $C_i=\mathbf{0}$ for every $i\notin \mathcal{F}(\vec{z})$, the equations \eqref{eq:pc} can be written in the following matrix form
\begin{equation} \label{eq:pc_redu}
\sum_{i\in\mathcal{F}(\vec{z})} A_i C_i = \mathbf{0} ,
\end{equation}
where each $A_i$ is a $r\ell \times \ell$ matrix defined by \eqref{pcmt}, and $\mathbf{0}$ on the right-hand side is the all-zero column vector of length $r\ell$.

For $0\le i\le n-1$ and $1\le u\le \ell$, we define $A_i^{(u)}$ as the $u r\times u$ submatrix at the top left corner of $A_i$. 
Let us take the matrices $A_0,A_1,\dots,A_8$ defined in Section~\ref{example} as examples: For these matrices, we have
\begin{align*}
&A_0^{(2)}=\begin{bmatrix} L_0 & L_0-L_1 \\ \mathbf{0} & L_1 \end{bmatrix} , \quad
A_1^{(2)}=\begin{bmatrix} L_2 & \mathbf{0} \\ L_3-L_2 & L_3 \end{bmatrix} , \\
&A_3^{(4)}=\begin{bmatrix}
	L_6 & \mathbf{0} & L_6-L_7 & \mathbf{0} \\
	\mathbf{0} & L_6 & \mathbf{0} & L_6-L_7 \\
	\mathbf{0} & \mathbf{0} & L_7 & \mathbf{0} \\
	\mathbf{0} & \mathbf{0} & \mathbf{0} & L_7
\end{bmatrix} ,	
\end{align*}
where $L_i$ is a column vector of length $r$.

According to \eqref{pcmt}, for $0\le i\le 2$, we have 
$$
A_i C_i = \left[ \begin{array}{c}
A_i^{(2)} (C_i(0),C_i(1))^T \\
A_i^{(2)} (C_i(2),C_i(3))^T \\
\vdots \\
A_i^{(2)} (C_i(\ell-2),C_i(\ell-1))^T
\end{array} \right] ,
$$
where $A_i^{(2)}(C_i(2j),C_i(2j+1))^T$ is a column vector of length $2r$. For $0\le i\le 5$, we have
{\small\begin{align*}
	A_i C_i = \left[ \begin{array}{c}
		A_i^{(4)} (C_i(0),C_i(1),C_i(2),C_i(3))^T \\
		A_i^{(4)} (C_i(4),C_i(5),C_i(6),C_i(7))^T \\
		\vdots \\
		A_i^{(4)} (C_i(\ell-4),C_i(\ell-3),C_i(\ell-2),C_i(\ell-1))^T
	\end{array} \right] .
\end{align*}}
In general, for $1\le v\le n/3$ and $0\le i\le 3v-1$, we have
{\footnotesize \begin{equation} \label{eq:a2u}
A_i C_i = \left[ \begin{array}{c}
A_i^{(2^v)} (C_i(0),C_i(1),\dots,C_i(2^v-1))^T \\
A_i^{(2^v)} (C_i(2^v),C_i(2^v+1),\dots,C_i(2^{v+1}-1))^T \\
\vdots \\
A_i^{(2^v)} (C_i(\ell-2^v),C_i(\ell-2^v+1),\dots,C_i(\ell-1))^T
\end{array} \right] .
\end{equation}}

For a vector $\vec{z}=(z_1,z_2,z_3,z_4,z_5,z_6,z_7)$ satisfying \eqref{eq:wsmr}, we write $z=z_1+z_2+\dots+z_7$. 
Let us write the set $\mathcal{F}(\vec{z})$ as $\mathcal{F}(\vec{z})=\{i_1,i_2,\dots,i_r\}$, where $i_1<i_2<\dots<i_r$, and we define a $2^z r\times 2^z r$ matrix $M_{\vec{z}}$ as follows: 
\begin{equation}  \label{eq:mzv}
M_{\vec{z}} = \left[ A_{i_1}^{(2^z)} \quad A_{i_2}^{(2^z)} \quad \dots \quad A_{i_r}^{(2^z)} \right] .
\end{equation}
\begin{lemma}
For a vector $\vec{z}$ satisfying \eqref{eq:wsmr}, the equation \eqref{eq:pc_redu} only has zero solution if and only if the matrix $M_{\vec{z}}$ is invertible.
\end{lemma}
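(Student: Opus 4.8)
The plan is to show that the linear system \eqref{eq:pc_redu}, after a permutation of its rows and columns, is block-diagonal with $\ell/2^z$ identical diagonal blocks, each equal to $M_{\vec z}$. Once this is established the equivalence is immediate: a block-diagonal matrix with identical blocks is invertible if and only if one block is, so \eqref{eq:pc_redu} has only the zero solution precisely when $M_{\vec z}$ is invertible.

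First I would rewrite \eqref{eq:pc_redu} as $\bigl[A_{i_1}\ A_{i_2}\ \cdots\ A_{i_r}\bigr]\bigl(C_{i_1}^T,\dots,C_{i_r}^T\bigr)^T=\mathbf 0$; since each $A_i$ is $r\ell\times\ell$ and $|\mathcal F(\vec z)|=r$, the coefficient matrix is square of order $r\ell$, so \eqref{eq:pc_redu} has only the zero solution if and only if this matrix is invertible, and it then remains to relate its invertibility to that of $M_{\vec z}$.

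The key observation is that every index occurring in $\mathcal F(\vec z)$ is bounded by $3z-1$. Indeed, by the definition \eqref{eq:Gi} of the canonical erasure pattern, the failed groups are exactly the first $z=z_1+\dots+z_7$ groups, so each $i\in\mathcal F(\vec z)$ lies in a group of index strictly less than $z$, whence $i\le 3z-1$; moreover $z\le n/3$, since the $z_i$ count disjoint groups among the $n/3$ groups in total. Therefore $v=z$ is an admissible choice in \eqref{eq:a2u}, and applying \eqref{eq:a2u} with $v=z$ to each term $A_iC_i$ with $i\in\mathcal F(\vec z)$ exhibits $A_iC_i$ as the vertical stack of the $\ell/2^z$ vectors $A_i^{(2^z)}\bigl(C_i(m2^z),\dots,C_i((m+1)2^z-1)\bigr)^T$, $m=0,1,\dots,\ell/2^z-1$. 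Summing over $i\in\mathcal F(\vec z)$, the $r\ell$ scalar equations in \eqref{eq:pc_redu} split into $\ell/2^z$ decoupled groups, the $m$-th of which is $\sum_{i\in\mathcal F(\vec z)}A_i^{(2^z)}\bigl(C_i(m2^z),\dots,C_i((m+1)2^z-1)\bigr)^T=\mathbf 0$, that is, $M_{\vec z}\,\vec c_m=\mathbf 0$ for the vector $\vec c_m$ formed by stacking, in order, the $m$-th length-$2^z$ chunks of $C_{i_1},\dots,C_{i_r}$; see \eqref{eq:mzv}. Distinct values of $m$ involve disjoint unknowns, so \eqref{eq:pc_redu} is equivalent to the conjunction of these $\ell/2^z$ subsystems, and the lemma follows as indicated in the first paragraph.

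The argument is essentially structural, so I do not expect a genuine obstacle; the only point requiring a little care is the bookkeeping of the row/column permutation realizing the block-diagonal form, which is a direct consequence of the explicit block decomposition \eqref{eq:a2u}. The substantive content — that $M_{\vec z}$ is in fact invertible for every admissible $\vec z$ — is a separate matter, to be handled in a subsequent lemma.
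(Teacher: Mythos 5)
Your proposal is correct and follows essentially the same route as the paper: bound the failed indices by $3z-1$ via the canonical pattern \eqref{eq:Gi}, apply the decomposition \eqref{eq:a2u} with $v=z$, and split \eqref{eq:pc_redu} into $\ell/2^z$ decoupled copies of $M_{\vec z}\vec c_m=\mathbf 0$ on disjoint unknowns. The extra bookkeeping (square size $r\ell$, the check $z\le n/3$, the block-diagonal framing) is fine but adds nothing beyond the paper's argument.
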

\begin{proof}
By definition \eqref{eq:Gi}, we have $i\le 3z-1$ for all $i\in \mathcal{F}(\vec{z})$. Therefore, we can use \eqref{eq:a2u} to decompose \eqref{eq:pc_redu} into the following $\ell/2^z$ equations:
{\footnotesize \begin{align*}
\sum_{i\in\mathcal{F}(\vec{z})} A_i^{(2^z)} (C_i(2^z \cdot j),C_i(2^z \cdot j+1),\dots,C_i(2^{z} (j+1) -1))^T = \mathbf{0}&
\\
\text{for~} 0\le j\le \ell/2^z-1 .&
\end{align*}}%
It is clear that these $\ell/2^z$ equations only have zero solution if and only if $M_{\vec{z}}$ is invertible.
\end{proof}

Next we prove that the matrix $M_{\vec{z}}$ is invertible for all $\vec{z}$ satisfying \eqref{eq:wsmr}, and we divide the proof into seven cases.

\textbf{Case 1: $\vec{z}=(z_1,0,0,0,0,0,0)$.}
In this case, \eqref{eq:wsmr} implies that $z_1=r/3$ and $\mathcal{F}(\vec{z})=\{0,1,2,\dots,r-1\}$, and the size of $M_{\vec{z}}$ is $2^{z_1} r\times 2^{z_1} r$.
In order to prove that $M_{\vec{z}}$ is invertible, let us consider a column vector 
\begin{equation} \label{eq:ydb}
\begin{aligned}
	\vec{y}=(&y_{0,0},y_{0,1}\cdots ,y_{0,2^{z_1}-1},y_{1,0},y_{1,1}\cdots ,y_{1,2^{z_1}-1},\cdots,\\
	&y_{r-1,0},y_{r-1,1},\cdots,y_{r-1,2^{z_1}-1})^T
\end{aligned}
\end{equation}
of length $2^{z_1} r$. Next we assume $M_{\vec{z}}~\vec{y}=\mathbf{0}$, and we will prove that $\vec{y}=\mathbf{0}$ is the only solution.
The proof is divided into two steps: The first step is to prove that the following set of coordinates
\begin{equation} \label{eq:st1}
	\begin{aligned}
		&\{y_{3i,b}: 0\le i\le z_1-1, 0\le b\le 2^{z_1}-1, b_i=1\}\\
		\cup &\{y_{3i+1,b}: 0\le i\le z_1-1, 0\le b\le 2^{z_1}-1, b_i=0\}
	\end{aligned}
\end{equation}
are all zero, and the second step is to prove that the remaining coordinates are all zero. The condition $b_i=1$ (or $0$) above means that the $i$-th digit in the binary expansion of $b$ is $1$ (or $0$).

For the first step, we define a sequence of polynomials
\begin{equation} \label{eq:gai}
    g_{a,i}(x)=\begin{cases}
        (x-\lambda_{6i})(x-\lambda_{6i+4}), \text{~if~} a_i= 0 , \\
        (x-\lambda_{6i+3})(x-\lambda_{6i+5}), \text{~if~} a_i=1, \\
    \end{cases}
\end{equation}
for $0\le i\le z_1-1$ and $0\le a\le 2^{z_1}-1$. We define another sequence of polynomials
\begin{equation} \label{eq:defa}
f_a(x)=\prod_{i=0}^{z_1-1}g_{a,i}(x)
\end{equation}
for $0\le a\le 2^{z_1}-1$. Since each $g_{a,i}$ is a quadratic polynomial, the degree of $f_a$ is $2 z_1$, and we write its coefficients as $f_{a,0},f_{a,1},\dots,f_{a,2z_1}$, i.e., we write $f_a(x)=f_{a,0}+f_{a,1}x+\cdots+f_{a,2z_1}x^{2z_1}=\sum_{i=0}^{2z_1}f_{a,i}x^i$.

For each $0\le a\le 2^{z_1}-1$, we define an $(r-2z_1)\times r=z_1\times r$ matrix $F_a$ as
{\footnotesize$$
\left[\begin{array}{*{12}{@{\hspace*{0.05in}}c}}
f_{a,0}&f_{a,1}&\cdots&\cdots&\cdots&\cdots&\cdots&f_{a,2z_1}&0&0&\cdots&0\\
    0&f_{a,0}&f_{a,1}&\cdots&\cdots&\cdots&\cdots&\cdots&f_{a,2z_1}&0&\cdots&0\\
    \vdots&\vdots&\vdots&\vdots&\vdots&\vdots&\vdots&\vdots&\vdots&\vdots&\vdots&\vdots \\
    0&0&\cdots&0&f_{a,0}&f_{a,1}&\cdots&\cdots&\cdots&\cdots&\cdots&f_{a,2z_1}
\end{array}\hspace*{-0.05in}\right].
$$}

Next we define a $2^{z_1} z_1\times 2^{z_1} r$ matrix $F$ as
$$
F := \begin{bmatrix}
    F_0 & 0 & 0 & \cdots & 0 \\
    0 & F_1 & 0 & \cdots & 0 \\
    0 & 0 & F_2 & \cdots & 0 \\
    \vdots & \vdots & \vdots & \vdots & \vdots \\
    0 & 0 & 0 & \cdots & F_{2^{z_1}-1} 
\end{bmatrix} ,
$$
where each $0$ is a $z_1\times r$ matrix with all zero entries.

\begin{figure*}[htbp]
	\begin{equation}\label{eq:add1}
		{\footnotesize \begin{aligned}
				M_{\vec{z}}=
				\left[\begin{array}{c @{\hspace{0.8ex}} c @{\hspace{0ex}} c@{\hspace{0.8ex}} c@{\hspace{0ex}} c@{\hspace{0.8ex}} c@{\hspace{0.8ex}} c@{\hspace{0.8ex}} c@{\hspace{0ex}} c@{\hspace{0ex}} c@{\hspace{0ex}} c@{\hspace{0ex}} c@{\hspace{0ex}} c@{\hspace{0ex}} c@{\hspace{0.4ex}} c@{\hspace{0ex}} c@{\hspace{0.2ex}} c@{\hspace{0.2ex}} c@{\hspace{0.2ex}} c@{\hspace{0ex}}c@{\hspace{0ex}} c@{\hspace{0ex}} c@{\hspace{0ex}} c@{\hspace{0ex}}c}
					L_0&L_0-L_1&\mathbf{0}&\mathbf{0} &L_2&\mathbf{0}&\mathbf{0}&\mathbf{0} &L_4&\mathbf{0}&\mathbf{0}&\mathbf{0} &L_6&\mathbf{0}&L_6-L_7&\mathbf{0} &L_8&\mathbf{0}&\mathbf{0}&\mathbf{0} & L_{10} & \mathbf{0} & \mathbf{0} & \mathbf{0}\\
					\mathbf{0}&L_1&\mathbf{0}&\mathbf{0} &L_3-L_2&L_3&\mathbf{0}&\mathbf{0} &\mathbf{0}&L_5&\mathbf{0}&\mathbf{0} &\mathbf{0}&L_6&\mathbf{0}&L_6-L_7 &\mathbf{0}&L_8&\mathbf{0}&\mathbf{0} & \mathbf{0} & L_{10} & \mathbf{0} & \mathbf{0}\\
					\mathbf{0}&\mathbf{0}&L_0&L_0-L_1 &\mathbf{0}&\mathbf{0}&L_2&\mathbf{0} &\mathbf{0}&\mathbf{0}&L_4&\mathbf{0} &\mathbf{0}&\mathbf{0}&L_7&\mathbf{0} &L_9-L_8&\mathbf{0}&L_9&\mathbf{0} & \mathbf{0} & \mathbf{0} & L_{11} & \mathbf{0}\\
					\mathbf{0}&\mathbf{0}&\mathbf{0}&L_1 &\mathbf{0}&\mathbf{0}&L_3-L_2&L_3 &\mathbf{0}&\mathbf{0}&\mathbf{0}&L_5 &\mathbf{0}&\mathbf{0}&\mathbf{0}&L_7 &\mathbf{0}&L_9-L_8&\mathbf{0}&L_9 & \mathbf{0} & \mathbf{0} & \mathbf{0} & L_{11}
				\end{array}\right] 
		\end{aligned}}
	\end{equation}
\end{figure*}

\begin{figure*}[htbp]
	\begin{equation}\label{eq:add2}
		{\footnotesize \begin{aligned}
				F M_{\vec{z}}= 
				\left[\begin{array}{c@{\hspace{0.8ex}}c@{\hspace{0.8ex}}c@{\hspace{0.8ex}}c@{\hspace{0.8ex}}c@{\hspace{0.8ex}}c@{\hspace{0.8ex}}c@{\hspace{0.8ex}}c@{\hspace{0ex}}c@{\hspace{0ex}}c@{\hspace{0ex}}c@{\hspace{0ex}}c@{\hspace{0ex}}c@{\hspace{0ex}}c@{\hspace{0.8ex}}c@{\hspace{0.8ex}}c@{\hspace{0.8ex}}c@{\hspace{0.8ex}}c@{\hspace{0.8ex}}c@{\hspace{0ex}}c@{\hspace{0ex}}c@{\hspace{0ex}}c@{\hspace{0ex}}c@{\hspace{0ex}}c}
					\mathbf{0}&-f_0(\lambda_1)L_1'&\mathbf{0}&\mathbf{0} &f_0(\lambda_2)L_2'&\mathbf{0}&\mathbf{0}&\mathbf{0} &\mathbf{0}&\mathbf{0}&\mathbf{0}&\mathbf{0} &\mathbf{0}&\mathbf{0}&-f_0(\lambda_7)L_7'&\mathbf{0} &f_0(\lambda_8)L_8'&\mathbf{0}&\mathbf{0}&\mathbf{0} & \mathbf{0} & \mathbf{0} & \mathbf{0} & \mathbf{0}\\
					\mathbf{0}&f_1(\lambda_1)L_1'&\mathbf{0}&\mathbf{0} &-f_1(\lambda_2)L_2'&\mathbf{0}&\mathbf{0}&\mathbf{0} &\mathbf{0}&\mathbf{0}&\mathbf{0}&\mathbf{0} &\mathbf{0}&\mathbf{0}&\mathbf{0}&-f_1(\lambda_7)L_7' &\mathbf{0}&f_1(\lambda_8)L_8'&\mathbf{0}&\mathbf{0} & \mathbf{0} & \mathbf{0} & \mathbf{0} & \mathbf{0}\\
					\mathbf{0}&\mathbf{0}&\mathbf{0}&-f_2(\lambda_1)L_1' &\mathbf{0}&\mathbf{0}&f_2(\lambda_2)L_2'&\mathbf{0} &\mathbf{0}&\mathbf{0}&\mathbf{0}&\mathbf{0} &\mathbf{0}&\mathbf{0}&f_2(\lambda_7)L_7'&\mathbf{0} &-f_2(\lambda_8)L_8'&\mathbf{0}&\mathbf{0}&\mathbf{0} & \mathbf{0} & \mathbf{0} & \mathbf{0} & \mathbf{0}\\
					\mathbf{0}&\mathbf{0}&\mathbf{0}&f_3(\lambda_1)L_1' &\mathbf{0}&\mathbf{0}&-f_3(\lambda_2)L_2'&\mathbf{0} &\mathbf{0}&\mathbf{0}&\mathbf{0}&\mathbf{0} &\mathbf{0}&\mathbf{0}&\mathbf{0}&f_3(\lambda_7)L_7' &\mathbf{0}&-f_3(\lambda_8)L_8'&\mathbf{0}&\mathbf{0} & \mathbf{0} & \mathbf{0} & \mathbf{0} & \mathbf{0}
				\end{array}\right] 
		\end{aligned}}
	\end{equation}
\end{figure*}

\begin{figure*}[htbp]
	\begin{equation}\label{eq:add3}
		{\footnotesize
			\begin{aligned}
				Q=
				\left[\begin{array}{cccccccc}
					-f_0(\lambda_1)L_1' & \mathbf{0} & f_0(\lambda_2)L'_2 & \mathbf{0} & -f_0(\lambda_7)L'_7 & \mathbf{0} & f_0(\lambda_8)L'_8 & \mathbf{0} \\
					f_1(\lambda_1)L'_1 & \mathbf{0} & -f_1(\lambda_2)L'_2 & \mathbf{0} & \mathbf{0} & -f_1(\lambda_7)L'_7 & \mathbf{0} & f_1(\lambda_8)L'_8 \\
					\mathbf{0} & -f_2(\lambda_1)L'_1 & \mathbf{0} & f_2(\lambda_2)L'_2 & f_2(\lambda_7)L'_7 & \mathbf{0} & -f_2(\lambda_8)L'_8 & \mathbf{0} \\
					\mathbf{0} & f_3(\lambda_1)L'_1 & \mathbf{0} & -f_3(\lambda_2)L'_2 & \mathbf{0} & f_3(\lambda_7)L'_7 & \mathbf{0} & -f_3(\lambda_8)L'_8 
				\end{array}\right] 
			\end{aligned}
		}
	\end{equation}
\end{figure*}

The equation $M_{\vec{z}}~\vec{y}=\mathbf{0}$ implies that $F M_{\vec{z}}~\vec{y}=\mathbf{0}$. The product $F M_{\vec{z}}$ is a $2^{z_1} z_1\times 2^{z_1} r$ matrix, and we represent it in the following form
\begin{equation} \label{eq:fmz}
F M_{\vec{z}} = \begin{bmatrix}
        B_0 & B_1 & \cdots & B_{r-1}
    \end{bmatrix} ,
\end{equation}
where each $B_i$ is a $2^{z_1} z_1\times 2^{z_1}$ matrix. We further represent each matrix $B_i$ as
{\small $$
\begin{bmatrix}
    B_i(0,0) & B_i(0,1) & \cdots & B_i(0,2^{z_1}-1) \\
    B_i(1,0) & B_i(1,1) & \cdots & B_i(1,2^{z_1}-1) \\
    \vdots & \vdots & \vdots & \vdots \\
    B_i(2^{z_1}-1,0) & B_i(2^{z_1}-1,1) & \cdots & B_i(2^{z_1}-1,2^{z_1}-1) 
\end{bmatrix} ,
$$}%
where each $B_i(a,b)$ is a column vector of length $z_1$ for $0\le i\le r-1$ and $0\le a,b\le 2^{z_1}-1$. More precisely, for every $0\le i\le z_1-1$ and $0\le a,b\le 2^{z_1}-1$, we have
\begin{equation} \label{eq:sks}
{\small \begin{aligned}
B_{3i}(a,b) &=\left\{
\begin{array}{ll}
  f_a(\lambda_{6i+1})L'_{6i+1}   & \mbox{if~} a=b~\text{and}~a_i=b_i=1, \\
  -f_a(\lambda_{6i+1})L'_{6i+1}   & \mbox{if~} a_i=0, b_i=1 \\ &\mbox{and~} a_j=b_j \,\forall j\neq i, \\
  \mathbf{0} & \mbox{otherwise},
\end{array}
\right.  \\
B_{3i+1}(a,b) &=\left\{
\begin{array}{ll}
  f_a(\lambda_{6i+2})L'_{6i+2}   & \mbox{if~} a=b~\text{and}~a_i=b_i=0, \\
  -f_a(\lambda_{6i+2})L'_{6i+2}   & \mbox{if~} a_i=1, b_i=0 \\ &\mbox{and~} a_j=b_j \,\forall j\neq i,  \\
  \mathbf{0} & \mbox{otherwise},
\end{array}
\right. \\
B_{3i+2}(a,b) &= \mathbf{0} \qquad \quad \mbox{for all~} a \mbox{~and~} b ,
\end{aligned}}
\end{equation}
where the length of each all-zero vector is $z_1$, and 
\begin{equation} \label{eq:cvn}
L'_{6i+1} = \begin{bmatrix}
1 \\
\lambda_{6i+1} \\
\lambda_{6i+1}^2 \\
\vdots \\
\lambda_{6i+1}^{z_1-1}
\end{bmatrix},~
L'_{6i+2} = \begin{bmatrix}
1 \\
\lambda_{6i+2} \\
\lambda_{6i+2}^2 \\
\vdots \\
\lambda_{6i+2}^{z_1-1}
\end{bmatrix}
\text{~~for~} 0\le i\le z_1-1.
\end{equation}
From \eqref{eq:sks} we can see that $B_{3i+2}$ is an all-zero matrix for every $0\le i\le z_1-1$. Moreover, $B_{3i}$ and $B_{3i+1}$ have exactly half of their columns to be nonzero: The $b$-th column of $B_{3i}$ is nonzero if and only if $b_i=1$; the $b$-th column of $B_{3i+1}$ is nonzero if and only if $b_i=0$. Therefore, the matrix $F M_{\vec{z}}$ in \eqref{eq:fmz} has  $2^{z_1} z_1$ nonzero columns, and these nonzero columns are multiplied with the coordinates in \eqref{eq:st1}.
We use $Q$ to denote the $2^{z_1} z_1 \times 2^{z_1} z_1$ matrix formed by the $2^{z_1} z_1$ nonzero columns of $F M_{\vec{z}}$, and we use $\vec{y}^{(1)}$ to denote the subvector of $\vec{y}$ formed by the $2^{z_1} z_1$ coordinates in \eqref{eq:st1}. Then the equation $F M_{\vec{z}}~\vec{y}=\mathbf{0}$ is equivalent to $Q \vec{y}^{(1)}=\mathbf{0}$. In Lemma~\ref{lm:det}, we prove that $Q$ is invertible, which immediately implies that the coordinates in \eqref{eq:st1} are all zero. Before presenting Lemma~\ref{lm:det}, we first give an example of the matrices $M_{\vec{z}},FM_{\vec{z}}$ and $Q$.

\begin{example} \label{ex:1}
Let us take $\vec{z}=(2,0,0,0,0,0,0)$. Then $r=3z_1=6$. In this case, $M_{\vec{z}}$ is matrix \eqref{eq:add1}, 
where each $L_i$ is a column vector of length $r=6$, and each $\mathbf{0}$ is an all-zero column vector of length $r=6$. After multiplying $F$ with $M_{\vec{z}}$, we obtain equation \eqref{eq:add2}, 
where each $L_i'$ is a column vector of length $2$, and each $\mathbf{0}$ is an all-zero column vector of length $2$. Therefore, $Q$ is matrix \eqref{eq:add3}. 
\end{example}

\begin{lemma} \label{lm:det}
Let $Q$ be the $2^{z_1} z_1 \times 2^{z_1} z_1$ matrix formed by the $2^{z_1} z_1$ nonzero columns of $F M_{\vec{z}}$. Then $Q$ is invertible. 
\end{lemma}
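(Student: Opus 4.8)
The goal is to show $\det Q\neq 0$. The plan is the one announced in the introduction: compute $\det Q$ explicitly as a polynomial in $\lambda_0,\dots,\lambda_{2n-1}$, strip off as many polynomial factors as possible by elementary column operations, observe that every factor is nonzero under \eqref{eq:cond} together with the distinctness of the $\lambda_i$, and then pin down $\det Q$ as the product of these factors up to a sign $c\in\{-1,1\}$ by comparing one convenient monomial on both sides.

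First I would dispose of the base case $z_1=1$, since it already shows how \eqref{eq:cond} enters. Here $r=3$, $\mathcal{F}(\vec z)=\{0,1,2\}$, the vectors $L'_{\cdot}$ in \eqref{eq:cvn} have length $1$, and \eqref{eq:sks} gives
\[
Q=\begin{bmatrix} -f_0(\lambda_1) & f_0(\lambda_2)\\ f_1(\lambda_1) & -f_1(\lambda_2)\end{bmatrix},
\]
with $f_0(x)=(x-\lambda_0)(x-\lambda_4)$ and $f_1(x)=(x-\lambda_3)(x-\lambda_5)$, so
\[
\det Q=f_0(\lambda_1)f_1(\lambda_2)-f_0(\lambda_2)f_1(\lambda_1)=(\lambda_1-\lambda_0)(\lambda_1-\lambda_4)(\lambda_2-\lambda_0)(\lambda_2-\lambda_4)\,(\gamma_1-\gamma_2)\neq 0
\]
because the $\lambda_i$ are distinct and $\gamma_1\neq\gamma_2$ by \eqref{eq:cond}. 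This $2\times 2$ determinant is the template that will reappear in the general case.

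For general $z_1$ the bulk of the work is the column operations, which I would organize using the support pattern of $Q$ visible in \eqref{eq:sks}: every column of $Q$ belongs to some group $i$ and is supported on exactly the two row blocks $a$ and $a\oplus 2^{i}$ differing only in bit $i$, with entries that are scalar multiples of $L'_{6i+1}$ or $L'_{6i+2}$. The first batch of operations would factor out of each such column the scalar $\prod_{j\neq i}g_{a,j}(\lambda_{6i+1})$ (or its analogue with $\lambda_{6i+2}$); this is legitimate because the two supported blocks agree in every bit $j\neq i$, so this product is common to both nonzero entries of the column, and it is a product of terms of the form $\lambda_{6i+1}-\lambda_{6j},\lambda_{6i+1}-\lambda_{6j+3},\lambda_{6i+1}-\lambda_{6j+4},\lambda_{6i+1}-\lambda_{6j+5}$ with $j\neq i$, hence nonzero. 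The second batch would use the Vandermonde shape of the $L'_{\cdot}$ inside each block to pull out Vandermonde determinants (nonzero since the $\lambda_i$ are distinct), and, by combining the two columns attached to each pair of blocks (a hypercube edge in some direction $i$), expose $2\times 2$ determinants of exactly the base-case shape $g_0(\lambda_{6i+1})g_1(\lambda_{6i+2})-g_0(\lambda_{6i+2})g_1(\lambda_{6i+1})$, each proportional to $\gamma_{6i+1}-\gamma_{6i+2}$ and hence nonzero by \eqref{eq:cond}. Carrying this out over all $z_1$ groups and all $2^{z_1}$ blocks would express $\det Q$ as a nonzero scalar times a product of factors of only three kinds — Vandermonde determinants, differences $\lambda_a-\lambda_b$, and differences $\gamma_{6i+1}-\gamma_{6i+2}$ — all nonzero. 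To rule out hidden cancellation, I would then compare this product with the expansion of $\det Q$ at a single well-chosen monomial — most cleanly the transversal picking, in each block, the ``diagonal'' entries $f_a(\lambda_{6i+1})$ and $f_a(\lambda_{6i+2})$ of \eqref{eq:sks} — and check agreement up to sign; with a matching of total degrees this forces $\det Q=c\prod(\text{factors})$ with $c\in\{-1,1\}$, and hence $\det Q\neq 0$.

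The hard part will be the bookkeeping in these column operations: keeping track of which operation produces which factor across the $2^{z_1}$ row blocks and $z_1$ groups, getting the multiplicity of every extracted factor right (the supports form a hypercube-type incidence pattern, so both the Vandermonde and the $\gamma$-difference factors show up with nontrivial exponents), and making the final monomial comparison precise enough to fix the overall sign $c$. This is exactly the step the introduction singles out as the main technical obstacle of the paper, and the role of condition \eqref{eq:cond} is precisely to guarantee that all the $\gamma_{6i+1}-\gamma_{6i+2}$ factors are nonzero.
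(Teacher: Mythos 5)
Your overall strategy---view $\det Q$ as a polynomial, strip off factors by elementary column operations, invoke the distinctness of the $\lambda_i$'s and condition \eqref{eq:cond} for nonvanishing, and pin down the remaining constant $c\in\{-1,1\}$ by comparing a single monomial---is exactly the paper's, and your base case $z_1=1$ is computed correctly. However, for general $z_1$ the part you defer as ``bookkeeping'' is the actual content of the lemma, and the specific mechanism you propose for producing the $\gamma$-factors does not work as stated. After the column rescaling (your first batch, which matches the paper's passage from $Q$ to $\overline{Q}$), the two nonzero block entries of a combined column $\overline{D}_{2i}(b)-\overline{D}_{2i+1}(b)$ are $L'_{6i+1}-L'_{6i+2}$ and $\gamma_{6i+1}L'_{6i+1}-\gamma_{6i+2}L'_{6i+2}$; these have \emph{no} common factor $\gamma_{6i+1}-\gamma_{6i+2}$, and since columns belonging to different groups are intertwined across all $2^{z_1}$ row blocks of the hypercube, the matrix does not decompose so as to ``expose $2\times 2$ determinants of the base-case shape.'' The paper must argue in two stages. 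First, cross-group column operations (combining columns of $\overline{D}_{2i}$ with columns of $\overline{D}_{2j}$, $i\neq j$, as in \eqref{eq:dmd}) produce columns whose nonzero blocks are multiples of $L'_{6i+1}-L'_{6j+1}$, giving the factors $(\lambda_{6i+1}-\lambda_{6j+1})^{2^{z_1-2}}$ and their three analogues; a degree count in each $\lambda_{6i+1},\lambda_{6i+2}$ (at most $(z_1-1)2^{z_1-1}$) then forces the separation $\det(\overline{Q})=h_1\cdot h_2$ of \eqref{eq:fxt} with $h_2$ free of all $\lambda$'s. Only after this separation can one specialize $\lambda_{6i+1}=\lambda_{6i+2}$, where the combined column above does become a multiple of $(\gamma_{6i+1}-\gamma_{6i+2})L'_{6i+1}$, extract $(\gamma_{6i+1}-\gamma_{6i+2})^{2^{z_1-1}}$, and use the $\lambda$-freeness of $h_2$ to transfer this factor back to the unspecialized determinant; a second degree count in the $\gamma$'s plus the monomial comparison then yields \eqref{eq:h2} with $c=\pm 1$.

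So your sketch points in the right direction but has a genuine gap: it is missing the separation-of-variables argument (the degree bounds and the specialization $\lambda_{6i+1}=\lambda_{6i+2}$) without which the $\gamma$-difference factors cannot be extracted at all, and it does not establish the exact multiplicities $2^{z_1-2}$ and $2^{z_1-1}$ that make the factor list exhaust the degree of $\det(\overline{Q})$; without those two ingredients the final monomial comparison cannot rule out additional polynomial factors, so nonvanishing does not follow.
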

\begin{proof}
We write $Q$ in the form 
$$
Q = \begin{bmatrix}
        D_0 & D_1 & D_2 & \cdots & D_{2z_1-1}
    \end{bmatrix} ,
$$
where each $D_i$ is a $2^{z_1} z_1\times 2^{z_1-1}$ matrix. We further write each matrix $D_i$ as
\begin{equation}  \label{eq:bsl}
{\footnotesize \begin{bmatrix}
    D_i(0,0) & D_i(0,1) & \cdots & D_i(0,2^{z_1-1}-1) \\
    D_i(1,0) & D_i(1,1) & \cdots & D_i(1,2^{z_1-1}-1) \\
    \vdots & \vdots & \vdots & \vdots \\
    D_i(2^{z_1}-1,0) & D_i(2^{z_1}-1,1) & \cdots & D_i(2^{z_1}-1,2^{z_1-1}-1) 
\end{bmatrix} ,}
\end{equation}
where each $D_i(a,b)$ is a column vector of length $z_1$ for $0\le i\le 2z_1-1, 0\le a\le 2^{z_1}-1$ and $0\le b\le 2^{z_1-1}-1$. More precisely, for every $0\le i\le z_1-1, 0\le a\le 2^{z_1}-1$ and every $0\le b\le 2^{z_1-1}-1$, we have equations \eqref{eq:add5}, 
\begin{figure*}
	\begin{equation}\label{eq:add5}
		\begin{aligned}
			D_{2i}(a,b)=\left\{\begin{array}{ll}
				-f_a(\lambda_{6i+1})L'_{6i+1} & \mbox{if~} a=(b_{z_1-2},b_{z_1-3},\cdots ,b_i,0,b_{i-1},b_{i-2},\cdots,b_0), \\
				f_a(\lambda_{6i+1})L'_{6i+1} & \mbox{if~} a=(b_{z_1-2},b_{z_1-3},\cdots ,b_i,1,b_{i-1},b_{i-2},\cdots,b_0), \\
				\mathbf{0} & \mbox{otherwise},
			\end{array}\right.  \\
			D_{2i+1}(a,b)=\left\{\begin{array}{ll}
				f_a(\lambda_{6i+2})L'_{6i+2} & \mbox{if~} a=(b_{z_1-2},b_{z_1-3},\cdots ,b_i,0,b_{i-1},b_{i-2},\cdots,b_0),\\
				-f_a(\lambda_{6i+2})L'_{6i+2} & \mbox{if~} a=(b_{z_1-2},b_{z_1-3},\cdots ,b_i,1,b_{i-1},b_{i-2},\cdots,b_0), \\
				\mathbf{0} & \mbox{otherwise},
			\end{array}\right.
		\end{aligned}
	\end{equation}
\end{figure*}
where 
the vectors $L'_{6i+1}$ and $L'_{6i+2}$ are defined in \eqref{eq:cvn}. Note that the binary expansion of $a$ has one more digit than that of $b$ because the range of $a$ is larger than the range of $b$.

For each $0\le i\le z_1-1$ and $0\le b\le 2^{z_1-1}-1$, we introduce the shorthand notation
\begin{align*}
a(i,0,b):=(b_{z_1-2},b_{z_1-3},\cdots ,b_i,0,b_{i-1},b_{i-2},\cdots,b_0) , \\
a(i,1,b):=(b_{z_1-2},b_{z_1-3},\cdots ,b_i,1,b_{i-1},b_{i-2},\cdots,b_0) .
\end{align*}
Two corner cases are worth mentioning: When $i=0$ or $z_1-1$, we have
\begin{align*}
a(0,0,b)&=(b_{z_1-2},b_{z_1-3},\cdots,b_0,0) , \\
a(0,1,b)&=(b_{z_1-2},b_{z_1-3},\cdots,b_0,1) ,  \\
a(z_1-1,0,b)&=(0,b_{z_1-2},b_{z_1-3},\cdots,b_0) , \\
a(z_1-1,1,b)&=(1,b_{z_1-2},b_{z_1-3},\cdots,b_0) .
\end{align*}
Next we will construct a $2^{z_1} z_1 \times 2^{z_1} z_1$ matrix 
\begin{equation}  \label{eq:qbr}
\overline{Q} = \begin{bmatrix}
        \overline{D}_0 & \overline{D}_1 & \overline{D}_2 & \cdots & \overline{D}_{2z_1-1}
    \end{bmatrix} ,
\end{equation}
where each $\overline{D}_i$ has the same size as $D_i$. For $0\le i\le 2z_1-1$, the matrix $\overline{D}_i$ is obtained from multiplying a nonzero element to each column of $D_i$. For $0\le i\le 2z_1-1, 0\le a\le 2^{z_1}-1$ and $0\le b\le 2^{z_1-1}-1$, we define the column vector $\overline{D}_i(a,b)$ in the same way as $D_i(a,b)$ in \eqref{eq:bsl}.

With the above notation at hand, we are ready to explain how to construct $\overline{D}_i$, or equivalently, how to construct $\overline{Q}$. For $0\le i\le z_1-1$ and $0\le b\le 2^{z_1-1}-1$, the $b$th column of $\overline{D}_{2i}$ is obtained from multiplying $-1/f_{a(i,0,b)}(\lambda_{6i+1})$ to the $b$th column of $D_{2i}$, and the $b$th column of $\overline{D}_{2i+1}$ is obtained from multiplying $1/f_{a(i,0,b)}(\lambda_{6i+2})$ to the $b$th column of $D_{2i+1}$. Therefore, 
\begin{equation} \label{eq:dbar1}
\begin{aligned}
	\overline{D}_{2i}(a(i,0,b),b) &= L'_{6i+1}, 
	\\
	\overline{D}_{2i}(a(i,1,b),b) &= -\frac{f_{a(i,1,b)}(\lambda_{6i+1})}{f_{a(i,0,b)}(\lambda_{6i+1})} L'_{6i+1} ,
\end{aligned}
\end{equation}
and $\overline{D}_{2i}(a,b)$ is an all-zero vector for all other $a\neq a(i,0,b),a(i,1,b)$. Similarly, 
\begin{equation} \label{eq:dbar2}
\begin{aligned}
	\overline{D}_{2i+1}(a(i,0,b),b) &= L'_{6i+2}, 
	\\
	\overline{D}_{2i+1}(a(i,1,b),b) &= -\frac{f_{a(i,1,b)}(\lambda_{6i+2})}{f_{a(i,0,b)}(\lambda_{6i+2})} L'_{6i+2} ,
\end{aligned}
\end{equation}
and $\overline{D}_{2i+1}(a,b)$ is an all-zero vector for all other $a\neq a(i,0,b),a(i,1,b)$.

By definition \eqref{eq:gai}, the polynomial $g_{a,i}$ only depends on the $i$th digit in the binary expansion of $a$. Since $a(i,0,b)$ and $a(i,1,b)$ only differ in the $i$th digit in their binary expansions, we have $g_{a(i,0,b),j}(x)=g_{a(i,1,b),j}(x)$ for all $j\neq i$. Then by \eqref{eq:defa}, we have
\begin{equation} \label{eq:r1}
	\begin{aligned}
		&-\frac{f_{a(i,1,b)}(\lambda_{6i+1})}{f_{a(i,0,b)}(\lambda_{6i+1})}\\
		= &-\frac{g_{a(i,1,b),i}(\lambda_{6i+1})}{g_{a(i,0,b),i}(\lambda_{6i+1})} \\
		=
		&-\frac{(\lambda_{6i+1}-\lambda_{6i+3})(\lambda_{6i+1}-\lambda_{6i+5})}{(\lambda_{6i+1}-\lambda_{6i})(\lambda_{6i+1}-\lambda_{6i+4})}\\
		=
		&\gamma_{6i+1} ,
	\end{aligned}
\end{equation}
where the last equality follows from the definition \eqref{eq:gama}. 
Similarly, 
\begin{equation} \label{eq:r2}
\begin{aligned}
	&-\frac{f_{a(i,1,b)}(\lambda_{6i+2})}{f_{a(i,0,b)}(\lambda_{6i+2})}\\
	= 
	&-\frac{g_{a(i,1,b),i}(\lambda_{6i+2})}{g_{a(i,0,b),i}(\lambda_{6i+2})} \\
	= &-\frac{(\lambda_{6i+2}-\lambda_{6i+3})(\lambda_{6i+2}-\lambda_{6i+5})}{(\lambda_{6i+2}-\lambda_{6i})(\lambda_{6i+2}-\lambda_{6i+4})}\\
	=
	&\gamma_{6i+2}
\end{aligned}
\end{equation}
where the last equality follows from the definition \eqref{eq:gama}. Taking \eqref{eq:r1}--\eqref{eq:r2} into \eqref{eq:dbar1}--\eqref{eq:dbar2}, we obtain 
\begin{equation}  \label{eq:fdb}
   \begin{aligned}
   		\overline{D}_{2i}(a,b)=&\left\{\begin{array}{ll}
   			L'_{6i+1} & \mbox{if~} a=a(i,0,b),\\
   			\gamma_{6i+1} L'_{6i+1} & \mbox{if~} a=a(i,1,b), \\
   			\mathbf{0} & \mbox{otherwise},
   		\end{array}\right.  
   		\\
   		\overline{D}_{2i+1}(a,b)=&\left\{\begin{array}{ll}
   			L'_{6i+2} & \mbox{if~} a=a(i,0,b), \\
   			\gamma_{6i+2} L'_{6i+2} & \mbox{if~} a=a(i,1,b),\\
   			\mathbf{0} & \mbox{otherwise},
   		\end{array}\right.
   \end{aligned}
\end{equation}
for $0\le i\le z_1-1, 0\le a\le 2^{z_1}-1$ and $0\le b\le 2^{z_1-1}-1$.

Since all the $\lambda_i$'s in our code construction are distinct, \eqref{eq:gai}--\eqref{eq:defa} imply that $-1/f_{a(i,0,b)}(\lambda_{6i+1})$ and $1/f_{a(i,0,b)}(\lambda_{6i+2})$, i.e., the elements multiplied to each column of $Q$, are nonzero. Therefore, $Q$ is invertible if and only if $\overline{Q}$ is invertible. 

Before proceeding to prove that $\overline{Q}$ is invertible, we first give a concrete example of $\overline{Q}$: For the choice of parameters in Example~\ref{ex:1}, we have
{\scriptsize \begin{align*}
    \overline{Q}=
    \left[\begin{array}{cccccccc}
    L_1' & \mathbf{0} & L'_2 & \mathbf{0} & L'_7 & \mathbf{0} & L'_8 & \mathbf{0} \\
    \gamma_1 L'_1 & \mathbf{0} & \gamma_2 L'_2 & \mathbf{0} & \mathbf{0} & L'_7 & \mathbf{0} & L'_8 \\
    \mathbf{0} & L'_1 & \mathbf{0} & L'_2 & \gamma_7 L'_7 & \mathbf{0} & \gamma_8 L'_8 & \mathbf{0} \\
    \mathbf{0} & \gamma_1 L'_1 & \mathbf{0} & \gamma_2 L'_2 & \mathbf{0} & \gamma_7 L'_7 & \mathbf{0} & \gamma_8 L'_8 
\end{array}\right] .
\end{align*}}

To prove that $\overline{Q}$ is invertible, we only need to show that $\det(\overline{Q})\neq 0$. From \eqref{eq:fdb} we can see that $\det(\overline{Q})$ is a polynomial of $\lambda_{6i+1},\lambda_{6i+2},\gamma_{6i+1},\gamma_{6i+2}$ for $0\le i \le z_1-1$. The variables $\lambda_{6i+1}$ and $\gamma_{6i+1}$ appear in every column of $\overline{D}_{2i}$, and they do not appear in any other $\overline{D}_j$ for $j\neq 2i$. The maximum degree of $\lambda_{6i+1}$ in each column of $\overline{D}_{2i}$ is $z_1-1$, and the maximum degree of $\gamma_{6i+1}$ in each column of $\overline{D}_{2i}$ is $1$. The matrix $\overline{D}_{2i}$ has $2^{z_1-1}$ columns. Therefore, the degree of $\lambda_{6i+1}$ in $\det(\overline{Q})$ is at most $(z_1-1) 2^{z_1-1}$, and the degree of $\gamma_{6i+1}$ in $\det(\overline{Q})$ is at most $2^{z_1-1}$.
Using similar arguments we can show that the degree of $\lambda_{6i+2}$ in $\det(\overline{Q})$ is at most $(z_1-1) 2^{z_1-1}$, and the degree of $\gamma_{6i+2}$ in $\det(\overline{Q})$ is at most $2^{z_1-1}$.

Next we prove that $(\lambda_{6i+1}-\lambda_{6j+1})^{2^{z_1-2}}$ is a factor of $\det(\overline{Q})$ for every pair of $i\neq j$. For $0\le i\le 2z_1-1$ and $0\le b\le 2^{z_1-1}-1$, we use $\overline{D}_i(b)$ to denote the $b$th column of the matrix $\overline{D}_i$. For $0\le i<j\le z_1-1$ and $0\le b\le 2^{z_1-1}-1$, we define
{\small \begin{align*}
	&\phi(i,j,b)\\
	=&(b_{z_1-2},b_{z_1-3},\dots,b_j,b_{j-2},b_{j-3},\dots,b_i,0,b_{i-1},b_{i-2},\dots,b_0).
\end{align*}}
Three corner cases are worth mentioning: (i) If $j=z_1-1$, then $b_{z_1-2},b_{z_1-3},\dots,b_j$ is an empty subvector, i.e., we remove this part from the above definition; (ii) if $i=j-1$, then $b_{j-2},b_{j-3},\dots,b_i$ is an empty subvector; (iii) if $i=0$, then $b_{i-1},b_{i-2},\dots,b_0$ is an empty subvector. 
For $0\le i\le z_1-2$, we define a set
$$
\mathcal{T}_i:=\{b:0\le b\le 2^{z_1-1}-1, b_i=0\}.
$$

Given a pair of $i,j$ such that $i<j$, we define a $2^{z_1} z_1\times 2^{z_1-1}$ matrix $\overline{D}'_{2i}$ as follows: For $0\le b\le 2^{z_1-1}-1$, we write the $b$th column of $\overline{D}'_{2i}$ as $\overline{D}'_{2i}(b)$. If $b\in\mathcal{T}_{j-1}$, then
\begin{equation} \label{eq:dmd}
\begin{aligned}
	\overline{D}'_{2i}(b)=&\overline{D}_{2i}(b) + \gamma_{6j+1} \overline{D}_{2i}(b+2^{j-1}) \\-& \overline{D}_{2j}(\phi(i,j,b)) - \gamma_{6i+1} \overline{D}_{2j}(\phi(i,j,b)+2^i) ;
\end{aligned}
\end{equation}


If $b\notin\mathcal{T}_{j-1}$, then $\overline{D}'_{2i}(b)=\overline{D}_{2i}(b)$. We further define a $2^{z_1} z_1 \times 2^{z_1} z_1$ matrix $\overline{Q}'$ obtained from replacing $\overline{D}_{2i}$ with $\overline{D}'_{2i}$ on the right-hand side of \eqref{eq:qbr}. It is easy to see that $\det(\overline{Q}')=\det(\overline{Q})$. Next we show that for every $b\in\mathcal{T}_{j-1}$, $(\lambda_{6i+1}-\lambda_{6j+1})$ is a common factor of all the entries in the column $\overline{D}'_{2i}(b)$. To see this, we write $\overline{D}'_{2i}(b)$ as
$$
\overline{D}'_{2i}(b) = \begin{bmatrix}
\overline{D}'_{2i}(0,b) \\
\overline{D}'_{2i}(1,b) \\
\vdots \\
\overline{D}'_{2i}(2^{z_1}-1,b)
\end{bmatrix} ,
$$
where each $\overline{D}'_{2i}(a,b)$ is a column vector of length $z_1$ for $0\le a\le 2^{z_1}-1$. 
Note that for $b\in \mathcal{T}_{j-1}$ we have
{\small \begin{align*}
    &b=\\&(b_{z_1-2},b_{z_1-3},\dots,b_j,0,b_{j-2},b_{j-3},\dots,b_i,b_{i-1},b_{i-2},\dots,b_0),\\
    &b+2^{j-1}=\\&(b_{z_1-2},b_{z_1-3},\dots,b_j,1,b_{j-2},b_{j-3},\dots,b_i,b_{i-1},b_{i-2},\dots,b_0),\\
    &\phi(i,j,b)=\\&(b_{z_1-2},b_{z_1-3},\dots,b_j,b_{j-2},b_{j-3},\dots,b_i,0,b_{i-1},b_{i-2},\dots,b_0),\\
    &\phi(i,j,b)+2^i=\\&(b_{z_1-2},b_{z_1-3},\dots,b_j,b_{j-2},b_{j-3},\dots,b_i,1,b_{i-1},b_{i-2},\dots,b_0),\\
    &a(i,0,b)=\\&(b_{z_1-2},b_{z_1-3},\dots,b_j,0,b_{j-2},b_{j-3},\dots,b_i,0,b_{i-1},b_{i-2},\dots,b_0),\\
    &a(i,0,b)+2^i=\\&(b_{z_1-2},b_{z_1-3},\dots,b_j,0,b_{j-2},b_{j-3},\dots,b_i,1,b_{i-1},b_{i-2},\dots,b_0),\\
    &a(i,0,b)+2^j=\\&(b_{z_1-2},b_{z_1-3},\dots,b_j,1,b_{j-2},b_{j-3},\dots,b_i,0,b_{i-1},b_{i-2},\dots,b_0),\\
    &a(i,0,b)+2^i+2^j=\\&(b_{z_1-2},b_{z_1-3},\dots,b_j,1,b_{j-2},b_{j-3},\dots,b_i,1,b_{i-1},b_{i-2},\dots,b_0).
\end{align*}}%
Then by \eqref{eq:dmd}, for $b\in\mathcal{T}_{j-1}$, we have
\begin{align*}
	&\overline{D}'_{2i}(a,b)
	=\\&\left\{\begin{array}{ll}
		L'_{6i+1} - L'_{6j+1} & \mbox{if~} a=a(i,0,b), \\
		\gamma_{6i+1} (L'_{6i+1} - L'_{6j+1}) & \mbox{if~} a=a(i,0,b)+2^i,\\
		\gamma_{6j+1} (L'_{6i+1} - L'_{6j+1}) & \mbox{if~} a=a(i,0,b)+2^j,\\
		\gamma_{6i+1} \gamma_{6j+1} (L'_{6i+1} - L'_{6j+1}) & \mbox{if~} a=a(i,0,b)+2^i+2^j, \\
		\mathbf{0} & \mbox{otherwise},
	\end{array}\right.
\end{align*}
where $\mathbf{0}$ in the last line denotes the all-zero column vector of length $z_1$; the vectors $L'_{6i+1}$ and $L'_{6i+2}$ are defined in \eqref{eq:cvn}. Since $(\lambda_{6i+1}-\lambda_{6j+1})$ is a common factor of all the entries in $(L'_{6i+1} - L'_{6j+1})$, it is also a common factor of all the entries in $\overline{D}'_{2i}(b)$ for every $b\in\mathcal{T}_{j-1}$. The size of $\mathcal{T}_{j-1}$ is $|\mathcal{T}_{j-1}|=2^{z_1-2}$, so we can extract the factor $(\lambda_{6i+1}-\lambda_{6j+1})$ from $2^{z_1-2}$ columns of the matrix $\overline{D}'_{2i}$. Therefore, $(\lambda_{6i+1}-\lambda_{6j+1})^{2^{z_1-2}}$ is a factor of $\det(\overline{Q}')=\det(\overline{Q})$ for every pair of $i\neq j$.

Using the same method we can prove that $(\lambda_{6i+1}-\lambda_{6j+2})^{2^{z_1-2}}$, $(\lambda_{6i+2}-\lambda_{6j+1})^{2^{z_1-2}}$ and $(\lambda_{6i+2}-\lambda_{6j+2})^{2^{z_1-2}}$ are also factors of $\det(\overline{Q})$ for every pair of $i\neq j$. Therefore, the following polynomial 
\begin{equation} \label{eq:h1}
\begin{aligned}
	h_1:= \prod_{i=0}^{z_1-2}\prod_{j=i+1}^{z_1-1} &\Big( (\lambda_{6i+1}-\lambda_{6j+1})^{2^{z_1-2}}(\lambda_{6i+2}-\lambda_{6j+1})^{2^{z_1-2}}\\&(\lambda_{6i+1}-\lambda_{6j+2})^{2^{z_1-2}}(\lambda_{6i+2}-\lambda_{6j+2})^{2^{z_1-2}} \Big)
\end{aligned}
\end{equation}
is a factor of $\det(\overline{Q})$.
It is easy to see that the degree of both $\lambda_{6i+1}$ and $\lambda_{6i+2}$ in the polynomial $h_1$ is $(z_1-1) 2^{z_1-1}$ for all $0\le i\le z_1-1$. Since the degree of $\lambda_{6i+1}$ and $\lambda_{6i+2}$ in the polynomial $\det(\overline{Q})$ cannot exceed $(z_1-1) 2^{z_1-1}$, we conclude that the 
polynomial $h_2:=\det(\overline{Q})/h_1$ does not contain variables $\lambda_{6i+1}$ or $\lambda_{6i+2}$ for any $0\le i\le z_1-1$. In other words, we have obtained a factorization 
\begin{equation} \label{eq:fxt}
\det(\overline{Q})=h_1\cdot h_2 ,
\end{equation}
where $h_1$ only contains $\lambda_{6i+1}$ and  $\lambda_{6i+2}$ for $0\le i\le z_1-1$, and $h_2$ only contains $\gamma_{6i+1}$ and  $\gamma_{6i+2}$ for $0\le i\le z_1-1$.

The next step is to prove that $(\gamma_{6i+1}-\gamma_{6i+2})^{2^{z_1-1}}$ is a factor of $\det(\overline{Q})$ for all $0\le i\le z_1-1$. To that end, we replace the column $\overline{D}_{2i}(b)$ with $\overline{D}_{2i}(b)-\overline{D}_{2i+1}(b)$ for every $0\le b\le 2^{z_1-1}-1$. Clearly, these operations do not change the determinant of the matrix $\overline{Q}$. For every $0\le b\le 2^{z_1-1}-1$, the only two nonzero parts in the column vector $\overline{D}_{2i}(b)-\overline{D}_{2i+1}(b)$ are $L'_{6i+1}-L'_{6i+2}$ and $\gamma_{6i+1}L'_{6i+1}-\gamma_{6i+2}L'_{6i+2}$. When we set $\lambda_{6i+1}=\lambda_{6i+2}$, the vector $L'_{6i+1}-L'_{6i+2}$ becomes the all-zero vector, and the vector $\gamma_{6i+1}L'_{6i+1}-\gamma_{6i+2}L'_{6i+2}$ becomes $(\gamma_{6i+1}-\gamma_{6i+2})L'_{6i+1}$. Therefore, when $\lambda_{6i+1}=\lambda_{6i+2}$, $(\gamma_{6i+1}-\gamma_{6i+2})$ is a common factor of all the entries in $\overline{D}_{2i}(b)-\overline{D}_{2i+1}(b)$. Since $b$ takes $2^{z_1-1}$ possible values, we conclude that $(\gamma_{6i+1}-\gamma_{6i+2})^{2^{z_1-1}}$ is a factor of $\det(\overline{Q})$ when $\lambda_{6i+1}=\lambda_{6i+2}$. On the other hand, the factorization \eqref{eq:fxt} tells us that the factors containing $\gamma_{6i+1}$ and  $\gamma_{6i+2}$ are independent of the the values of $\lambda_{6i+1}$ and  $\lambda_{6i+2}$. Therefore, $(\gamma_{6i+1}-\gamma_{6i+2})^{2^{z_1-1}}$ is always a factor of $\det(\overline{Q})$ no matter $\lambda_{6i+1}$ is equal to $\lambda_{6i+2}$ or not.
Since the degree of $\gamma_{6i+1}$ and $\gamma_{6i+2}$ in the polynomial $\det(\overline{Q})$ cannot exceed $2^{z_1-1}$, we conclude that $(\gamma_{6i+1}-\gamma_{6i+2})^{2^{z_1-1}}$ is the only factor in $\det(\overline{Q})$ that contains $\gamma_{6i+1}$ and $\gamma_{6i+2}$. Therefore, the polynomial $h_2$ in \eqref{eq:fxt} can be written as
\begin{equation} \label{eq:h2}
h_2= c \prod_{i=0}^{z_1-1} (\gamma_{6i+1}-\gamma_{6i+2})^{2^{z_1-1}} ,
\end{equation}
where $c$ is a constant that is independent of $\gamma_{6i+1},\gamma_{6i+2},\lambda_{6i+1},\lambda_{6i+2}$ for all $0\le i\le z_1-1$.

The final step is to prove that the constant $c$ in \eqref{eq:h2} is either $1$ or $-1$. To that end, we write $\det(\overline{Q})$ as a linear combination of monomials, and we will show that 
\begin{equation} \label{eq:mono}
\prod_{i=0}^{z_1-1} \Big( \lambda_{6i+1}^{(z_1-1-i) 2^{z_1-1}}\cdot \lambda_{6i+2}^{(z_1-1-i) 2^{z_1-1}}\cdot \gamma_{6i+1}^{2^{z_1-1}} \Big)
\end{equation}
is a monomial in this linear combination. To see this, we pick the entry $\gamma_{6i+1}\lambda_{6i+1}^{z_1-1-i}$ from the column $\overline{D}_{2i}(b)$, and we pick the entry $\lambda_{6i+2}^{z_1-1-i}$ from the column $\overline{D}_{2i+1}(b)$ for every $0\le i\le z_1-1$ and every $0\le b\le 2^{z_1-1}-1$. Since all these entries are located in different rows, their product, which is given by the monomial in \eqref{eq:mono}, appears in $\det(\overline{Q})$. Moreover, the coefficient of this monomial in $\det(\overline{Q})$ is either $1$ or $-1$. 
On the other hand, we can also write $h_1$ and $h_2$ as linear combinations of monomials. In particular, 
$$
\prod_{i=0}^{z_1-1} \Big( \lambda_{6i+1}^{(z_1-1-i) 2^{z_1-1}}\cdot \lambda_{6i+2}^{(z_1-1-i) 2^{z_1-1}} \Big)
$$
is a monomial that appears in $h_1$. To see this, for $0\le i<j\le z_1-1$, we write the product $(\lambda_{6i+1}-\lambda_{6j+1})^{2^{z_1-2}}(\lambda_{6i+2}-\lambda_{6j+1})^{2^{z_1-2}}(\lambda_{6i+1}-\lambda_{6j+2})^{2^{z_1-2}}(\lambda_{6i+2}-\lambda_{6j+2})^{2^{z_1-2}}$ on the right-hand side of \eqref{eq:h1} as a linear combination of monomials. It is easy to see that $\lambda_{6i+1}^{2^{z_1-1}} \lambda_{6i+2}^{2^{z_1-1}}$ is a monomial in this linear combination. Therefore, the product $\prod_{i=0}^{z_1-2}\prod_{j=i+1}^{z_1-1}(\lambda_{6i+1}^{2^{z_1-1}} \lambda_{6i+2}^{2^{z_1-1}})=\prod_{i=0}^{z_1-1} \Big( \lambda_{6i+1}^{(z_1-1-i) 2^{z_1-1}}\cdot \lambda_{6i+2}^{(z_1-1-i) 2^{z_1-1}} \Big)$ appears in $h_1$. By \eqref{eq:h2}, it is easy to see that $c \prod_{i=0}^{z_1-1} \gamma_{6i+1}^{2^{z_1-1}}$ is a monomial that appears in $h_2$. Therefore,
$$
c \prod_{i=0}^{z_1-1} \Big( \lambda_{6i+1}^{(z_1-1-i) 2^{z_1-1}}\cdot \lambda_{6i+2}^{(z_1-1-i) 2^{z_1-1}}\cdot \gamma_{6i+1}^{2^{z_1-1}} \Big)
$$
is a monomial that appears in $\det(\overline{Q})=h_1\cdot h_2$. From the previous analysis, we know that the coefficient $c$ is either $1$ or $-1$.

Since all the $\lambda_i$'s are distinct, we have $h_1\neq 0$. The condition \eqref{eq:cond} further guarantees that $h_2\neq 0$. Thus we conclude that $\det(\overline{Q})\neq 0$, so $Q$ and $\overline{Q}$ are both invertible.
\end{proof}

\begin{figure*}[htbp]
	\begin{equation}\label{eq:add4}
		M_{\vec{z}}={\footnotesize
			\left[\hspace{-0.2cm}\begin{array}{*{2}{@{\hspace*{0.05in}}c}|*{2}{@{\hspace*{0.05in}}c}|*{2}{@{\hspace*{0.05in}}c}|*{2}{@{\hspace*{0.12in}}c}|*{2}{@{\hspace*{0.12in}}c}}
				A_{i_1}^{(2^{z-1})} & & \dots & &  A_{i_{r-2}}^{(2^{z-1})} & & I \otimes L_{6z-6} & I \otimes (L_{6z-6}-L_{6z-5}) & I \otimes L_{6z-4} & \\
				& A_{i_1}^{(2^{z-1})} & & \dots & &  A_{i_{r-2}}^{(2^{z-1})} & & I \otimes L_{6z-5} & I \otimes (L_{6z-3}-L_{6z-4}) & I \otimes L_{6z-3} 
			\end{array}\hspace{-0.2cm}\right]}
	\end{equation}
\end{figure*}

As mentioned at the beginning of Case 1, our objective is to prove that $\vec{y}=\mathbf{0}$ is the only solution to the equation $M_{\vec{z}}~\vec{y}=\mathbf{0}$, where the vector $\vec{y}$ is defined in \eqref{eq:ydb}. Now we have shown that all the coordinates in \eqref{eq:st1} are equal to zero. Replacing all the coordinates in \eqref{eq:st1} with $0$ in the equation $M_{\vec{z}}~\vec{y}=\mathbf{0}$, we obtain
\begin{equation} \label{eq:xcl}
	{\small \begin{aligned}
		\sum_{i=0}^{z_1-1} \Big( (1-b_i) L_{6i} y_{3i,b} + b_i L_{6i+3} y_{3i+1,b} +
		L_{6i+4+b_i} y_{3i+2,b} \Big) = \mathbf{0}&\\ 
		\quad \text{for~} 0\le b\le 2^{z_1}-1,&
	\end{aligned}}
\end{equation}
where $L_i$ is a column vector of length $r$ defined in \eqref{eq:dli}. Note that 
$$
(1-b_i) L_{6i} y_{3i,b} + b_i L_{6i+3} y_{3i+1,b} = \left\{ \begin{array}{ll}
L_{6i} y_{3i,b}  & \mbox{if~} b_i=0, \\
L_{6i+3} y_{3i+1,b}   & \mbox{if~} b_i=1.
\end{array} \right.
$$
Therefore, the linear combination on the left-hand side of \eqref{eq:xcl} contains $2z_1$ distinct $L_i$'s. Since the length of each $L_i$ in this case is $r=3z_1$, these $L_i$'s are linearly independent, so their coefficients must be $0$. This tells us that $y_{3i+2,b}=0$ for all $0\le b\le 2^{z_1}-1$; $y_{3i,b}=0$ if $b_i=0$; $y_{3i+1,b}=0$ if $b_i=1$. Combining this with the fact that all the coordinates in \eqref{eq:st1} are equal to zero, we conclude that $\vec{y}=\mathbf{0}$ is the only solution to the equation $M_{\vec{z}}~\vec{y}=\mathbf{0}$. This completes the proof for Case 1.

\textbf{Case $2$:} $\vec{z}=(z_1,z_2,0,0,0,0,0)$. 
In this case, \eqref{eq:wsmr} implies that $r=3z_1+2z_2,z=z_1+z_2$, and the size of $M_{\vec{z}}$ (defined in \eqref{eq:mzv}) is $2^{z_1+z_2} r\times 2^{z_1+z_2} r$.

Our task is still to show that $M_{\vec{z}}$ is invertible, and we prove it by induction on $z_2$. The base case $z_2=0$ was already proved in Case 1. For the inductive step, we assume that the conclusion holds for $(z_1,z_2-1,0,0,0,0,0)$, and we prove the conclusion for $(z_1,z_2,0,0,0,0,0)$.

To prove the invertibility of $M_{\vec{z}}$, our method is still to show that $\vec{y}=\mathbf{0}$ is the only solution to the equation $M_{\vec{z}}~\vec{y}=\mathbf{0}$, where the vector $\vec{y}$ is defined as
\begin{equation} \label{eq:xxx}
	\begin{aligned}
		\vec{y}=(&y_{0,0},y_{0,1}\cdots ,y_{0,2^z-1},\\&y_{1,0},y_{1,1},\cdots ,y_{1,2^z-1},\\&\cdots,\\&y_{r-1,0},y_{r-1,1},\cdots,y_{r-1,2^z-1})^T .
	\end{aligned}
\end{equation}
Recall from \eqref{eq:mzv} that $i_1,i_2,\dots,i_r$ are the indices in the set $\mathcal{F}(\vec{z})$. Since we assume that $\vec{z}=(z_1,z_2,0,0,0,0,0)$ with $z_2\ge 1$, we have $i_{r-1}=3z-3$ and $i_r=3z-2$.
Note that $M_{\vec{z}}$ is matrix \eqref{eq:add4}, 
where $\otimes$ is the Kronecker product, $I=I_{2^{z-1}}$ is the identity matrix of size $2^{z-1}\times 2^{z-1}$, and the column vector $L_i$ is defined in \eqref{eq:dli}.
We define a $2^{z-1}r \times 2^{z-1}r$ matrix
{\tiny\begin{align*}
	 Q_{\vec{z}} = \left[ A_{i_1}^{(2^{z-1})} \quad A_{i_2}^{(2^{z-1})} \quad \dots \quad A_{i_{r-2}}^{(2^{z-1})} \quad I_{2^{z-1}} 
		\otimes L_{6z-6}  \quad I_{2^{z-1}} 
		\otimes L_{6z-3} \right],
\end{align*}}%
and a vector 
\begin{align*}
	\vec{y}^{\text{sum}}=(&y_{0,0}^{\text{sum}},y_{0,1}^{\text{sum}}\cdots ,y_{0,2^{z-1}-1}^{\text{sum}},\\&y_{1,0}^{\text{sum}},y_{1,1}^{\text{sum}},\cdots ,y_{1,2^{z-1}-1}^{\text{sum}},\\&\cdots,\\&y_{r-1,0}^{\text{sum}},y_{r-1,1}^{\text{sum}},\cdots,y_{r-1,2^{z-1}-1}^{\text{sum}})^T ,
\end{align*}
where $y_{i,a}^{\text{sum}}=y_{i,a}+y_{i,a+2^{z-1}}$ for $0\le i\le r-1$ and $0\le a\le 2^{z-1}-1$. 
If we add the $a$th block row of the equation $M_{\vec{z}}~\vec{y}=\mathbf{0}$ to the $(a+2^{z-1})$th block row for every $0\le a\le 2^{z-1}-1$, then we obtain $Q_{\vec{z}} \vec{y}^{\text{sum}}=\mathbf{0}$.

The next step is to show that $Q_{\vec{z}}$ is invertible, which implies that $\vec{y}^{\text{sum}}$ is an all-zero vector.

\begin{lemma} \label{lm:idt}
If $M_{(z_1,z_2-1,0,0,0,0,0)}$ is invertible, then $Q_{(z_1,z_2,0,0,0,0,0)}$ is also invertible.
\end{lemma}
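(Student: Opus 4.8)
The plan is to prove that $Q_{\vec z}$ with $\vec z=(z_1,z_2,0,0,0,0,0)$ is invertible by showing $Q_{\vec z}\,\vec v=\mathbf 0$ has only the zero solution. Write $\vec v=(\vec v_1,\dots,\vec v_r)$ blockwise according to the failed nodes $i_1<i_2<\dots<i_r$ of $\mathcal F(\vec z)$, with $\vec v_j=(v_{j,0},\dots,v_{j,2^{z-1}-1})^T$ and (since $z_2\ge1$) $i_{r-1}=3z-3$, $i_r=3z-2$. Unfolding the definition of $Q_{\vec z}$, the equation $Q_{\vec z}\,\vec v=\mathbf 0$ reads, for every $0\le a\le 2^{z-1}-1$,
\[
\sum_{j=1}^{r-2}\sum_{b=0}^{2^{z-1}-1}A_{i_j}(a,b)\,v_{j,b}\;+\;L_{6z-6}\,v_{r-1,a}\;+\;L_{6z-3}\,v_{r,a}\;=\;\mathbf 0 ,
\]
a vector identity in $\mathbb F^{r}$. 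The structural point is that $\{i_1,\dots,i_{r-2}\}$ --- i.e.\ $\mathcal F(\vec z)$ with its two largest elements removed --- is exactly the canonical erasure pattern of type $(z_1,z_2-1,0,0,0,0,0)$, all of whose nodes sit in groups $0,\dots,z-2$ and involve only $\lambda_0,\dots,\lambda_{6z-7}$; the condition \eqref{eq:cond} for these indices is inherited from the ambient code.

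First I would eliminate $\vec v_{r-1}$ and $\vec v_r$. Let $g(x)=(x-\lambda_{6z-6})(x-\lambda_{6z-3})=g_0+g_1x+g_2x^2$ and let $G_0$ be the $(r-2)\times r$ matrix whose $s$-th row ($0\le s\le r-3$) is zero except for the entries $g_0,g_1,g_2$ in columns $s,s+1,s+2$ (in the same spirit as the coefficient matrix $F$ of Case 1). Then $G_0 L_m=g(\lambda_m)\widetilde L_m$, where $\widetilde L_m=(1,\lambda_m,\dots,\lambda_m^{r-3})^T$ has length $r-2$, and in particular $G_0L_{6z-6}=G_0L_{6z-3}=\mathbf 0$. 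Multiplying each block row of the displayed system on the left by $G_0$ kills the two Kronecker columns and produces a square $2^{z-1}(r-2)\times2^{z-1}(r-2)$ system $\widetilde Q\,\vec v'=\mathbf 0$ in $\vec v'=(\vec v_1,\dots,\vec v_{r-2})$, where $\widetilde Q$ has the same support as the columns of $M_{(z_1,z_2-1,0,0,0,0,0)}$, with each column vector $L_m$ replaced by $g(\lambda_m)\widetilde L_m$.

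Next I would check that $\widetilde Q$ is carried to $M_{(z_1,z_2-1,0,0,0,0,0)}$ by invertible column operations, so that its invertibility follows from the hypothesis. Every scalar $g(\lambda_m)$ that appears is nonzero, since the $\lambda$'s are distinct and every $m$ in groups $0,\dots,z-2$ differs from $6z-6$ and $6z-3$. For the nodes $3i+2$ of the type-1 groups, and for the column indexed by $b$ with $b_i=0$ of a node $3i$ (resp.\ $b_i=1$ of a node $3i+1$), one merely divides the column by the appropriate $g(\lambda_m)$. For the column indexed by $b$ with $b_i=1$ of a node $3i$ --- whose two nonzero blocks are $g(\lambda_{6i+1})\widetilde L_{6i+1}$ in block $b$ and $g(\lambda_{6i})\widetilde L_{6i}-g(\lambda_{6i+1})\widetilde L_{6i+1}$ in block $b-2^i$ --- one divides by $g(\lambda_{6i+1})$ and then adds the right multiple of the already-rescaled column $b-2^i$ (which is $\widetilde L_{6i}$ in block $b-2^i$ and zero elsewhere) to turn the pair into $\widetilde L_{6i+1}$, $\widetilde L_{6i}-\widetilde L_{6i+1}$; the mirror-image computation handles a node $3i+1$. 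Confirming that this column bookkeeping reproduces the banded structure of $M_{(z_1,z_2-1,0,0,0,0,0)}$ exactly is the step I expect to be the main obstacle.

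Finally, invertibility of $\widetilde Q$ gives $\vec v'=\mathbf 0$, i.e.\ $v_{j,a}=0$ for all $1\le j\le r-2$ and all $a$. Putting this back into the displayed system, each block row collapses to $L_{6z-6}\,v_{r-1,a}+L_{6z-3}\,v_{r,a}=\mathbf 0$, and since $\lambda_{6z-6}\ne\lambda_{6z-3}$ the first two coordinates of $[\,L_{6z-6}\ \ L_{6z-3}\,]$ form a $2\times2$ Vandermonde matrix; hence $v_{r-1,a}=v_{r,a}=0$ for every $a$. Therefore $\vec v=\mathbf 0$, and $Q_{\vec z}$ is invertible.
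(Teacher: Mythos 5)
Your proposal is correct and follows essentially the same route as the paper: you use the same annihilator polynomial $f(x)=(x-\lambda_{6z-6})(x-\lambda_{6z-3})$ with a block-diagonal left multiplication to kill the two Kronecker columns, reduce the resulting square matrix to $M_{(z_1,z_2-1,0,0,0,0,0)}$ by nonzero column scalings and column additions (invoking the induction hypothesis), and finish by back-substitution using the linear independence of $L_{6z-6}$ and $L_{6z-3}$. The only cosmetic difference is that you carry out the column bookkeeping in a single pass, whereas the paper routes it through the intermediate matrices $\overline{K}$ and $\widetilde{K}$; the verification you flag as the ``main obstacle'' is exactly the computation you already performed for node $3i$, with the mirror case for $3i+1$ handled identically.
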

The condition of this lemma is the induction hypothesis.
\begin{proof}
Let us write $\vec{z}=(z_1,z_2,0,0,0,0,0)$ and $\vec{z}'=(z_1,z_2-1,0,0,0,0,0)$. Note that the size of $M_{\vec{z}'}$ is $2^{z-1}(r-2)\times 2^{z-1}(r-2)$, where $z=z_1+z_2$ and $r=3z_1+2z_2$. In order to prove that $Q_{\vec{z}}$ is invertible, we will show that $\vec{y}^{\text{sum}}=\mathbf{0}$ is the only solution to the equation $Q_{\vec{z}} \vec{y}^{\text{sum}}=\mathbf{0}$.

We define a quadratic polynomial $f(x)=(x-\lambda_{6z-6})(x-\lambda_{6z-3})$, and we write its coefficients as $f_0,f_1,f_2$, i.e., we write $f(x)=f_0+f_1x+f_2x^2$.
We define an $(r-2)\times r$ matrix $F_0$ as
$$
F_0 := \begin{bmatrix}
f_0&f_1&f_2&0&0&0&0&0&\cdots&0\\
    0&f_0&f_1&f_2&0&0&0&0&\cdots&0\\
    0&0&f_0&f_1&f_2&0&0&0&\cdots&0\\
\vdots&\vdots&\vdots&\vdots&\vdots&\vdots&\vdots&\vdots&\vdots&\vdots\\
    0&0&0&&0&\cdots&0&f_0&f_1&f_2
\end{bmatrix},
$$
and a $2^{z-1} (r-2)\times 2^{z-1} r$ matrix $F$ as
$$
F := \begin{bmatrix}
    F_0 & 0 & 0 & \cdots & 0 \\
    0 & F_0 & 0 & \cdots & 0 \\
    0 & 0 & F_0 & \cdots & 0 \\
    \vdots & \vdots & \vdots & \vdots & \vdots \\
    0 & 0 & 0 & \cdots & F_0 
\end{bmatrix} ,
$$
where each $0$ is a $(r-2)\times r$ matrix with all zero entries, and $F_0$ appears in total $2^{z-1}$ times  on the diagonal.

The equation $Q_{\vec{z}} \vec{y}^{\text{sum}}=\mathbf{0}$ implies that $F Q_{\vec{z}} \vec{y}^{\text{sum}}=\mathbf{0}$. It is easy to see that
$$
F(I_{2^{z-1}} 
\otimes L_{6z-6}) = F(I_{2^{z-1}} 
\otimes L_{6z-3}) = 0,
$$
where $0$ on the right-hand side is the all-zero matrix of size $2^{z-1} (r-2)\times 2^{z-1}$. Therefore, 
\begin{equation} \label{eq:fq1}
F Q_{\vec{z}}= \left[ F A_{i_1}^{(2^{z-1})} \quad F A_{i_2}^{(2^{z-1})} \quad \dots \quad F A_{i_{r-2}}^{(2^{z-1})} \quad 0 \quad 0 \right] .
\end{equation}
Define $r-2$ matrices 
\begin{equation} \label{eq:fq2}
B_{i_1}= F A_{i_1}^{(2^{z-1})}, B_{i_2}= F A_{i_2}^{(2^{z-1})}, \dots, B_{i_{r-2}}= F A_{i_{r-2}}^{(2^{z-1})}
\end{equation}
of size $2^{z-1} (r-2)\times 2^{z-1}$. We further define a $2^{z-1} (r-2)\times 2^{z-1}(r-2)$ matrix
\begin{equation} \label{eq:fq3}
K=[B_{i_1} \quad B_{i_2} \quad \cdots \quad B_{i_{r-2}}] .
\end{equation}
The next step is to show that $K$ is invertible. To that end, we represent each matrix $B_{i_t}$ as
\begin{equation} \label{eq:hhh}
{\scriptsize \begin{bmatrix}
			B_{i_t}(0,0) & B_{i_t}(0,1) & \cdots & B_{i_t}(0,2^{z-1}-1) \\
			B_{i_t}(1,0) & B_{i_t}(1,1) & \cdots & B_{i_t}(1,2^{z-1}-1) \\
			\vdots & \vdots & \vdots & \vdots \\
			B_{i_t}(2^{z-1}-1,0) & B_{i_t}(2^{z-1}-1,1) & \cdots & B_{i_t}(2^{z-1}-1,2^{z-1}-1) 
		\end{bmatrix}}
\end{equation}
for $1\le t\le r-2$, where each $B_{i_t}(a,b)$ is a column vector of length $r-2$ for $0\le a,b\le 2^{z-1}-1$. In order to characterize the matrices $B_{i_1},\dots,B_{i_{r-2}}$, we further introduce the notation 
$$
\overline{L}_i := \begin{bmatrix}
1 \\
\lambda_i \\
\lambda_i^2 \\
\vdots \\
\lambda_i^{r-3}
\end{bmatrix} .
$$
By definition, if $i_t=3j$ for some $j\in\{0,1,\dots,z-2\}$, then
\begin{align*}
	&B_{i_t}(a,b)\\
	=&B_{3j}(a,b) \\
	=&\left\{
	\begin{array}{ll}
		f(\lambda_{6j+a_j}) \overline{L}_{6j+a_j}   & \mbox{if~} a=b, \\
		f(\lambda_{6j})\overline{L}_{6j}-f(\lambda_{6j+1})\overline{L}_{6j+1}   & \mbox{if~} a_j=0, b_j=1 \\&\mbox{and~} a_i=b_i\, \forall i\neq j, \\
		\mathbf{0} & \mbox{otherwise}.
	\end{array}
	\right.
\end{align*}
If $i_t=3j+1$ for some $j\in\{0,1,\dots,z-2\}$, then
\begin{align*}
	&B_{i_t}(a,b)\\
	=&B_{3j+1}(a,b) \\
	=&\left\{
	\begin{array}{ll}
		f(\lambda_{6j+2+a_j}) \overline{L}_{6j+2+a_j}   & \mbox{if~} a=b, \\
		f(\lambda_{6j+3})\overline{L}_{6j+3}-f(\lambda_{6j+2})\overline{L}_{6j+2}   & \mbox{if~} a_j=1, b_j=0 \\&\mbox{and~} a_i=b_i\, \forall i\neq j, \\
		\mathbf{0} & \mbox{otherwise}.
	\end{array}
	\right.
\end{align*}
If $i_t=3j+2$ for some $j\in\{0,1,\dots,z-2\}$, then
\begin{align*}
	&B_{i_t}(a,b)\\
	=&B_{3j+2}(a,b) \\
	=&\left\{
	\begin{array}{ll}
		f(\lambda_{6j+4+a_j}) \overline{L}_{6j+4+a_j}   & \mbox{if~} a=b, \\
		\mathbf{0} & \mbox{otherwise}.
	\end{array}
	\right.
\end{align*}

Next we define another $r-2$ matrices $\overline{B}_{i_1}, \overline{B}_{i_2}, \dots, \overline{B}_{i_{r-2}}$ whose size is the same as $B_{i_1},B_{i_2},\dots,B_{i_{r-2}}$, and
$$
\overline{K}=[\overline{B}_{i_1} \quad \overline{B}_{i_2} \quad \cdots \quad \overline{B}_{i_{r-2}}] .
$$
We define the column vector $\overline{B}_{i_t}(a,b)$ in the same way as we defined $B_{i_t}(a,b)$ in \eqref{eq:hhh}. The value of $\overline{B}_{i_t}(a,b)$ is given as follows: If $i_t=3j$ for some $j\in\{0,1,\dots,z-2\}$, then
\begin{align*}
	&\overline{B}_{i_t}(a,b)\\
	=&\overline{B}_{3j}(a,b) \\
	=&\left\{
	\begin{array}{ll}
		f(\lambda_{6j+a_j}) \overline{L}_{6j+a_j}   & \mbox{if~} a=b, \\
		-f(\lambda_{6j+1})\overline{L}_{6j+1}   & \mbox{if~} a_j=0, b_j=1 \\&\mbox{and~} a_i=b_i\, \forall i\neq j, \\
		\mathbf{0} & \mbox{otherwise}.
	\end{array}
	\right.
\end{align*}
If $i_t=3j+1$ for some $j\in\{0,1,\dots,z-2\}$, then
\begin{align*}
	&\overline{B}_{i_t}(a,b)\\
	=&\overline{B}_{3j+1}(a,b) \\
	=&\left\{
	\begin{array}{ll}
		f(\lambda_{6j+2+a_j}) \overline{L}_{6j+2+a_j}   & \mbox{if~} a=b, \\
		-f(\lambda_{6j+2})\overline{L}_{6j+2}   & \mbox{if~} a_j=1, b_j=0 \\&\mbox{and~} a_i=b_i\, \forall i\neq j, \\
		\mathbf{0} & \mbox{otherwise}.
	\end{array}
	\right.
\end{align*}
If $i_t=3j+2$ for some $j\in\{0,1,\dots,z-2\}$, then
\begin{align*}
	&\overline{B}_{i_t}(a,b)\\
	=&\overline{B}_{3j+2}(a,b) \\
	=&\left\{
	\begin{array}{ll}
		f(\lambda_{6j+4+a_j}) \overline{L}_{6j+4+a_j}   & \mbox{if~} a=b, \\
		\mathbf{0} & \mbox{otherwise}.
	\end{array}
	\right.
\end{align*}
We use $B_{i_t}(b)$ and $\overline{B}_{i_t}(b)$ to denote the $b$th column of $B_{i_t}$ and $\overline{B}_{i_t}$, respectively. The relation between the matrices $B_{i_t}$ and $\overline{B}_{i_t}$ can be described as follows: If $i_t=3j$ for some $j\in\{0,1,\dots,z-2\}$, then 
$$
\overline{B}_{i_t}(b) = \left\{
\begin{array}{ll}
 B_{i_t}(b) - B_{i_t}(b-2^j)  & \text{~if~} b_j=1, \\
 B_{i_t}(b) & \text{~if~} b_j=0.
\end{array}
\right.
$$
If $i_t=3j+1$ for some $j\in\{0,1,\dots,z-2\}$, then
$$
\overline{B}_{i_t}(b) = \left\{
\begin{array}{ll}
 B_{i_t}(b) - B_{i_t}(b+2^j) & \text{~if~} b_j=0, \\
 B_{i_t}(b) & \text{~if~} b_j=1.
\end{array}
\right.
$$
If $i_t=3j+2$ for some $j\in\{0,1,\dots,z-2\}$, then $\overline{B}_{i_t} = B_{i_t}$. The relation between $B_{i_t}$ and $\overline{B}_{i_t}$ immediately implies that $\det(K)=\det(\overline{K})$. Therefore, $K$ is invertible if and only if $\overline{K}$ is invertible.

To prove the invertibility of $\overline{K}$, we introduce another variation of $B_{i_t}$ and $K$:
Define $r-2$ matrices $\widetilde{B}_{i_1}, \widetilde{B}_{i_2}, \dots, \widetilde{B}_{i_{r-2}}$ whose size is the same as $B_{i_1},B_{i_2},\dots,B_{i_{r-2}}$.
We further define
$$
\widetilde{K}=[\widetilde{B}_{i_1} \quad \widetilde{B}_{i_2} \quad \cdots \quad \widetilde{B}_{i_{r-2}}] .
$$
The column vector $\widetilde{B}_{i_t}(a,b)$ is defined in the same way as we defined $B_{i_t}(a,b)$ in \eqref{eq:hhh}. The value of $\widetilde{B}_{i_t}(a,b)$ is given as follows: If $i_t=3j$ for some $j\in\{0,1,\dots,z-2\}$, then
$$
\widetilde{B}_{i_t}(a,b)=\widetilde{B}_{3j}(a,b) =\left\{
\begin{array}{ll}
 \overline{L}_{6j+a_j}   & \mbox{if~} a=b, \\
  -\overline{L}_{6j+1}   & \mbox{if~} a_j=0, b_j=1 \\&\mbox{and~} a_i=b_i\, \forall i\neq j, \\
  \mathbf{0} & \mbox{otherwise}.
\end{array}
\right.
$$
If $i_t=3j+1$ for some $j\in\{0,1,\dots,z-2\}$, then
$$
\widetilde{B}_{i_t}(a,b)=\widetilde{B}_{3j+1}(a,b) =\left\{
\begin{array}{ll}
 \overline{L}_{6j+2+a_j}   & \mbox{if~} a=b, \\
  -\overline{L}_{6j+2}   & \mbox{if~} a_j=1, b_j=0 \\&\mbox{and~} a_i=b_i\, \forall i\neq j,  \\
  \mathbf{0} & \mbox{otherwise}.
\end{array}
\right.
$$
If $i_t=3j+2$ for some $j\in\{0,1,\dots,z-2\}$, then
$$
\widetilde{B}_{i_t}(a,b)=\widetilde{B}_{3j+2}(a,b) =\left\{
\begin{array}{ll}
 \overline{L}_{6j+4+a_j}   & \mbox{if~} a=b, \\
  \mathbf{0} & \mbox{otherwise}.
\end{array}
\right.
$$
We use $\widetilde{B}_{i_t}(b)$ to denote the $b$th column of $\widetilde{B}_{i_t}$. The relation between $\overline{B}_{i_t}$ and $\widetilde{B}_{i_t}$ can be described as follows: If $i_t=3j$ for some $j\in\{0,1,\dots,z-2\}$, then
$$
\overline{B}_{i_t}(b) = f(\lambda_{6j+b_j}) \widetilde{B}_{i_t}(b) .
$$
If $i_t=3j+1$ for some $j\in\{0,1,\dots,z-2\}$, then
$$
\overline{B}_{i_t}(b) = f(\lambda_{6j+2+b_j}) \widetilde{B}_{i_t}(b) .
$$
If $i_t=3j+2$ for some $j\in\{0,1,\dots,z-2\}$, then
$$
\overline{B}_{i_t}(b) = f(\lambda_{6j+4+b_j}) \widetilde{B}_{i_t}(b) .
$$
Since all the $\lambda_i$'s are distinct, the coefficients $f(\lambda_{6j+b_j}),f(\lambda_{6j+2+b_j}),f(\lambda_{6j+4+b_j})$ in the above equations are nonzero. Therefore, $\widetilde{B}_{i_t}$ is obtained from multiplying a nonzero element to each column of $\overline{B}_{i_t}$ for every $1\le t\le r-2$. Thus we conclude that the invertibility of $\widetilde{K}$ is equivalent to the invertibility of $\overline{K}$, which is further equivalent to the invertibility of $K$.

Next we use the invertibility of $M_{\vec{z}'}$ to prove the invertibility of $\widetilde{K}$. By definition \eqref{eq:mzv},
$$
M_{\vec{z}'} = \left[ A_{i_1}^{(2^{z-1})} \quad A_{i_2}^{(2^{z-1})} \quad \dots \quad A_{i_{r-2}}^{(2^{z-1})} \right] .
$$
For $0\le b\le 2^{z-1}-1$ and $1\le t\le r-2$, we use $A_{i_t}^{(2^{z-1})}(b)$ to denote the $b$th column of $A_{i_t}^{(2^{z-1})}$. It is easy to verify the following relation between $A_{i_t}^{(2^{z-1})}$ and $\widetilde{B}_{i_t}$: If $i_t=3j$ for some $j\in\{0,1,\dots,z-2\}$, then 
$$\widetilde{B}_{i_t}(b)=\left\{
\begin{array}{ll}
  A_{i_t}^{(2^{z-1})}(b) - A_{i_t}^{(2^{z-1})}(b-2^j) & \text{~if~} b_j=1, \\
  A_{i_t}^{(2^{z-1})}(b) & \text{~if~} b_j=0..
\end{array}
\right.
$$
If $i_t=3j+1$ for some $j\in\{0,1,\dots,z-2\}$, then
$$
\widetilde{B}_{i_t}(b) = \left\{
\begin{array}{ll}
 A_{i_t}^{(2^{z-1})}(b) - A_{i_t}^{(2^{z-1})}(b+2^j) & \text{~if~} b_j=0, \\
 A_{i_t}^{(2^{z-1})}(b) & \text{~if~} b_j=1.
\end{array}
\right.
$$
If $i_t=3j+2$ for some $j\in\{0,1,\dots,z-2\}$, then $\widetilde{B}_{i_t} = A_{i_t}^{(2^{z-1})}$. The relation between $A_{i_t}^{(2^{z-1})}$ and $\widetilde{B}_{i_t}$ immediately implies that $\det(M_{\vec{z}'})=\det(\widetilde{K})$. Since $M_{\vec{z}'}$ is invertible, we conclude that the three matrices $K,\overline{K}$ and $\widetilde{K}$ are all invertible.

Since $K$ is invertible, the equation $F Q_{\vec{z}} \vec{y}^{\text{sum}}=\mathbf{0}$ and \eqref{eq:fq1}--\eqref{eq:fq3} together imply that the first $2^{z-1}(r-2)$ coordinates of $\vec{y}^{\text{sum}}$ are all equal to $0$. Taking this result back into the equation $Q_{\vec{z}} \vec{y}^{\text{sum}}=\mathbf{0}$, we immediately obtain that the last $2^z$ coordinates of $\vec{y}^{\text{sum}}$ are also equal to $0$. Therefore, $\vec{y}^{\text{sum}}=\mathbf{0}$ is the only solution to the equation $Q_{\vec{z}} \vec{y}^{\text{sum}}=\mathbf{0}$. This completes the proof of the lemma.
\end{proof}

Now we have shown that $\vec{y}^{\text{sum}}$ is an all-zero vector. Before proceeding to prove that $\vec{y}$ is also an all-zero vector, we need to introduce some notation. We define a $2^{z-1}r \times 2^{z-1}r$ matrix
{\tiny $$
Q_{\vec{z}}^{(1)} = \left[ A_{i_1}^{(2^{z-1})} \quad A_{i_2}^{(2^{z-1})} \quad \dots \quad A_{i_{r-2}}^{(2^{z-1})} \quad I_{2^{z-1}} 
\otimes L_{6z-5}  \quad I_{2^{z-1}} 
\otimes L_{6z-4} \right] 
$$}%
and a vector 
\begin{align*}
	\vec{y}^{(1)}=(&y_{0,0}^{(1)},y_{0,1}^{(1)}\cdots ,y_{0,2^{z-1}-1}^{(1)},\\&y_{1,0}^{(1)},y_{1,1}^{(1)},\cdots ,y_{1,2^{z-1}-1}^{(1)},\\&\cdots,\\&y_{r-1,0}^{(1)},y_{r-1,1}^{(1)},\cdots,y_{r-1,2^{z-1}-1}^{(1)})^T ,
\end{align*}
where $y_{i,a}^{(1)}=y_{i,a}$ for all $0\le i\le r-3$ and all $0\le a\le 2^{z-1}-1$; $y_{r-2,a}^{(1)}=-y_{r-2,a+2^{z-1}}$ for all $0\le a\le 2^{z-1}-1$; $y_{r-1,a}^{(1)}=y_{r-1,a}$ for all $0\le a\le 2^{z-1}-1$. 

The result $\vec{y}^{\text{sum}}=\mathbf{0}$ implies that $y_{r-2,a}^{\text{sum}}=y_{r-2,a}+y_{r-2,a+2^{z-1}}=0$ for all $0\le a\le 2^{z-1}-1$. Taking this into the equation $M_{\vec{z}}~\vec{y}=\mathbf{0}$, the first $2^{z-1}$ block rows of $M_{\vec{z}}~\vec{y}=\mathbf{0}$ become $Q_{\vec{z}}^{(1)} \vec{y}^{(1)}=\mathbf{0}$. 
Since the matrices $Q_{\vec{z}}^{(1)}$ and $Q_{\vec{z}}$ have the same structure, we can use the method in the proof of Lemma~\ref{lm:idt} to show that $Q_{\vec{z}}^{(1)}$ is also invertible. Therefore, $\vec{y}^{(1)}=\mathbf{0}$. Combining $\vec{y}^{(1)}=\mathbf{0}$ with $\vec{y}^{\text{sum}}=\mathbf{0}$, we immediately conclude that $\vec{y}$ is an all-zero vector. This proves that $M_{\vec{z}}$ is invertible for Case 2.

\textbf{Case $3$:} $\vec{z}=(z_1,z_2,z_3,0,0,0,0)$. In this case, \eqref{eq:wsmr} implies that $r=3z_1+2z_2+2z_3$ and $z=z_1+z_2+z_3$.
We prove the invertibility of $M_{\vec{z}}$ by induction on $z_3$. The base case $z_3=0$ was already proved in Case 2. For the inductive step, we assume that the conclusion holds for $(z_1,z_2,z_3-1,0,0,0,0)$, and we prove the conclusion for $(z_1,z_2,z_3,0,0,0,0)$.

Our method is still to show that $\vec{y}=\mathbf{0}$ is the only solution to the equation $M_{\vec{z}}~\vec{y}=\mathbf{0}$, where the vector $\vec{y}$ is defined in \eqref{eq:xxx}.
Since we assume that $\vec{z}=(z_1,z_2,z_3,0,0,0,0)$ with $z_3\ge 1$, we have $i_{r-1}=3z-3$ and $i_r=3z-1$ in \eqref{eq:mzv}. We define two $2^{z-1}r \times 2^{z-1}r$ matrices
{\tiny $$
Q_{\vec{z}}^{(2)} = \left[ A_{i_1}^{(2^{z-1})} \quad A_{i_2}^{(2^{z-1})} \quad \dots \quad A_{i_{r-2}}^{(2^{z-1})} \quad I_{2^{z-1}} 
\otimes L_{6z-6}  \quad I_{2^{z-1}} 
\otimes L_{6z-2} \right] , $$
$$
Q_{\vec{z}}^{(3)} = \left[ A_{i_1}^{(2^{z-1})} \quad A_{i_2}^{(2^{z-1})} \quad \dots \quad A_{i_{r-2}}^{(2^{z-1})} \quad I_{2^{z-1}} 
\otimes L_{6z-5}  \quad I_{2^{z-1}} 
\otimes L_{6z-1} \right] .
$$}%
Using the method in the proof of Lemma~\ref{lm:idt}, we can show that both $Q_{\vec{z}}^{(2)}$ and $Q_{\vec{z}}^{(3)}$ are invertible.
We further define two vectors
\begin{align*}
\vec{y}^{(2)}=(&y_{0,0}^{(2)},y_{0,1}^{(2)}\cdots ,y_{0,2^{z-1}-1}^{(2)},\\&y_{1,0}^{(2)},y_{1,1}^{(2)},\cdots ,y_{1,2^{z-1}-1}^{(2)},\\&\cdots,\\&y_{r-1,0}^{(2)},y_{r-1,1}^{(2)},\cdots,y_{r-1,2^{z-1}-1}^{(2)})^T , \\~\\
\vec{y}^{(3)}=(&y_{0,0}^{(3)},y_{0,1}^{(3)}\cdots ,y_{0,2^{z-1}-1}^{(3)},\\&y_{1,0}^{(3)},y_{1,1}^{(3)},\cdots ,y_{1,2^{z-1}-1}^{(3)},\\&\cdots,\\&y_{r-1,0}^{(3)},y_{r-1,1}^{(3)},\cdots,y_{r-1,2^{z-1}-1}^{(3)})^T  ,
\end{align*}
where $y_{i,a}^{(2)}=y_{i,a}$ and $y_{i,a}^{(3)}=y_{i,a+2^{z-1}}$ for $0\le i\le r-1$ and $0\le a\le 2^{z-1}-1$. The last $2^{z-1}$ block rows of $M_{\vec{z}}~\vec{y}=\mathbf{0}$ give us $Q_{\vec{z}}^{(3)} \vec{y}^{(3)}=\mathbf{0}$. Since $Q_{\vec{z}}^{(3)}$ is invertible, we have $\vec{y}^{(3)}=\mathbf{0}$. Taking this back into $M_{\vec{z}}~\vec{y}=\mathbf{0}$, the first $2^{z-1}$ block rows of $M_{\vec{z}}~\vec{y}=\mathbf{0}$ become $Q_{\vec{z}}^{(2)} \vec{y}^{(2)}=\mathbf{0}$. Since $Q_{\vec{z}}^{(2)}$ is invertible, we conclude that $\vec{y}^{(2)}=\mathbf{0}$, so $\vec{y}$ is an all-zero vector. This completes the proof for Case 3.

\textbf{Case $4$:} $\vec{z}=(z_1,z_2,z_3,z_4,0,0,0)$. In this case, \eqref{eq:wsmr} implies that $r=3z_1+2z_2+2z_3+2z_4$ and $z=z_1+z_2+z_3+z_4$.
We prove the invertibility of $M_{\vec{z}}$ by induction on $z_4$. The base case $z_4=0$ was already proved in Case 3. For the inductive step, we assume that the conclusion holds for $(z_1,z_2,z_3,z_4-1,0,0,0)$, and we prove the conclusion for $(z_1,z_2,z_3,z_4,0,0,0)$.

Our method is still to show that $\vec{y}=\mathbf{0}$ is the only solution to the equation $M_{\vec{z}}~\vec{y}=\mathbf{0}$, where the vector $\vec{y}$ is defined in \eqref{eq:xxx}.
Since we assume that $\vec{z}=(z_1,z_2,z_3,z_4,0,0,0)$ with $z_4\ge 1$, we have $i_{r-1}=3z-2$ and $i_r=3z-1$ in \eqref{eq:mzv}. We define two $2^{z-1}r \times 2^{z-1}r$ matrices
{\tiny \begin{align*}
Q_{\vec{z}}^{(4)} = \left[ A_{i_1}^{(2^{z-1})} \quad A_{i_2}^{(2^{z-1})} \quad \dots \quad A_{i_{r-2}}^{(2^{z-1})} \quad I_{2^{z-1}} 
\otimes L_{6z-4}  \quad I_{2^{z-1}} 
\otimes L_{6z-2} \right] , \\
Q_{\vec{z}}^{(5)} = \left[ A_{i_1}^{(2^{z-1})} \quad A_{i_2}^{(2^{z-1})} \quad \dots \quad A_{i_{r-2}}^{(2^{z-1})} \quad I_{2^{z-1}} 
\otimes L_{6z-3}  \quad I_{2^{z-1}} 
\otimes L_{6z-1} \right] .
\end{align*}}%
Using the method in the proof of Lemma~\ref{lm:idt}, we can show that both $Q_{\vec{z}}^{(4)}$ and $Q_{\vec{z}}^{(5)}$ are invertible. Recall the definitions of $\vec{y}^{(2)}$ and $\vec{y}^{(3)}$ in Case 3.
The first $2^{z-1}$ block rows of $M_{\vec{z}}~\vec{y}=\mathbf{0}$ give us $Q_{\vec{z}}^{(4)} \vec{y}^{(2)}=\mathbf{0}$. Since $Q_{\vec{z}}^{(4)}$ is invertible, we have $\vec{y}^{(2)}=\mathbf{0}$. Taking this back into $M_{\vec{z}}~\vec{y}=\mathbf{0}$, the last $2^{z-1}$ block rows of $M_{\vec{z}}~\vec{y}=\mathbf{0}$ become $Q_{\vec{z}}^{(5)} \vec{y}^{(3)}=\mathbf{0}$. Since $Q_{\vec{z}}^{(5)}$ is invertible, we conclude that $\vec{y}^{(3)}=\mathbf{0}$, so $\vec{y}$ is an all-zero vector. This completes the proof for Case 4.

\textbf{Case $5$:} $\vec{z}=(z_1,z_2,z_3,z_4,z_5,0,0)$. The proof is the same as Case 3.

\textbf{Case $6$:} $\vec{z}=(z_1,z_2,z_3,z_4,z_5,z_6,0)$. The proof is the same as Case 4.

\textbf{Case $7$:} $\vec{z}=(z_1,z_2,z_3,z_4,z_5,z_6,z_7)$. The proof is the same as Case 3 and Case 4.

\section{Optimal repair bandwidth for single node failure}\label{repair}
According to Appendix, no matter which node fails, we can always convert it to the first group, so we only need to prove that the nodes in the first group can be repaired. Next, We will use three cases to illustrate the repair procedure.

{\bf First case: How to repair $C_0$.} Note that the parity check equations in \eqref{eq:pc} can be written in the matrix form 
\begin{equation} \label{eq:repair}
A_0 C_0+A_1 C_1+A_2 C_2+\dots+A_{n-1} C_{n-1}=\mathbf{0} ,
\end{equation}
where each $C_i$ is a column vector of length $\ell$. Each block row in the matrices $A_0,\dots,A_{n-1}$ corresponds to a set of $r$ parity check equations because the length of each $L_i$ is $r$. Since there are $\ell$ block rows in each matrix $A_i$, we have $\ell$ sets of parity check equations in total. The repair of $C_0$ only involves $\ell/2$ out of these $\ell$ sets of parity check equations. More precisely, among the $\ell$ block rows in each matrix $A_i$, we only need to look at the block rows whose indices lie in the set $\{0,2,4,\cdots,2(\ell/2-1)\}$. 
These $\ell/2$ block rows of parity check equations can again be organized in the matrix form
\begin{equation} \label{eq:repair0}
\widetilde{A}_0 \widetilde{C}_0 + \widehat{A}_0 \widehat{C}_0
+ \sum_{i=1}^{n-1} \overline{A}_i \overline{C}_i = \mathbf{0} ,
\end{equation}
where $\widetilde{A}_0, \widehat{A}_0, \overline{A}_i, 1\le i\le n-1$ are all $\ell/2\times \ell/2$ matrices, and $\widetilde{C}_0,\widehat{C}_0,\overline{C}_i, 1\le i\le n-1$ are all column vectors of length $\ell/2$. More specifically, the matrices in \eqref{eq:repair0} are for every $0\le a,b\le \ell/2-1$,
\begin{equation*}
\begin{aligned}
\widetilde{A}_{0}(a,b) &=\left\{
\begin{array}{ll}
  L_0   & \mbox{if~} a=b, \\
  \mathbf{0} & \mbox{otherwise},
\end{array}
\right.  \\
\widehat{A}_{0}(a,b) &=\left\{
\begin{array}{ll}
  -L_1   & \mbox{if~} a=b, \\
  \mathbf{0} & \mbox{otherwise},
\end{array}
\right.  \\
\overline{A}_{1}(a,b) &=\left\{
\begin{array}{ll}
  L_2   & \mbox{if~} a=b, \\
  \mathbf{0} & \mbox{otherwise},
\end{array}
\right.  \\
\overline{A}_{2}(a,b) &=\left\{
\begin{array}{ll}
  L_4   & \mbox{if~} a=b, \\
  \mathbf{0} & \mbox{otherwise},
\end{array}
\right.  
\end{aligned}
\end{equation*}
and for every $1\le i\le n/3-1$,
\begin{equation*}
\begin{aligned}
\overline{A}_{3i}(a,b) &=\left\{
\begin{array}{ll}
  L_{6i+a_{i-1}}   & \mbox{if~} a=b, \\
  L_{6i}-L_{6i+1}   & \mbox{if~} a_{i-1}=0, b_{i-1}=1, \\&\mbox{and~} a_j=b_j \,\forall j\neq i-1, \\
  \mathbf{0} & \mbox{otherwise},
\end{array}
\right.  \\
\overline{A}_{3i+1}(a,b) &=\left\{
\begin{array}{ll}
  L_{6i+2+a_{i-1}}   & \mbox{if~} a=b, \\
  L_{6i+3}-L_{6i+2}   & \mbox{if~} a_{i-1}=1, b_{i-1}=0, \\&\mbox{and~} a_j=b_j \,\forall j\neq i-1, \\
  \mathbf{0} & \mbox{otherwise},
\end{array}
\right. \\
\overline{A}_{3i+2}(a,b) &=\left\{
\begin{array}{ll}
  L_{6i+4+a_{i-1}}   & \mbox{if~} a=b, \\
  \mathbf{0} & \mbox{otherwise},
\end{array}
\right.
\end{aligned}
\end{equation*}
where $\mathbf{0}$ in the last line denotes the all-zero column vector of length $r$, 
and the column vectors in \eqref{eq:repair0} are for every $0\le a\le \ell/2-1$,
\begin{equation*}
\begin{aligned}
\widetilde{C}_{0}(a) &= C_0(2a)+C_0(2a+1),\\
\widehat{C}_{0}(a) &= C_0(2a+1), 
\end{aligned}
\end{equation*}
and for every $0\le i\le n-1$ and $i\neq 0$,
$$\overline{C}_i(a) =C_i(2a).$$
Here we make an important observation: The matrices $\overline{A}_3,\overline{A}_4,\dots,\overline{A}_{n-1}$ are precisely the $n-3$ parity check matrices that would appear in our MSR code construction with code length $n-3$ and subpacketization $\ell/2=2^{n/3-1}=2^{(n-3)/3}$. The other $4$ matrices $\widetilde{A}_0,\widehat{A}_0,\overline{A}_1,\overline{A}_2$ are all block-diagonal matrices, and the diagonal entries are the same within each matrix. Moreover, the $\lambda_i$'s (or equivalently $L_i$'s) that appear in $\widetilde{A}_0,\widehat{A}_0,\overline{A}_1,\overline{A}_2$ do not intersect with the $\lambda_i$'s that appear in $\overline{A}_3,\overline{A}_4,\dots,\overline{A}_{n-1}$.
The method we used to prove the MDS property of our MSR code construction in Section~\ref{MDS} can be easily generalized to show that \eqref{eq:repair0} also defines an MDS array code $(\widetilde{C}_0,\widehat{C}_0,\overline{C}_1,\overline{C}_2,\dots,\overline{C}_{n-1})$ with code length $n+1$ and code dimension $k+1$. Therefore, $\widetilde{C}_0$ and $\widehat{C}_0$ can be recovered from any $k+1$ vectors in the set $\{\overline{C}_1,\overline{C}_2,\dots,\overline{C}_{n-1}\}$. Once we know the values of $\widetilde{C}_0$ and $\widehat{C}_0$, we are able to recover all the coordinates of $C_0$.

{\bf Second case: How to repair $C_1$.} It is similar to the first case, but here we need to look at the block rows whose indices lie in the set $\{1,3,5,\cdots,2(\ell/2-1)+1\}$. 
These $\ell/2$ block rows of parity check equations can be organized in the matrix form
\begin{equation} \label{eq:repair1}
\overline{A}_0 \overline{C}_0 + \widetilde{A}_1 \widetilde{C}_1 + \widehat{A}_1 \widehat{C}_1
+ \sum_{i=2}^{n-1} \overline{A}_i \overline{C}_i = \mathbf{0} ,
\end{equation}
where $\overline{A}_0, \widetilde{A}_1, \widehat{A}_1, \overline{A}_i, 2\le i\le n-1$ are all $\ell/2\times \ell/2$ matrices, and $\overline{C}_0, \widetilde{C}_1, \widehat{C}_1, \overline{C}_i, 2\le i\le n-1$ are all column vectors of length $\ell/2$. More specifically, the matrices in \eqref{eq:repair1} are for every $0\le a,b\le \ell/2-1$, 
\begin{equation*}
\begin{aligned}
\overline{A}_{0}(a,b) &=\left\{
\begin{array}{ll}
  L_1   & \mbox{if~} a=b, \\
  \mathbf{0} & \mbox{otherwise},
\end{array}
\right.  \\
\widetilde{A}_{1}(a,b) &=\left\{
\begin{array}{ll}
  -L_2   & \mbox{if~} a=b, \\
  \mathbf{0} & \mbox{otherwise},
\end{array}
\right.  \\
\widehat{A}_{1}(a,b) &=\left\{
\begin{array}{ll}
  L_3   & \mbox{if~} a=b, \\
  \mathbf{0} & \mbox{otherwise},
\end{array}
\right.  \\
\overline{A}_{2}(a,b) &=\left\{
\begin{array}{ll}
  L_5   & \mbox{if~} a=b, \\
  \mathbf{0} & \mbox{otherwise},
\end{array}
\right.  
\end{aligned}
\end{equation*}
and for every $1\le i\le n/3-1$,
\begin{equation*}
\begin{aligned}
\overline{A}_{3i}(a,b) &=\left\{
\begin{array}{ll}
  L_{6i+a_{i-1}}   & \mbox{if~} a=b, \\
  L_{6i}-L_{6i+1}   & \mbox{if~} a_{i-1}=0, b_{i-1}=1, \\&\mbox{and~} a_j=b_j \,\forall j\neq i-1, \\
  \mathbf{0} & \mbox{otherwise},
\end{array}
\right.  \\
\overline{A}_{3i+1}(a,b) &=\left\{
\begin{array}{ll}
  L_{6i+2+a_{i-1}}   & \mbox{if~} a=b, \\
  L_{6i+3}-L_{6i+2}   & \mbox{if~} a_{i-1}=1, b_{i-1}=0, \\&\mbox{and~} a_j=b_j \,\forall j\neq i-1, \\
  \mathbf{0} & \mbox{otherwise},
\end{array}
\right. \\
\overline{A}_{3i+2}(a,b) &=\left\{
\begin{array}{ll}
  L_{6i+4+a_{i-1}}   & \mbox{if~} a=b, \\
  \mathbf{0} & \mbox{otherwise},
\end{array}
\right.
\end{aligned}
\end{equation*}
where $\mathbf{0}$ in the last line denotes the all-zero column vector of length $r$, 
and the column vectors in \eqref{eq:repair1} are for every $0\le a\le \ell/2-1$,
\begin{equation*}
\begin{aligned}
\widetilde{C}_{1}(a) &= C_1(2a), \\
\widehat{C}_{1}(a) &= C_1(2a)+C_1(2a+1),
\end{aligned}
\end{equation*}
and for every $0\le i\le n-1$ and $i\neq 1$,
$$\overline{C}_i(a) =C_i(2a+1).$$
Here we still make an important observation: The matrices $\overline{A}_3,\overline{A}_4,\dots,\overline{A}_{n-1}$ are precisely the $n-3$ parity check matrices that would appear in our MSR code construction with code length $n-3$ and subpacketization $\ell/2=2^{n/3-1}=2^{(n-3)/3}$. The other $4$ matrices $\overline{A}_0,\widetilde{A}_1,\widehat{A}_1,\overline{A}_2$ are all block-diagonal matrices, and the diagonal entries are the same within each matrix. Moreover, the $\lambda_i$'s (or equivalently $L_i$'s) that appear in $\overline{A}_0,\widetilde{A}_1,\widehat{A}_1,\overline{A}_2$ do not intersect with the $\lambda_i$'s that appear in $\overline{A}_3,\overline{A}_4,\dots,\overline{A}_{n-1}$.
The method we used to prove the MDS property of our MSR code construction in Section~\ref{MDS} can be easily generalized to show that \eqref{eq:repair1} also defines an MDS array code $(\overline{C}_0,\widetilde{C}_1,\widehat{C}_1,\overline{C}_2,\dots,\overline{C}_{n-1})$ with code length $n+1$ and code dimension $k+1$. Therefore, $\widetilde{C}_1$ and $\widehat{C}_1$ can be recovered from any $k+1$ vectors in the set $\{\overline{C}_0,\overline{C}_2,\dots,\overline{C}_{n-1}\}$. Once we know the values of $\widetilde{C}_1$ and $\widehat{C}_1$, we are able to recover all the coordinates of $C_1$.

{\bf Third case: How to repair $C_2$.}
In order to repair $C_2$, we sum up $\ell/2$ pairs of block rows of each matrix $A_i$ in equation \eqref{eq:repair}. Specifically, we sum up $(2i)$th block row and the $(2i+1)$th block row for every $i\in \{0,1,\cdots,\ell/2-1\}$. In this way, we obtain $\ell/2$ block rows of parity check equations from the original $\ell$ block rows of parity check equations in \eqref{eq:repair}. These $\ell/2$ block rows of parity check equations can be written in the matrix form
\begin{equation} \label{eq:repair2}
\overline{A}_0 \overline{C}_0 + \overline{A}_1 \overline{C}_1 + \widetilde{A}_2 \widetilde{C}_2 + \widehat{A}_2 \widehat{C}_2
+ \sum_{i=3}^{n-1} \overline{A}_i \overline{C}_i = \mathbf{0} ,
\end{equation}
where $\overline{A}_0, \overline{A}_1, \widetilde{A}_2, \widehat{A}_2, \overline{A}_i, 3\le i\le n-1$ are all $\ell/2\times \ell/2$ matrices, and $\overline{C}_0, \overline{C}_1, \widetilde{C}_2, \widehat{C}_2, \overline{C}_i, 3\le i\le n-1$ are all column vectors of length $\ell/2$. More specifically, the matrices in \eqref{eq:repair2} are for every $0\le a,b\le \ell/2-1$, 
\begin{equation*}
\begin{aligned}
\overline{A}_{0}(a,b) &=\left\{
\begin{array}{ll}
  L_0   & \mbox{if~} a=b, \\
  \mathbf{0} & \mbox{otherwise},
\end{array}
\right.  \\
\overline{A}_{1}(a,b) &=\left\{
\begin{array}{ll}
  L_3   & \mbox{if~} a=b, \\
  \mathbf{0} & \mbox{otherwise},
\end{array}
\right.  \\
\widetilde{A}_{2}(a,b) &=\left\{
\begin{array}{ll}
  L_4   & \mbox{if~} a=b, \\
  \mathbf{0} & \mbox{otherwise},
\end{array}
\right.  \\
\widehat{A}_{2}(a,b) &=\left\{
\begin{array}{ll}
  L_5   & \mbox{if~} a=b, \\
  \mathbf{0} & \mbox{otherwise},
\end{array}
\right.  
\end{aligned}
\end{equation*}
and for every $1\le i\le n/3-1$,
\begin{equation*}
\begin{aligned}
\overline{A}_{3i}(a,b) &=\left\{
\begin{array}{ll}
  L_{6i+a_{i-1}}   & \mbox{if~} a=b, \\
  L_{6i}-L_{6i+1}   & \mbox{if~} a_{i-1}=0, b_{i-1}=1, \\&\mbox{and~} a_j=b_j \,\forall j\neq i-1, \\
  \mathbf{0} & \mbox{otherwise},
\end{array}
\right.  \\
\overline{A}_{3i+1}(a,b) &=\left\{
\begin{array}{ll}
  L_{6i+2+a_{i-1}}   & \mbox{if~} a=b, \\
  L_{6i+3}-L_{6i+2}   & \mbox{if~} a_{i-1}=1, b_{i-1}=0, \\&\mbox{and~} a_j=b_j \,\forall j\neq i-1, \\
  \mathbf{0} & \mbox{otherwise},
\end{array}
\right. \\
\overline{A}_{3i+2}(a,b) &=\left\{
\begin{array}{ll}
  L_{6i+4+a_{i-1}}   & \mbox{if~} a=b, \\
  \mathbf{0} & \mbox{otherwise},
\end{array}
\right.
\end{aligned}
\end{equation*}
where $\mathbf{0}$ in the last line denotes the all-zero column vector of length $r$, 
and the column vectors in \eqref{eq:repair2} are for every $0\le a\le \ell/2-1$,
\begin{equation*}
\begin{aligned}
\widetilde{C}_{2}(a) &= C_2(2a),\\
\widehat{C}_{2}(a) &= C_2(2a+1), 
\end{aligned}
\end{equation*}
and for every $0\le i\le n-1$ and $i\neq 2$,
$$\overline{C}_i(a) =C_i(2a)+C_i(2a+1).$$
The matrices $\overline{A}_3,\overline{A}_4,\dots,\overline{A}_{n-1}$ are precisely the $n-3$ parity check matrices that would appear in our MSR code construction with code length $n-3$ and subpacketization $\ell/2=2^{n/3-1}=2^{(n-3)/3}$. The other $4$ matrices $\overline{A}_0,\overline{A}_1,\widetilde{A}_2,\widehat{A}_2$ are all block-diagonal matrices, and the diagonal entries are the same within each matrix. Moreover, the $\lambda_i$'s (or equivalently $L_i$'s) that appear in $\overline{A}_0,\overline{A}_1,\widetilde{A}_2,\widehat{A}_2$ do not intersect with the $\lambda_i$'s that appear in $\overline{A}_3,\overline{A}_4,\dots,\overline{A}_{n-1}$.
The method we used to prove the MDS property of our MSR code construction in Section~\ref{MDS} can be easily generalized to show that \eqref{eq:repair2} also defines an MDS array code $(\overline{C}_0,\overline{C}_1,\widetilde{C}_2,\widehat{C}_2,\overline{C}_3,\dots,\overline{C}_{n-1})$ with code length $n+1$ and code dimension $k+1$. Therefore, $\widetilde{C}_2$ and $\widehat{C}_2$ can be recovered from any $k+1$ vectors in the set $\{\overline{C}_0,\overline{C}_1,\overline{C}_3,\dots,\overline{C}_{n-1}\}$. Once we know the values of $\widetilde{C}_2$ and $\widehat{C}_2$, we are able to recover all the coordinates of $C_2$.

In summary, the matrices $A_i$ in the parity check matrix \eqref{eq:repair} have $\ell$ block rows. Yet the repair of a single failed node only relies on $\ell/2$ block rows of parity check equations. More precisely, for every $0\le i\le n/3-1$, the repair of $C_{3i}$ only involves the block rows whose indices lie in the set $\{a:0\le a\le \ell-1,a_i=0\}$; the repair of $C_{3i+1}$ only involves the block rows whose indices lie in the set $\{a:0\le a\le \ell-1,a_i=1\}$. The repair of $C_{3i+2}$ is slightly more complicated. In this case, we divide the $\ell$ block rows into $\ell/2$ pairs $\{(a,a+2^i):0\le a\le \ell-1,a_i=0\}$ and we sum up the two block rows in each pair. In this way, we obtain $\ell/2$ block rows of parity check equations, and the repair of $C_{3i+2}$ only involves these equations.




\appendix
\section{} \label{permutation}
Recall that in Section \ref{construction} we write the $n/3$-digit binary expansion of $a$ as $a=(a_{n/3-1},a_{n/3-2},\dots,a_0)$ 
for $\ell=2^{n/3}$ and $a\in\{0,1,\dots,\ell-1\}$.
The $(n,k)$ array code we defined in Section \ref{construction} is
{\footnotesize $$\mathcal{C}=\{(C_0,C_1,\ldots,C_{n-1}): A_0C_0+A_1C_1+\cdots+A_{n-1}C_{n-1}=\mathbf{0}\}.$$}%
The $n$ nodes $(C_0,C_1,\ldots,C_{n-1})$ are divided into $n/3$ groups of size $3$.
Next we show that we can exchange nodes in group $i$ with nodes in group $j$ in the following way without destructing the MDS property.

For the index of two groups $0\leq i<j\leq n/3-1$, we define a function $P_{i\leftrightarrow j}:\{0,1,\dots,\ell-1\}\rightarrow \{0,1,\dots,\ell-1\}$ by
{\small\begin{align*}
&P_{i\leftrightarrow j}(a)\\
=&P_{i\leftrightarrow j}((a_{n/3-1},\cdots,a_{j+1},a_j,a_{j-1},\cdots,a_{i+1},a_i,a_{i-1},\dots,a_0)) \\
=&(a_{n/3-1},\cdots,a_{j+1},a_i,a_{j-1},\cdots,a_{i+1},a_j,a_{i-1},\dots,a_0) \\
=&\overline{a}\\
=&(\overline{a}_{n/3-1},\cdots,\overline{a}_{j+1},\overline{a}_j,\overline{a}_{j-1},\cdots,\overline{a}_{i+1},\overline{a}_i,\overline{a}_{i-1},\dots,\overline{a}_0). 
\end{align*}}%
Now let us define another $(n,k)$ array code by
{\small $$\overline{\mathcal{C}}=\{(C_0,C_1,\ldots,C_{n-1}): \overline{A}_0C_0+\overline{A}_1C_1+\cdots+\overline{A}_{n-1}C_{n-1}=\mathbf{0}\},$$}%
where
\begin{align*}
    \overline{A}_k(a,b)&=A_k(P_{i\leftrightarrow j}(a),P_{i\leftrightarrow j}(b))=A_k(\overline{a},\overline{b}),
\end{align*}
 for $a,b\in\{0,1,\dots,\ell-1\}$ and $k\in\{0,1,\cdots,n-1\}$.
We can readily check that if $(C_0,C_1,\ldots,C_{n-1})\in\mathcal{C}$,
then $(\overline{C}_0,\overline{C}_1,\ldots,\overline{C}_{n-1})\in\overline{\mathcal{C}}$ where
$\overline{C}_k(a)=C_k(P_{i\leftrightarrow j}(a))=C_k(\overline{a})$ for $0\leq k\leq n-1$ and $0\leq a\leq \ell-1$,
and vice versa. Hence $\mathcal{C}$ and $\overline{\mathcal{C}}$ are permutation equivalent.
\if{false}
Now we use the function $P_{i\leftrightarrow j}$ to convert it into the following form 
$$\overline{A}_0\overline{C}_0+\overline{A}_1\overline{C}_1+\cdots+\overline{A}_{n-1}\overline{C}_{n-1}=0,$$
where 
\begin{align*}
    \overline{A}_k(a,b)&=A_k(P_{i\leftrightarrow j}(a),P_{i\leftrightarrow j}(b))=A_k(\overline{a},\overline{b}),\\
    \overline{C}_k(a)&=C_k(P_{i\leftrightarrow j}(a))=C_k(\overline{a}).
\end{align*}
Write in groups as follows: 
\begin{align}\label{eq:permu2}
&\overline{A}_0\overline{C}_0+\overline{A}_1\overline{C}_1+\overline{A}_2\overline{C}_2+
\cdots \nonumber \\
+&\overline{A}_{3i}\overline{C}_{3i}+\overline{A}_{3i+1}\overline{C}_{3i+1}+\overline{A}_{3i+2}\overline{C}_{3i+2}+\cdots+\overline{A}_{3j}\overline{C}_{3j}+\overline{A}_{3j+1}\overline{C}_{3j+1}+\overline{A}_{3j+2}\overline{C}_{3j+2}+\cdots \nonumber \\
+&\overline{A}_{3\times(n/3-1)}\overline{C}_{3\times(n/3-1)}+
\overline{A}_{3\times(n/3-1)+1}\overline{C}_{3\times(n/3-1)+1}+\overline{A}_{3\times(n/3-1)+2}\overline{C}_{3\times(n/3-1)+2}=0.
\end{align}
\fi

By definition, for the group $s\in\{0,1,\cdots,n/3-1\}$ and $s\neq i,j$,  we have $\overline{a}_s=a_s$ and
{\footnotesize \begin{align*}
\overline{A}_{3s}(a,b)=A_{3s}(\overline{a},\overline{b}) &=\left\{
\begin{array}{ll}
  L_{6s+\overline{a}_s}   & \mbox{if~} \overline{a}=\overline{b},  \\
  L_{6s}-L_{6s+1}   & \mbox{if~} \overline{a}_s=0, \overline{b}_s=1,  \\&\mbox{and~} \overline{a}_t=\overline{b}_t,\, \forall t\neq s,  \\
  \mathbf{0} & \mbox{otherwise},
\end{array} 
\right. \\
&=\left\{
\begin{array}{ll}
  L_{6s+a_s}   & \mbox{if~} a=b,  \\
  L_{6s}-L_{6s+1}   & \mbox{if~} a_s=0, b_s=1, \\&\mbox{and~} a_t=b_t,\, \forall t\neq s,  \\
  \mathbf{0} & \mbox{otherwise},
\end{array}
\right.  \\
&=A_{3s}(a,b),\\
\overline{A}_{3s+1}(a,b)=A_{3s+1}(\overline{a},\overline{b}) 
&=A_{3s+1}(a,b),\\
\overline{A}_{3s+2}(a,b)=A_{3s+2}(\overline{a},\overline{b}) 
&=A_{3s+2}(a,b).
\end{align*}}%

Note that $\overline{a}_i=a_j$, so for $s=i$, we have
{\footnotesize \begin{align*}
\overline{A}_{3i}(a,b)=A_{3i}(\overline{a},\overline{b}) &=\left\{
\begin{array}{ll}
  L_{6i+\overline{a}_i}   & \mbox{if~} \overline{a}=\overline{b}, \\
  L_{6i}-L_{6i+1}   & \mbox{if~} \overline{a}_i=0, \overline{b}_i=1, \\&\mbox{and~} \overline{a}_t=\overline{b}_t\, \forall t\neq i, \\
  \mathbf{0} & \mbox{otherwise},
\end{array}
\right. \\
&=\left\{
\begin{array}{ll}
  L_{6i+a_j}   & \mbox{if~} a=b, \\
  L_{6i}-L_{6i+1}   & \mbox{if~} a_j=0, b_j=1, \\&\mbox{and~} a_t=b_t\, \forall t\neq j, \\
  \mathbf{0} & \mbox{otherwise},
\end{array} 
\right. \\
\overline{A}_{3i+1}(a,b)=A_{3i+1}(\overline{a},\overline{b}) 
&=\left\{
\begin{array}{ll}
  L_{6i+2+a_j}   & \mbox{if~} a=b \\
  L_{6i+3}-L_{6i+2}   & \mbox{if~} a_j=1, b_j=0, \\&\mbox{and~} a_t=b_t \,\forall t\neq j \\
  \mathbf{0} & \mbox{otherwise},
\end{array} 
\right. \\
\overline{A}_{3i+2}(a,b)=A_{3i+2}(\overline{a},\overline{b}) 
&=\left\{
\begin{array}{ll}
  L_{6i+4+a_j}   & \mbox{if~} a=b, \\
  \mathbf{0} & \mbox{otherwise}.
\end{array} 
\right.
\end{align*}}%

Similarly we have $\overline{a}_j=a_i$ and
{\footnotesize \begin{align*}
\overline{A}_{3j}(a,b)=A_{3j}(\overline{a},\overline{b}) 
&=\left\{
\begin{array}{ll}
  L_{6j+a_i}   & \mbox{if~} a=b,  \\
  L_{6j}-L_{6j+1}   & \mbox{if~} a_i=0, b_i=1, \\&\mbox{and~} a_t=b_t\, \forall t\neq i,  \\
  \mathbf{0} & \mbox{otherwise},
\end{array} 
\right. \\
\overline{A}_{3j+1}(a,b)=A_{3j+1}(\overline{a},\overline{b}) 
&=\left\{
\begin{array}{ll}
  L_{6j+2+a_i}   & \mbox{if~} a=b \\
  L_{6j+3}-L_{6j+2}   & \mbox{if~} a_i=1, b_i=0, \\&\mbox{and~} a_t=b_t\, \forall t\neq i \\
  \mathbf{0} & \mbox{otherwise},
\end{array} 
\right. \\
\overline{A}_{3j+2}(a,b)=A_{3j+2}(\overline{a},\overline{b}) 
&=\left\{
\begin{array}{ll}
  L_{6j+4+a_i}   & \mbox{if~} a=b, \\
  \mathbf{0} & \mbox{otherwise}.
\end{array} 
\right.
\end{align*}}%

Now we define 
{\small $$\widehat{\mathcal{C}}=\{(C_0,C_1,\ldots,C_{n-1}): \widehat{A}_0C_0+\widehat{A}_1C_1+\cdots+\widehat{A}_{n-1}C_{n-1}=\mathbf{0}\},$$}%
where 
$$\widehat{A}_{3s}=\overline{A}_{3s},\ \widehat{A}_{3s+1}=\overline{A}_{3s+1},\ \widehat{A}_{3s+2}=\overline{A}_{3s+2}, \text{ for }s\neq i,j,$$
and 
$$\widehat{A}_{3i}=\overline{A}_{3j},\ \widehat{A}_{3i+1}=\overline{A}_{3j+1},\ \widehat{A}_{3i+2}=\overline{A}_{3j+2},$$
$$\widehat{A}_{3j}=\overline{A}_{3i},\ \widehat{A}_{3j+1}=\overline{A}_{3i+1},\ \widehat{A}_{3j+2}=\overline{A}_{3i+2}.$$

We can easily check that
{\tiny \begin{align*}
    \widehat{\mathcal{C}}=\{&(C_0,C_1,\ldots,C_{3j},C_{3j+1},C_{3j+2},\ldots,C_{3i},C_{3i+1},C_{3i+2},\ldots,C_{n-1}):\\
    &(C_0,C_1,\ldots,C_{3i},C_{3i+1},C_{3i+2},\ldots,C_{3j},C_{3j+1},C_{3j+2},\ldots,C_{n-1})\in\overline{\mathcal{C}}\},
\end{align*}}%
which says that $\widehat{\mathcal{C}}$ is permutation equivalent to $\overline{\mathcal{C}}$.

From the above, we can see that $\widehat{A}_{3j}(=\overline{A}_{3i})$, $\widehat{A}_{3j+1}(=\overline{A}_{3i+1})$ and $\widehat{A}_{3j+2}(=\overline{A}_{3i+2})$ are defined in 
the same way as $A_{3j}$, $A_{3j+1}$ and $A_{3j+2}$, but using $\lambda_{6i},\cdots,\lambda_{6i+5}$ instead of $\lambda_{6j},\cdots,\lambda_{6j+5}$.
Similarly $\widehat{A}_{3i}(=\overline{A}_{3j})$, $\widehat{A}_{3i+1}(=\overline{A}_{3j+1})$ and $\widehat{A}_{3i+2}(=\overline{A}_{3j+2})$ are defined in 
the same way  as $A_{3i}$, $A_{3i+1}$ and $A_{3i+2}$, 
but using $\lambda_{6j},\cdots,\lambda_{6j+5}$ instead of $\lambda_{6i},\cdots,\lambda_{6i+5}$.
Therefore any repair scheme works for $\mathcal{C}$
will also work for $\widehat{\mathcal{C}}$, and vice versa. In summary, we can exchange group $i$ with group $j$ as above without destructing the MDS property.
At last we point out any erasure pattern can be converted to the canonical erasure pattern using the above operation recursively. 

\end{document}